\newtheorem{theorem}{Theorem}[section]
\newtheorem*{theorem*}{Theorem}
\newtheorem*{lemma*}{Lemma}
\newtheorem{claim}[theorem]{Claim}
\newtheorem{corr}[theorem]{Corollary}
\newtheorem{example}[theorem]{Example}
\newtheorem{aspt}{Assumption}[section]
\newtheorem{defn}{Definition}[section]
\newcommand{\footremember}[2]{%
    \footnote{#2}
    \newcounter{#1}
    \setcounter{#1}{\value{footnote}}%
}
\title{On Rider Strategic Behavior in Ride-Sharing Platforms}
\author{Jay Mulay\footremember{alley}{Georgia Institute of Technology. Email: jay.mulay@gatech.edu} 
\and Diptangshu Sen\footremember{trailer}{Georgia Institute of Technology. Email: dsen30@gatech.edu}
\and Juba Ziani\footremember{somethingelse}{Georgia Institute of Technology. Email: jziani3@gatech.edu}
}
\date{\today}
\begin{document}

\maketitle

\begin{abstract}
Over the past decade, ride-sharing services have become increasingly important, with U.S. market leaders such as Uber and Lyft expanding to over 900 cities worldwide and facilitating billions of rides annually. This rise reflects their ability to meet users' convenience, efficiency, and affordability needs. However, in busy areas and surge zones, the benefits of these platforms can diminish, prompting riders to relocate to cheaper, more convenient locations or seek alternative transportation.

While much research has focused on the strategic behavior of drivers, the strategic actions of riders, especially when it comes to riders \emph{walking} outside of surge zones, remain under-explored. This paper examines the impact of rider-side strategic behavior on surge dynamics. We investigate how riders' actions influence market dynamics, including supply, demand, and pricing. We show significant impacts, such as spillover effects where demand increases in areas adjacent to surge zones and prices surge in nearby areas. Our theoretical insights and experimental results highlight that rider strategic behavior helps redistribute demand, reduce surge prices, and clear demand in a more balanced way across zones.
\end{abstract}

\section{Introduction}\label{sec:intro}
Over the past decade, ride-sharing services have grown exponentially in importance. U.S. market leaders such as Uber and Lyft have now expanded their operations to over 900 cities worldwide, facilitating billions of rides annually. The rise of ride-sharing platforms reflects how they have adeptly addressed users' needs for convenience, efficiency, and reliability. One of the major reasons for the popularity of ride-sharing platforms is their ability to provide users with rides from virtually anywhere, at any time, at a relatively affordable price. However, the benefits of ride-sharing platforms can diminish in busy areas and surge zones, incentivizing strategic behavior by riders who may relocate to cheaper and more convenient locations or seek alternate means of transportation entirely.

Much academic work on ride-sharing has recognized the prevalence of strategic behavior among drivers within these platforms. Drivers can strategically decide which locations to serve or when to be online to maximize their earnings during surges; they can turn off the application while transiting to avoid taking rides on their way to more profitable locations; they can selectively accept rides based on profit and convenience, and even cancel rides when needed. While there is a significant focus on driver behavior in the literature, related work has largely neglected rider-side behavior. In particular, riders often strategically relocate to avoid surge zones and to secure cheaper rides during major events such as sports games or concerts.

The goal of this paper is to study the impact of rider-side strategic behavior on surge dynamics. We aim to understand how such rider behavior influences market dynamics, particularly the evolution of demand inside and outside the surge zone. This paper highlights several significant impacts: spillover effects where demand becomes higher than normal and price surges in areas adjacent to surge zones. We provide precise theoretical insights and experimental results to elucidate and quantify these effects. Our analysis reveals that rider strategic behavior helps redistribute and smooth out the demand across different areas, thereby reducing the price gap across the surge boundary.


\noindent\textbf{Summary of Contributions:} The main contributions of our work are as follows:
\begin{itemize}
    \item We introduce a new model highlighting the phenomenon of strategic riders on ride-sharing platforms in Section~\ref{sec:model}. In our model, riders can be waiting in one of two zones: a surge and a non-surge zone. Riders faced with high prices during a surge can choose to walk out of the surge zone in search of cheaper rides. We capture, in our initial model, the rate of riders moving across the surge boundary through a simple functional form that depends on the difference in demands between the surge and the non-surge areas. We extend this model to a more realistic agent-based and game-theoretic model in Section~\ref{sec:agent_model}.
    \item In Section~\ref{sec:dynamics}, we analyze the dynamics of our initial theoretical model. We first characterize the shape and convergence properties of demand curves in both surge and non-surge zones (Theorems \ref{thm:Ds_conv} and \ref{thm:Dns_conv}). We also show the existence of two types of surges: i) a \emph{localized} type of surge where the surge demand does not accumulate outside the surge boundary (it remains localized as the name suggests) and ii) \emph{spill-over} type of surge where the surge demand spills over and accumulates in the non-surge zone causing the demand there to increase.
    \item Finally, in Section~\ref{sec:experiments}, we develop a much more comprehensive game-theoretic, agent-based simulation model of a ride-sharing platform where each rider is modeled as a rational agent. Agents decide to move to the non-surge zone if they get an increased utility from doing so, where this utility depends on prices in each zone and the cost of moving. This model also incorporates additional considerations to take the model one step closer to reality, including strategic riders that may prefer going to the more expensive surge zone, and stochasticity in arrival rates. 

    Our agent-based model serves two purposes. First, it shows that our simple initial, theoretical model that solely aimed to capture the impact of strategic rider behavior still provides a good approximation to the more complex and realistic agent-based model, highlighting its value as a simple and tractable alternative. Second, it highlights the role of strategic riders in surge dynamics, in particular quantifying the extent to which this strategic behavior helps smooth out the demand and prices across zones. 
\end{itemize}

\subsection{Related Work} 

There is a rich body of literature on challenges faced by ride-sharing platforms in different settings:~\cite{agatz2012optimization,furuhata2013ridesharing, kooti2017analyzing,chan2012ridesharing,jin2018ridesourcing} to only name a few.

A prominent line of work involves studying pricing mechanisms for ridesharing platforms and in particular, i) static versus dynamic pricing models~\cite{banerjee2015pricing, pandit2019pricing}, and ii) spatial, temporal, and spatio-temporal pricing strategies~\cite{bimpikis2019spatial, ma2022spatio}. Differential pricing strategies have also been explored, with an example being the work of \citet{zhao2022differential}. Apart from pricing, there is also a significant amount of work that has studied the general operations management of ride-hailing systems, including empty vehicle rerouting, dispatch, matching and joint optimization ~\cite {braverman2019empty,chen2020optimization,alonso2017demand,yan2020dynamic}. Recently, researchers have also considered how ride-sharing platforms can encourage riders in congested areas to walk to nearby locations for pickup as it has been shown to improve the service efficiency by reducing pickup times for drivers ~\cite{yao2021new,fielbaum2021demand}.  

Our work is closely related to the two lines of research. The first one studies \emph{surge} pricing, when rider demand vastly outdoes the supply of drivers. Surge pricing aims to better match supply with demand during periods of high demand and lower prices when demand decreases; for example,~\cite{castillo2017surge} show how low supply relative to demand can lead to market failures and ``wild goose chase'' phenomena where free drivers are thinly spread and sent to far away locations, hurting both riders' waiting times and drivers profits. They highlight the role of surge pricing as a solution to this problem. There have also been other works like ~\cite{castillo2023benefits} which study the overall welfare effects of surge pricing. Second, a recent line of work recognizes user incentives and strategic behavior in ride-sharing platforms, particularly driver-side. Drivers employ varied strategies to maximize profits and utility, including choosing which rides to accept, chasing surges, and strategically signing in and out to manipulate supply and prices, to only name a few. There has been increased focus on modeling and understanding this strategic driver behavior, such as \cite{ashkrof2020understanding}. \citet{garg2022driver} and \citet{cashore2022dynamic} study the design of incentive-compatible pricing mechanisms for drivers, while \citet{rheingans2019ridesharing} study pricing that accounts for drivers' location preferences. \citet{tripathy2022driver, yu2022price} focus on how coalitions of drivers can collude to manipulate the system. There have also been attempts to understand how such strategic behavior may affect the marketplace. For example, ~\cite{besbes2021surge} investigates the supply side's spatial response to surge pricing when drivers are strategic. They show that optimal surge pricing induces a spatial decomposition of the supply - in some locations supply is exactly matched with demand, while in other locations supply may either be deliberately over-congested or completely diverted away. Unlike the driver side, there has been limited effort to account for riders being strategic. ~\citet{chen2020pricing} is an exception in this regard, it studies pricing policies for ride-sharing platforms when both drivers and riders are strategic and forward-looking, particularly riders can monitor prices and decide when to accept a ride.   

The work which is perhaps most closely related to ours is the recent contribution of ~\citet{hu2022surge}. Similar to our work combining surge dynamics and rider strategic behavior,~\citet{hu2022surge} examine two-sided temporal dynamics and show that drivers respond more slowly to surge pricing compared to riders. They propose two types of equilibrium surge pricing policies and characterize the responses of both drivers and riders. In their paper, the main strategic aspect that is modelled for riders is that they can choose to wait for better prices. Our work differs significantly from theirs in the sense that we consider a \emph{different} form of strategic behavior on the rider-side where they can choose to walk away from the surge zone in search of cheaper rides. We also characterize the effect of this strategic behavior on the surge dynamics itself - something the previous paper does not consider.

\section{Model}\label{sec:model}
\begin{figure}[!ht]
      \centering
      \includegraphics[width=0.7\textwidth]{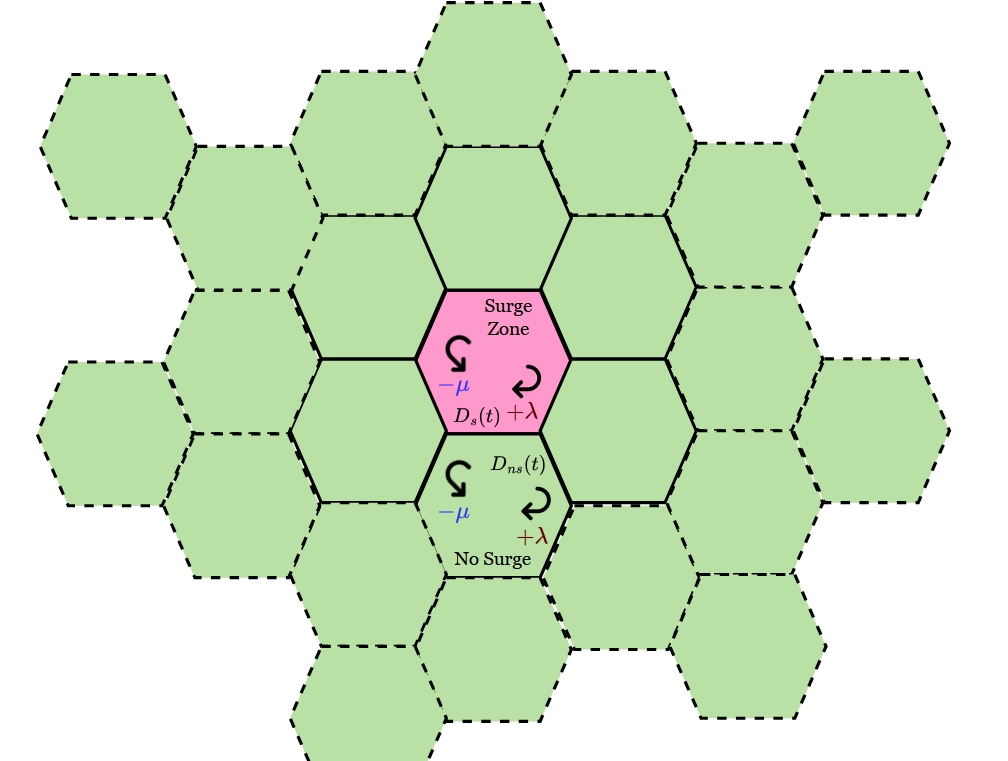}
      \caption{The above map schematic represents a geography serviced by a ride-sharing platform. The entire region is divided into hexagonal blocks and the price in each block depends on the market conditions ``locally". Ride-sharing platforms like Uber employ surge pricing algorithms at a ``hyperlocal" level~\cite{uber_pricing}, and Uber in particular uses such hexagonal regions for pricing ~\cite{uber_hex}. The red block represents a region with abnormally high demand, leading to a surge in prices. We call this zone the ``surge zone". In all other surrounding blocks (marked in green), demand reflects usual levels; we call them the ``non-surge zone".}
\end{figure}

We study a model of rider dynamics at surge time on ride-sharing platforms where the riders can be \emph{strategic} and are willing to walk outside of the surge zone to call a ride. This is motivated by behavior observed in practice at large events such as concerts and sports games and backed by empirical research ~\cite{npr}. The paper takes the point of view of the relatively short time horizon that it takes for a surge to happen and clear (minutes to hours). The main goal of the paper is then to understand ride-sharing dynamics over this type of short time scales, with a focus on understanding how demand clears in surge areas during major events. 

To model these dynamics, we study a geography consisting of two main regions: a central region where a major event is happening and where the demand for rides is unusually high, called the \emph{surge zone}; and the surrounding region, initially facing typical demand levels, called the \emph{non-surge zone}. The behavior of drivers (supply side) and riders (demand side) on the platform is formalized below.

\paragraph{Driver's arrivals and behavior:} We consider \emph{non-strategic} drivers that arrive in each zone at an equal, constant, deterministic rate of $\mu$. We \emph{initially} focus on non-strategic drivers to isolate and highlight the role of \emph{rider-side} strategic behavior, which is the main focus of this paper. However, we note that the assumption that rates are equal across zones and that drivers are not strategic, used in the theoretical analysis of Section~\ref{sec:dynamics}, is later relaxed in the experiments of Section~\ref{sec:experiments}. 

\paragraph{Rider's arrivals and behavior:} At time $0$, we consider there is an initial demand in both the surge and the non-surge zone. We denote by $D_{s}(t)$ and $D_{ns}(t)$ the \emph{unmet} demands at the end of time step $t$ in the surge and non-surge zones respectively, and assume throughout the paper that $D_s(0) \gg D_{ns}(0)$. Starting in the first time step, new riders arrive in each zone at a constant, deterministic rate of $\lambda$. We assume\footnote{Otherwise, demand can never be cleared and builds up over time, leading to a trivial/uninteresting case.} that $\lambda < \mu$.

Riders can leave each zone via two mechanisms: i) a rider gets matched to a driver and gets a ride; or, ii) a rider can be \emph{strategic} ~\cite{tabloid} (a departure from much of the related work in ride-sharing), and decide \emph{to leave the surge zone in favor of the non-surge zone}, for example when the demand in the surge zone is significantly higher than the demand in the non-surge zone. We make the following assumption on rider behavior in case ii): 

\begin{aspt}\label{as:strat_riders}
The fraction of the unmet rider demand $D_s(t)$ that move from the surge to the non-surge zone at every time step is given by $f \left(D_{s}(t) - D_{ns}(t)\right)$, for a known function $f$. Further, $f(.)$ is non-decreasing, in $[0,1]$, and is positive only if $D_{s}(t) > D_{ns}(t)$.
\end{aspt}

The above condition abstracts away riders' strategic behavior by providing a functional form on the \emph{rate} at which riders decide to move strategically from the surge to the non-surge zone. First, we assume that $f$ is non-decreasing; this is a natural assumption, as the higher the difference in demand and how crowded the surge and non-surge zones are, the more riders are willing to move. Second, we note that the rate is always non-negative: this models the fact that riders would sensibly walk from high-density areas like a stadium or a concert venue to lower-density areas, but not the other way around (e.g., a rider in a busy downtown area walking closer to a major venue or stadium experiencing a surge) may be unrealistic. Second, we note that our functional form makes a \emph{linearity} assumption, where the rate of movement is a function of the difference between demands in the surge and non-surge zones. We provide a motivating explanation of this assumption below:

\begin{example}\label{ex:1}
Assume a simple rider behavior where they make strategic decisions based on price multipliers set by the platform. Denote $m_s(t)$ and $m_{ns}(t)$ respectively the price multipliers in the surge and non-surge zones. Suppose that each agents estimates the price they pay in each zone according to the multipliers, and have a cost $c$ for moving from the surge to non-surge zone. Then the rider's utility for not moving from the surge zone is $m_s(t)$, while the utility for moving is $m_{ns}(t) + c$. In this case, a rider moves if and only if $c \leq m_s(t) - m_{ns}(t)$. When the costs $c$ are i.i.d. from a distribution with c.d.f. F(.), the rate at which riders move is $F(m_s(t) - m_{ns}(t))$. Under a simple pricing rule that is proportional to demand, we recover Assumption~\ref{as:strat_riders}: one such example of a pricing rule could be where the price multiplier depends on the supply-demand mismatch, i.e., $m_s(t) \propto \frac{D_s(t)}{\mu}$ and $m_{ns}(t) \propto \frac{D_{ns}(t)}{\mu}$.
\end{example}

Another motivating explanation is that riders make decision based on \emph{perceived} demand around them. A rider that is in a crowded area perceives that they may be a large demand differential between the surge zone and adjacent, outer areas, and may want to switch locations based on that perceived differential.

We note that the assumptions that i) the rate of rider arrivals are deterministic and ii) Assumption~\ref{as:strat_riders} on the specific functional form of riders' behavior, made for the theoretical results of Section~\ref{sec:dynamics}, are relaxed in the experiments of Section~\ref{sec:experiments} in favor of stochastic arrival rates and a game-theoretic model of riders' decisions on which zone to serve as a function of platform prices.

\section{Surge: Dynamics and Convergence}\label{sec:dynamics}

This section presents a theoretical perspective to the dynamics of surge when riders are strategic in nature. The section is organized as follows: we start by deriving the equations of dynamics from first principles in Section \ref{sub:eq_dynamics}, followed by an analysis of the shapes of surge and non-surge demands and providing bounds on the times to convergence in Section \ref{sub:shape}.  

\subsection{Equations of Dynamics}\label{sub:eq_dynamics}
In this segment, we explore the dynamics of the unmet demands inside and outside the surge boundary, given by $D_s(t)$ and $D_{ns}(t)$ respectively. We start by formally deriving the one-step equation for the evolution of $D_s(t)$ and $D_{ns}(t)$ from first principles. 

Remember that $D_s(t)$ is the unmet demand in the surge zone at the end of time period $t$. We express the unmet demand at the end of period $t+1$ recursively as a function of the unmet demands at the end of the previous time period $t$. We do this by accounting for all inflows and outflows of demand from the surge zone. 

\paragraph{Inflows:} Since riders can only leave the surge zone by being matched with a ride or choosing to relocate to the non-surge zone at the start of time $t$, a carryover demand of size $D_s(t)$ is already in the surge zone at the start of period $t+1$. The surge zone also has an exogenous arrival of riders 
at rate $\lambda$ which means that $\lambda$ riders arrive over the course of period $t+1$. Therefore, the total demand to be cleared in period $t+1$ is given by $D_s(t) + \lambda$. 

\paragraph{Outflows:} Since $D_s(t)$ number of riders had already been waiting to be matched since period $t$, a fraction of them may decide to relocate to the non-surge zone at the start of period $t+1$. As described earlier, this fraction is captured by $f(\cdot)$ which is a function of the demand differential $D_s(t)-D_{ns}(t)$ across the surge boundary. There is also a supply of new drivers at fixed rate $\mu$ in each time period, so $\mu$ rides arrive over the course of period $t+1$, get matched with rides and leave the system. Therefore, the total outflow of demand from the surge zone during period $t+1$ is at most $\mu + f(D_s(t)-D_{ns}(t))\cdot D_s(t)$. 

\paragraph{Equations of dynamics:} Hence, we can express $D_s(t+1)$ in the following recursive form: 
\begin{align}
    D_s(t+1) = \max \left[0,  D_s(t) + (\lambda - \mu) - f\left( D_s(t) - D_{ns}(t) \right) \cdot D_s(t) \right]. \label{eq:evolve_s_rev}
\end{align}
Note that the $\max\left(0, \cdot\right)$ ensures that outflow of demand cannot exceed the total demand to be cleared. 
Similarly, for $D_{ns}(t)$, we obtain:
\begin{align}
    D_{ns}(t+1) = \max \left[0, D_{ns}(t) + (\lambda - \mu) + f\left( D_s(t) - D_{ns}(t) \right) \cdot D_s(t) \right]. \label{eq:evolve_ns_rev}
\end{align}

\paragraph{Boundary Conditions:} In order to solve for $D_s(t)$ and $D_{ns}(t)$, we also need to specify the state of the system at $t = 0$ which constitute our boundary conditions. We set: 
\[
     D_s(0) = D_0; \quad \text{and} \quad D_{ns}(0) = d_0;
\]
$D_0$ is the large initial demand that arises in the surge zone at $t = 0$ (e.g., people leaving a stadium at the end of a football game), while $d_0$ is the uncleared demand in the non-surge zone at the same time. We assume that $D_0 \gg d_0$. 

The rest of the section studies the evolution of the system dynamics from Equations~\eqref{eq:evolve_s_rev} and~\eqref{eq:evolve_ns_rev} above.

\subsection{Properties of Surge and Non-surge Demands}\label{sub:shape}

Our main contribution in this segment is to characterize the shapes of the surge and non-surge demands over time in Theorems~\ref{thm:Ds_conv} and~\ref{thm:Dns_conv}. Recall that $D_0$ and $d_0$ are the initial unmet demands in the surge and non-surge zones, $\lambda$ is the rate of exogenous demand and $\mu$ is the rate of supply in each zone. We start by characterizing properties of $D_s(t)$, the unmet demand in the \emph{surge} area.

\begin{theorem}\label{thm:Ds_conv}
Let $D_s(t)$ be the unmet demand inside the surge zone at any time step $t \in \{0, 1, 2,...\}$. Then, $D_s(t)$ is always non-increasing in $t$ and must converge to $0$ in a finite number of time steps. Further, if $\tau_s$ indicates the number of time steps till convergence, then $\tau_s$ must satisfy: 
\[
       \left\lfloor \frac{D_0}{(\mu - \lambda) + D_0 \cdot f(D_0)} \right\rfloor  \leq \tau_s \leq \left\lceil \frac{D_0}{(\mu - \lambda)} \right\rceil.
\]
\end{theorem}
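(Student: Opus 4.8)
The plan is to work directly with the one-step recursion in Equation~\eqref{eq:evolve_s_rev} and to control the per-step change $D_s(t)-D_s(t+1)$ from both sides. Two preliminary observations drive everything. First, because the recursion is clamped at $0$, we always have $D_s(t)\ge 0$; moreover $0$ is absorbing, since if $D_s(t)=0$ then the argument of the $\max$ is $(\lambda-\mu)-0\le 0$ and hence $D_s(t+1)=0$. Thus ``convergence to $0$'' is well posed, and I define $\tau_s$ as the first time index at which $D_s$ hits $0$; it then suffices to bound this index.

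For the monotonicity claim and the upper bound on $\tau_s$, I would lower-bound the per-step decrease whenever the demand is still positive. As long as the argument of the $\max$ in~\eqref{eq:evolve_s_rev} is nonnegative, we have $D_s(t+1)-D_s(t)=(\lambda-\mu)-f(D_s(t)-D_{ns}(t))\,D_s(t)$, and since $\lambda<\mu$ and $f(\cdot)\ge 0$ by Assumption~\ref{as:strat_riders}, both terms are nonpositive; together with the clamp this gives $D_s(t+1)\le D_s(t)$, proving $D_s$ is non-increasing. Sharpening this, for every step $t\le\tau_s-2$ the next iterate $D_s(t+1)$ is still strictly positive, so the clamp is inactive and the decrease is at least $\mu-\lambda$, i.e.\ $D_s(t+1)\le D_s(t)-(\mu-\lambda)$. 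Telescoping from $0$ to $\tau_s-1$ and using $D_s(\tau_s-1)>0$ yields $(\tau_s-1)(\mu-\lambda)<D_0$, from which the integer bound $\tau_s\le\lceil D_0/(\mu-\lambda)\rceil$ follows; this simultaneously certifies that $D_s$ reaches $0$ in finitely many steps.

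For the lower bound on $\tau_s$, I would upper-bound the per-step decrease uniformly. Since $D_s$ is non-increasing we have $D_s(t)\le D_0$ and $D_s(t)-D_{ns}(t)\le D_s(t)\le D_0$, so monotonicity of $f$ gives $f(D_s(t)-D_{ns}(t))\le f(D_0)$; hence the outflow $f(\cdot)\,D_s(t)\le f(D_0)\,D_0$. Whether or not the clamp is active, this bounds the one-step drop: $D_s(t)-D_s(t+1)\le(\mu-\lambda)+f(D_0)\,D_0$ (when the clamp fires, the drop equals $D_s(t)$, which is itself at most this quantity because the argument was nonpositive). Telescoping over the $\tau_s$ steps from $D_0$ down to $0$ gives $D_0\le\tau_s\bigl[(\mu-\lambda)+f(D_0)\,D_0\bigr]$, and since $\tau_s$ is an integer this yields the claimed floor lower bound.

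The routine parts are the two telescoping sums; the step that needs the most care is the bookkeeping around the $\max$ operator. In particular, the per-step lower bound used for monotonicity and the upper bound on $\tau_s$ only holds on steps where the clamp is inactive, so I must argue separately that the clamp can be active only on the final step (when $D_s$ first hits $0$) and exclude that step from the telescope; symmetrically, for the per-step upper bound I must verify the drop bound still holds on the clamped step. The other place to be careful is translating the real-valued inequalities on $\tau_s$ into the stated floor and ceiling, using only that $\tau_s$ is a nonnegative integer.
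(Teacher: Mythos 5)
Your proposal is correct and follows essentially the same route as the paper's proof: zero is absorbing, the clamp-free per-step decrease of at least $\mu-\lambda$ gives monotonicity and the ceiling upper bound, and bounding the outflow by $f(D_0)\cdot D_0$ via monotonicity of $f$ and telescoping gives the floor lower bound. Your explicit bookkeeping around the $\max$ operator (isolating the final clamped step) is slightly more careful than the paper's, but it is the same argument in substance.
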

\begin{proof}
The detailed proof of all parts of the theorem can be found in Appendix \ref{app:proofs}. 
\end{proof}

Theorem \ref{thm:Ds_conv} provides us three key insights about $D_s(t)$: i) $D_s(t)$ is always non-increasing; ii) $D_s(t)$ converges to $0$ in finite time; and finally, iii) there are interpretable upper and lower bounds on the number of time steps to convergence. Recall that with $\mu > \lambda$, $(\mu - \lambda)$ represents the excess supply in the zone at each time step after the exogenous demand for that period has been matched and cleared. Therefore, if all riders happened to be \emph{non-strategic} and chose not to walk, the excess supply in each time step would be used to clear the pending demand from $t = 0$ and it would take exactly $\left \lceil \frac{D_0}{\mu-\lambda} \right \rceil$ time steps for all the demand to dissipate. This shows that the upper bound is tight. In reality, however, the surge demand will dissipate faster than this because of the effect of unmatched customers from the surge zone gradually relocating to non-surge zones over time. On the other hand, the lower bound is obtained by noting that $(\mu-\lambda) + D_0 \cdot f(D_0)$ represents the fastest possible rate of clearing demand ($D_0 \cdot f(D_0)$ represents the maximum outflow of strategic riders in any time period which can be achieved at $t = 0$ from the surge to the non-surge zone, and $\mu-\lambda$ the rate that is cleared by matching demand to supply). For $t > 0$, the rate of clearing demand can only be smaller because of the smaller value of $D_s(t)$ and the smaller demand differential across the surge boundary, leading to a smaller total amount of agents moving. Therefore, the surge demand cannot dissipate earlier than $ \left\lfloor \frac{D_0}{(\mu - \lambda) + D_0 \cdot f(D_0)} \right\rfloor$ time steps. This bound is exactly tight in the corner case where the rates of clearing are so high that $(\mu - \lambda) + D_0 \cdot f(D_0) \geq D_0$ and the entire demand is cleared in time step $1$.

Now, we investigate the behavior of $D_{ns}(t)$. We present a complete characterization of the shape properties of $D_{ns}(t)$ in Theorem \ref{thm:Dns_conv}: 

\begin{theorem}\label{thm:Dns_conv}
Let $D_{ns}(t)$ be the unmet demand in the non-surge zone at any time step $t \in \{0, 1, 2,...\}$. Then, there always exists some time step $\tau \geq 0$ such that $D_{ns}(t)$ is non-increasing for all $t \geq \tau$ and increasing for all $t < \tau$. Further, $\tau$ satisfies:
\[
   \tau = \min \left\{t \in \{0, 1, 2,....\}: f\left(D_s(t)-D_{ns}(t) \right)\cdot D_s(t) \leq \mu - \lambda \right\}.
\]
Finally, $D_{ns}(t)$ must converge to $0$ in finite time and if $\tau_n$ indicates the time to convergence, then $\tau_n$ must satisfy: 
\[
    \left\lfloor \frac{d_0}{(\mu - \lambda)} \right\rfloor \leq \tau_n \leq \left\lceil \frac{D_0+d_0}{(\mu - \lambda)} \right\rceil.
\]
\end{theorem}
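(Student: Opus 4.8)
The plan is to reduce the entire statement to a single monotonicity fact about the spillover term. Write $g(t) := f\!\left(D_s(t)-D_{ns}(t)\right)\cdot D_s(t)$ for the mass of riders moving from the surge to the non-surge zone in step $t$; this is exactly the inflow into the non-surge zone. As long as $D_{ns}$ has not yet been truncated to $0$, Equation~\eqref{eq:evolve_ns_rev} gives the clean increment $D_{ns}(t+1)-D_{ns}(t) = g(t)-(\mu-\lambda)$, so the direction in which $D_{ns}$ moves is governed solely by the sign of $g(t)-(\mu-\lambda)$. Consequently, the claimed shape (increasing, then non-increasing) together with the stated formula for $\tau$ follows immediately once I show that $g(t)$ is non-increasing in $t$: then $g(t)-(\mu-\lambda)$ changes sign at most once, from positive to non-positive, and $\tau$ is precisely the first index at which $g(t)\le\mu-\lambda$.

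The heart of the argument is therefore the monotonicity of $g$. I would track the demand differential $\Delta(t) := D_s(t)-D_{ns}(t)$. Subtracting Equation~\eqref{eq:evolve_ns_rev} from Equation~\eqref{eq:evolve_s_rev}, in the regime where neither $\max$ is active, the two spillover terms reinforce and yield $\Delta(t+1)=\Delta(t)-2\,g(t)$; since $f\ge 0$ forces $g(t)\ge 0$, the differential $\Delta(t)$ is non-increasing. Because $f$ is non-decreasing, $f(\Delta(t))$ is then non-increasing, and Theorem~\ref{thm:Ds_conv} guarantees $D_s(t)$ is non-increasing; as a product of two non-negative non-increasing sequences, $g(t)=f(\Delta(t))\,D_s(t)$ is non-increasing, as desired. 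The one delicate point is the $\max$ operators: the identity $\Delta(t+1)=\Delta(t)-2g(t)$ can fail exactly when $D_s$ is truncated to $0$. However, once $D_s(t)=0$ one checks from Equation~\eqref{eq:evolve_s_rev} that $D_s$ stays $0$ and hence $g\equiv 0$ thereafter, so any upward jump in $\Delta$ occurs only after $g$ has already reached its floor of $0$; thus $g$ remains non-increasing for all $t$ regardless of truncation, which is all the shape argument needs.

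It remains to establish convergence and the two bounds on $\tau_n$. For convergence, $D_s$ reaches $0$ in finite time by Theorem~\ref{thm:Ds_conv}; at every later step $g(t)=0$, so $D_{ns}$ decreases by exactly $\mu-\lambda>0$ per step and must hit $0$ in finitely many steps. For the lower bound, $g(t)\ge 0$ means the inflow can only slow the clearing of the non-surge zone, so $D_{ns}(t)\ge d_0-(\mu-\lambda)t$ up to convergence; evaluating at $t=\tau_n$ where $D_{ns}(\tau_n)=0$ gives $(\mu-\lambda)\tau_n\ge d_0$, i.e. $\tau_n\ge\lfloor d_0/(\mu-\lambda)\rfloor$. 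For the upper bound I would telescope Equation~\eqref{eq:evolve_ns_rev} to obtain $(\mu-\lambda)\tau_n = d_0+\sum_{k<\tau_n} g(k)$, and bound the total spillover $\sum_k g(k)\le D_0$ (up to one step's worth of unused supply, which the ceiling absorbs): every rider who ever moves was part of the initial surge mass $D_0$, the later arrivals being offset against the excess supply $\mu-\lambda$; equivalently, the combined demand $D_s+D_{ns}$ falls by $2(\mu-\lambda)$ per interior step because the transfer terms cancel. Hence $(\mu-\lambda)\tau_n\le d_0+D_0$, giving $\tau_n\le\lceil (D_0+d_0)/(\mu-\lambda)\rceil$.

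The main obstacle is the monotonicity/single-crossing step for $g(t)$, and in particular handling the $\max$ truncations cleanly so that the single sign change of $D_{ns}(t+1)-D_{ns}(t)$ is genuinely guaranteed; once that is in place, convergence and the bounds reduce to elementary telescoping and the conservation observation above, with the care needed for integer time steps and the final, possibly partially cleared, step being exactly what produces the floors and ceilings in the stated bounds.
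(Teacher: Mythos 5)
Your proof is correct and takes essentially the same route as the paper's: both arguments hinge on showing that the spillover term $g(t)=f\left(D_s(t)-D_{ns}(t)\right)\cdot D_s(t)$ can never increase (via the non-increasing demand differential combined with $D_s$ non-increasing and $f$ non-decreasing, with the $\max$-truncations treated separately), both get existence of $\tau$ from finite-time convergence, and both derive the bounds on $\tau_n$ from the same conservation/counting of total demand. One minor imprecision: the identity $\Delta(t+1)=\Delta(t)-2g(t)$ can fail not only when $D_s$ is truncated to $0$ but also when $D_{ns}$ is truncated to $0$; fortunately, in that case the truncation only pushes $\Delta(t+1)$ further down (removing the $\max$ can only increase the difference, which is exactly how the paper handles it), so your monotonicity conclusion, and hence the whole proof, still stands.
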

\begin{proof}
The detailed proof of all parts of the theorem can be found in Appendix \ref{app:proofs}. 
\end{proof}

This theorem provides the following insights: i) if $\tau = 0$, $D_{ns}(t)$ is non-increasing for all $t$ and converges to $0$, and ii) if $\tau > 0$, $D_{ns}(t)$ increases initially to attain a \emph{peak} at $\tau$, followed by which it decreases and converges to $0$. It also provides interpretable upper and lower bounds on the time to convergence. The upper bound corresponds to the extreme case when all riders in the surge zone relocate to the non-surge zone in the first period itself and since the excess supply in each period is $(\mu-\lambda)$ in the non-surge zone, it takes about $ \frac{D_0+d_0}{(\mu - \lambda)} $ time steps to clear a total demand of size $D_0 + d_0$. Similarly, the lower bound corresponds to the other extreme when riders are \emph{non-strategic} and choose not to move, in which case it takes about $\frac{d_0}{(\mu-\lambda)}$ time steps to clear a demand of size $d_0$. Thus, both upper and lower bounds are tight. 

We have already noted that the value of $\tau$ determines the shape of the demand curve $D_{ns}(t)$ in the non-surge zone. We will see shortly that the shape of $D_{ns}(t)$ can be used to characterize the \emph{type of surge}. 
\begin{defn}[Localized Surge]
When the surge demand does not accumulate outside the surge boundary and the unmet non-surge demand $D_{ns}(t)$ is non-increasing for all $t \geq 0$, we say that the surge is \textbf{localized}. Formally, $\tau = 0$ corresponds to the \textbf{localized surge} scenario.  
\end{defn}

\begin{defn}[Spill-over Surge]
When the surge demand accumulates outside the surge zone causing the unmet non-surge demand $D_{ns}(t)$ to increase (even if for a short period of time), we say that the surge has \textbf{spilled over}. Formally, $\tau > 0$ corresponds to the \textbf{spill-over surge} scenario. 
\end{defn}

$\tau > 0$ represents a different scenario where up to time step $\tau$, $D_{ns}$ is increasing because the mass of riders relocating across the surge boundary is large enough that it cannot be cleared instantaneously by the excess supply $\mu - \lambda$, leading to accumulating demand. Since the high demand inside the surge zone appears to \emph{spill-over} and affect demand outside the surge boundary, we call this type of surge a \emph{spill-over}. In contrast, we call the case of $\tau = 0$ a \emph{localized surge}, where the inflow of riders from the surge into the non-surge zone is less than the excess supply, and $D_{ns}$ is decreasing.  

In the case of a spill-over surge, as $D_{ns}(t)$ increases, the demand differential $D_s(t) - D_{ns}(t)$ decreases (recall, $D_s(t)$ is steadily decreasing) and so does the incentive for riders to relocate. As a result, beyond $\tau$, $D_{ns}(t)$ starts to get cleared by the excess supply $\mu - \lambda$. This leads to the rapid decrease and eventual convergence of $D_{ns}(t)$. An interesting highlight of Theorem \ref{thm:Dns_conv} is that $D_{ns}(t)$ has a relatively simple shape in that it is \emph{single peaked}.

We now provide a necessary and sufficient condition on the initial demands $D_0$ and $d_0$ that guarantees $\tau = 0$ or equivalently, the absence of the \emph{spill-over effect}. This formalizes the intuition expressed above about the relationship between rider relocation and excess supply: 

\begin{claim}\label{clm:tau_eq0}
There is no spill-over, i.e., $\tau = 0$ if and only if the initial demands $D_0$ and $d_0$ satisfy $D_0 \cdot f\left(D_0 - d_0\right) \leq \mu - \lambda$.
\end{claim}
\begin{proof}
The proof follows directly from the definition of $\tau$ in Theorem \ref{thm:Dns_conv}.  
\end{proof}

\section{Numerical Experiments}\label{sec:experiments}
In this section, we will start by visualizing the main theoretical insights that we have developed in Section~\ref{sec:dynamics} through numerical experiments---specifically the shapes of the demand curves under different surge types, whether the convergence times satisfy our theoretical bounds and how tight the bounds are. We then use simulations to extend our analysis to more general settings with stochastic arrival rates, explicit platform prices, strategic drivers reacting to prices, and a game-theoretic model of rider decision-making. 

\subsection{Deterministic Setting}\label{sec:experiments_deterministic}
First, we present numerical experiments for the simple deterministic setting captured in our model in Section \ref{sec:model} where the exogenous demand $\lambda$ and the supply of drivers $\mu$ in each zone is deterministic and fixed for all time periods. The main objectives for this segment are as follows: i) observing how the unmet demands in the two zones $D_s(t)$ and $D_{ns}(t)$ evolve over time $t$ in \emph{localized} and \emph{spill-over} surge scenarios, ii) identifying and highlighting interesting phenomenon like \emph{surge inversions} that can arise in special cases, and iii) investigating how fast the demands converge compared to the bounds predicted by theory and how the relative size of parameters $\mu$ and $\lambda$ affect this convergence. 

For these experiments, we choose a simple functional form of $f(\cdot)$ which captures the fraction of waiting customers in the surge zone who choose to walk across the surge boundary in each time period. We use $f(x) = \max \left(0,\frac{kx}{\mu} \right)$ where $k$ is a scalar multiplier. Therefore the rate of movement from the surge to the non-surge zone is given by 
\begin{align}\label{eq:theory_movement}
\max \left(0,\frac{k(D_s - D_{ns})}{\mu}\right).
\end{align}

The value of $k$ is chosen appropriately to ensure that $f(\cdot) \in [0, 1]$. A higher value of $k$ indicates that a large fraction of customers choose to move. Recall that the input $x$ to $f(\cdot)$ is the demand difference between the surge and non-surge zones (Assumption \ref{as:strat_riders}).  We additionally use the supply $\mu$ as a normalizing factor, referring back to our pricing rule example in Example \ref{ex:1}, and noting that intuitively, the higher the supply, the lesser the waiting times and the likelihood of a surge, and the less likely agents are to move. Finally, our justification for using a simple linear form for $f(\cdot)$ is that it is easy to implement and interpret, but we will show later in Section \ref{sec:experiments_agentbased} that despite being simple, our linear model of $f(\cdot)$ has very good explaining power and can accurately capture the effect of moving riders due to high prices in the surge zone.

\begin{figure}[!ht]
      \centering
      \includegraphics[width=0.6\textwidth]{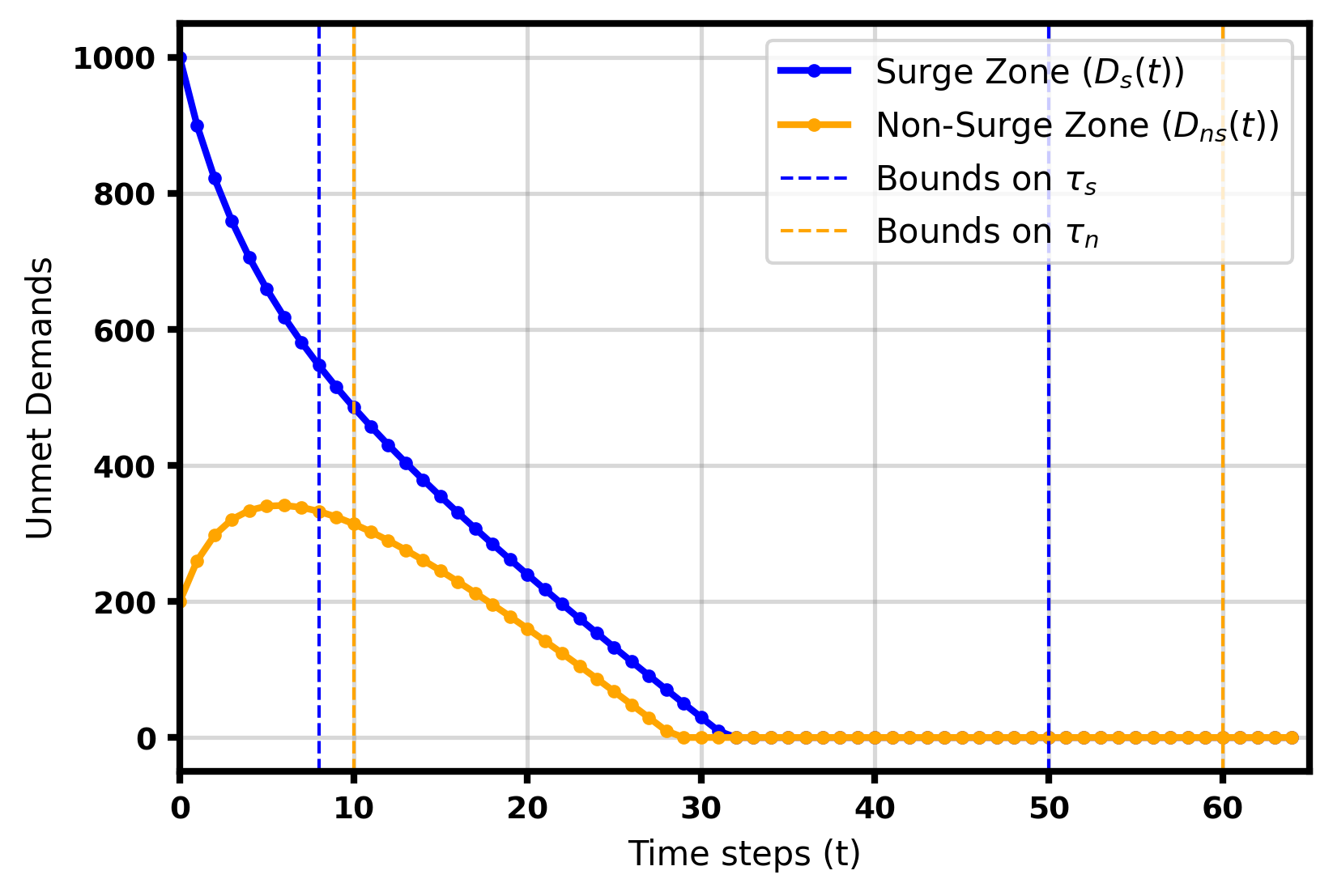}
      \caption{Spill-over surge: Demand curves over time for $D_0 = 1000$, $d_0 = 200$, $\lambda = 30$ and $\mu = 50$. We set $k = 0.005$ and simulate over $T = 65$ time steps. $D_s(t)$ converges in $\tau_s = 33$ time steps while $D_{ns}(t)$ converges in $\tau_n = 30$ time steps. Observe that they lie within our theoretically computed bounds in Theorems \ref{thm:Ds_conv} and \ref{thm:Dns_conv} (indicated by dotted lines).}
    \label{fig:spill_surge}
\end{figure}
In Figure \ref{fig:spill_surge}, we demonstrate a spill-over surge scenario. Note that for the chosen values of the parameters $D_0$, $\lambda$, $\mu$, we have $D_0 \cdot f\left( D_0-d_0\right) > \mu - \lambda$ which we showed in Claim \ref{clm:tau_eq0} to be a sufficient condition for surge to spill-over. This results in $D_{ns}(t)$ having a single peak at some $t > 0$ and then decreasing to $0$. We also observe that $D_s(t)$ decreases monotonically to $0$, as expected.  
\begin{figure}[!ht]
      \centering
      \includegraphics[width=0.6\textwidth]{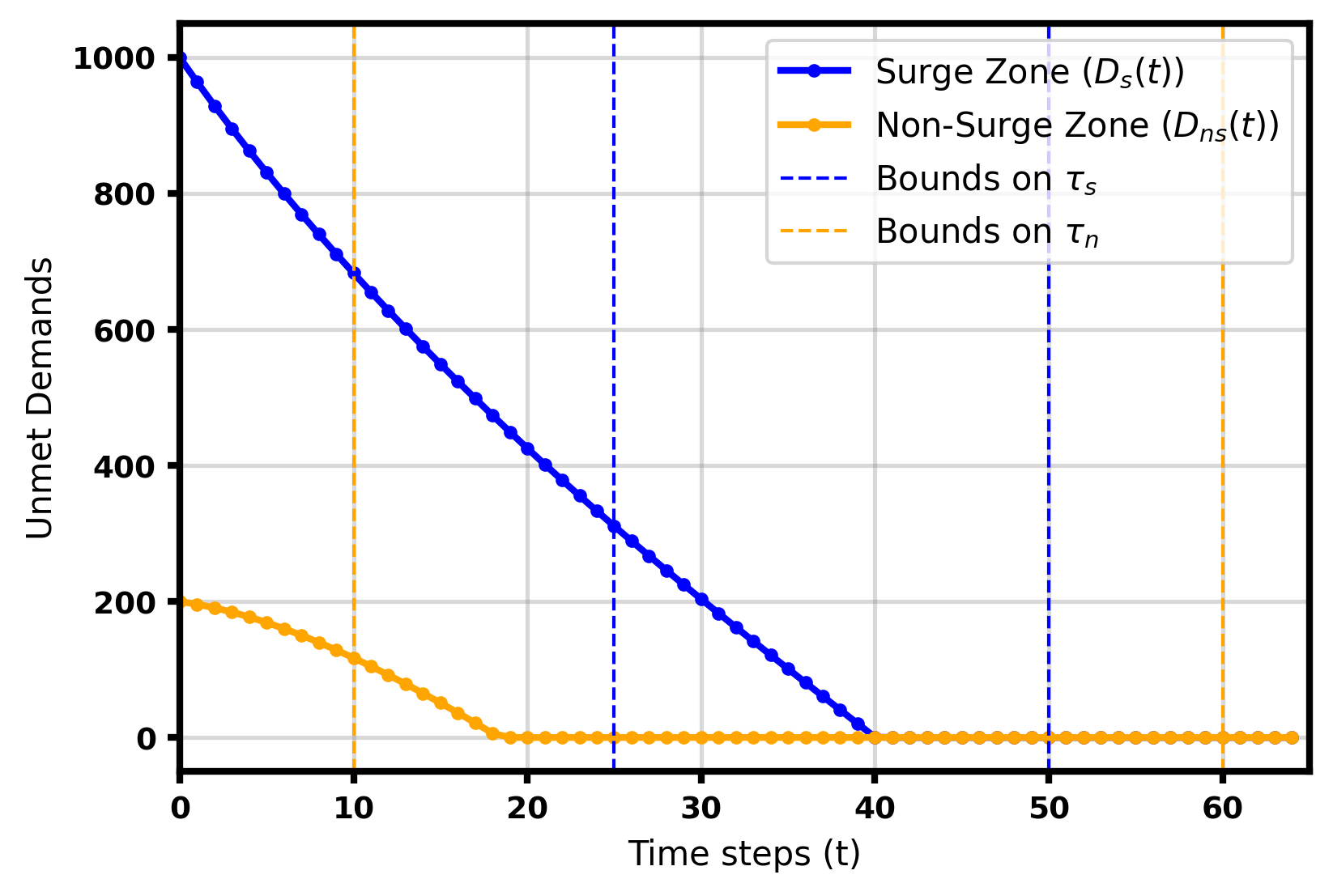}
      \caption{Localized surge: Demand curves over time for $D_0 = 1000$, $d_0 = 200$, $\lambda = 30$ and $\mu = 50$. We set $k = 0.001$ and simulate over $T = 65$ time steps. $D_s(t)$ converges in $\tau_s = 41$ time steps while $D_{ns}(t)$ converges in $\tau_n = 20$ time steps (within theoretically computed bounds). Note the the distinctive difference in shape of $D_{ns}(t)$ from Figure \ref{fig:spill_surge}. }
      \label{fig:local_surge}
\end{figure}

Now, Figure \ref{fig:local_surge} demonstrates a scenario with a low value of $k$---here the mass of riders relocating in each time period is always smaller than $\mu - \lambda$, so it can be matched instantaneously by the excess supply in the non-surge zone. This leads to a localized surge scenario ($\tau = 0$) where the unmet demands in both zones are found to decrease monotonically as predicted by theory. This also causes $D_{ns}(t)$ to converge much faster than $D_s(t)$.   

\paragraph{Surge Inversion:} We highlight the existence of cases where \emph{the non-surge demand $D_{ns}(t)$ exceeds the surge demand $D_s(t)$}. We call this phenomenon a \emph{surge inversion}. An example of surge inversion is presented below in Figure \ref{fig:surge_inv}. We have already seen that when the surge does not remain localized, $D_{ns}(t)$ increases initially and attains a peak before decreasing monotonically.  
\begin{figure}[!ht]
      \centering
      \includegraphics[width=0.6\textwidth]{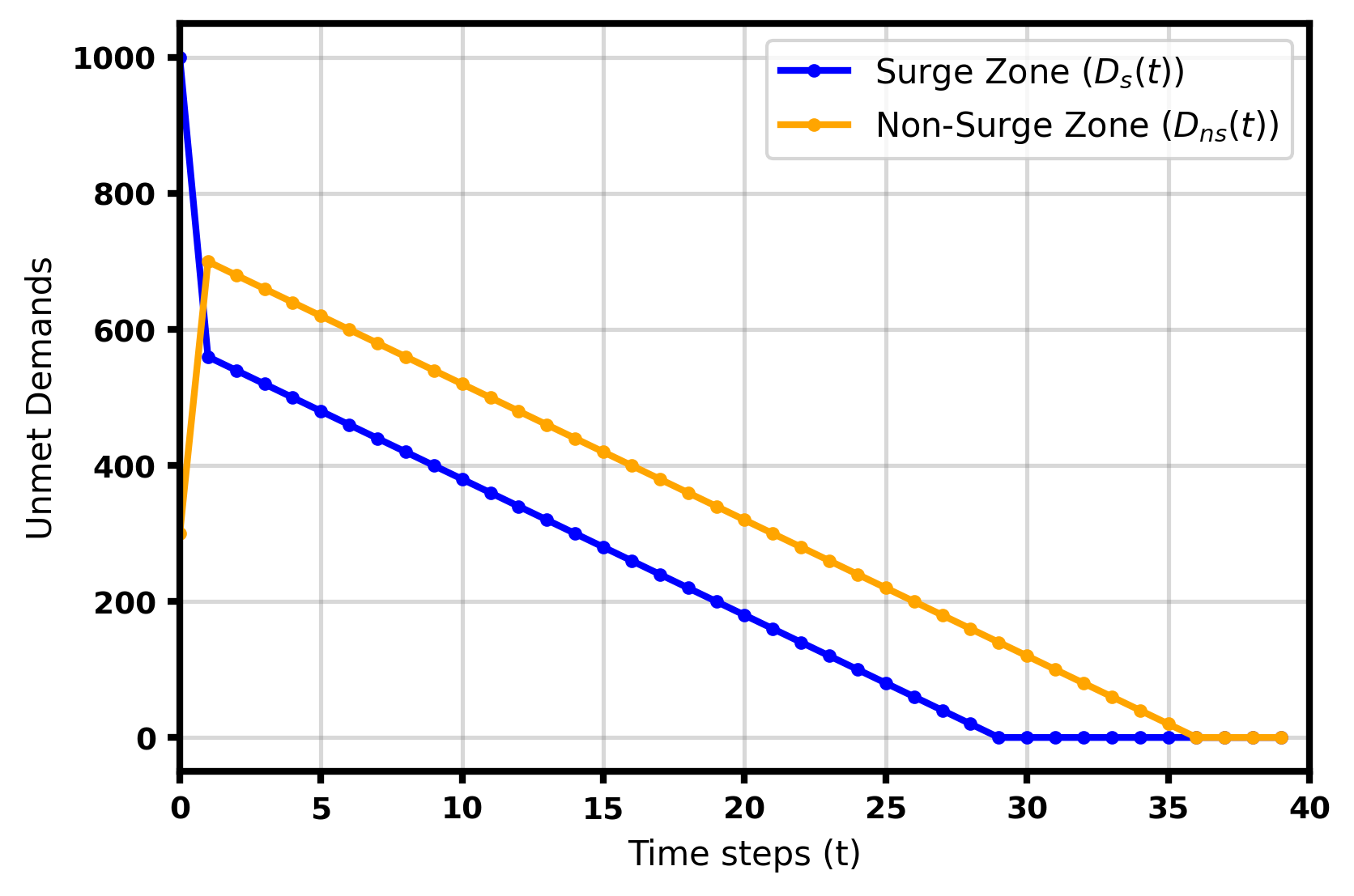}
      \caption{Surge Inversion: Demand curves over time for $D_0 = 1000$, $d_0 = 300$ $\lambda = 30$, $\mu = 50$. We set $k = 0.05$. Observe that for this set of parameters, $D_{ns}(t)$ exceeds $D_s(t)$ at $t = 1$.}
      \label{fig:surge_inv}
\end{figure}
However, in some cases (at high values of $k$), the mass of riders who choose to walk over in the first few time periods is so significant that the surge demand $D_s(t)$ dips below the peak of $D_{ns}(t)$ leading to a \emph{surge inversion}. As soon as inversion happens, the status quo is maintained for all future time steps because as per Assumption \ref{as:strat_riders}, riders will never choose to walk back inside the surge zone again even if the demand outside is higher\footnote{When riders are allowed to walk back, we expect to see cycles where the zone with the highest demand and highest price switches over time; we do not aim to characterize such cycles in this paper.}. Henceforth, both zones will clear demand at a fixed rate of $(\mu - \lambda)$ until they converge to $0$, implying that $D_s(t)$ converges faster than $D_{ns}(t)$ during \emph{surge inversion}. It is important to note that surge inversion does not violate any of our theoretical bounds on $D_s(t)$ or $D_{ns}(t)$.   

\paragraph{Variation of results over the parameter space:} Finally, we conclude the discussion on the deterministic setting with an analysis of how changing the value of the different parameters like $D_0$, $k$, $\mu$ and $\lambda$ may affect the results: we are primarily interested in seeing i) how the system transitions between the two types of surge as a function of the parameters, and ii) how different parameters affect the time to convergence of the demands relative to each other. 

In Figure \ref{fig:mulambdacomp_surgetype}, we fix the initial demands $D_0$, $d_0$ and the parameter $k$ associated with $f(\cdot)$ and only vary the exogenous demand $\lambda$ and supply rate $\mu$. Firstly, as the magnitude of the excess supply ($\mu-\lambda$) increases, the gap between the convergence times $\tau_s$ and $\tau_n$ increases --- $D_{ns}(t)$ is found to converge much faster than $D_s(t)$. The excess supply also determines the type of surge---with a smaller value of ($\mu-\lambda$), \emph{spill-over} surges are more likely as long as the supply rate $\mu$ is not too large; this is in accordance with Claim \ref{clm:tau_eq0}. The size of the supply $\mu$ is also important independently of $\lambda$, as it plays a role in the rate at which agents move, given in Equation~\ref{eq:theory_movement}. Because the initial demands $D_0$ and $d_0$ are fixed, a smaller value of $\mu$ indicates that the demand gap across the surge boundary relative to the supply rate is higher, which increases the incentive for riders to walk. This causes the $D_s(t)$ and $D_{ns}(t)$ curves to become steeper and approach each other, as the demand spontaneously redistributes.        

\begin{figure}[!ht]
    \centering
    \raisebox{35pt}{\parbox[b]{.05\textwidth}{}}%
    \subfloat[][$\mu$ = 30, $\lambda$ = 5]{\includegraphics[width=.3\textwidth]{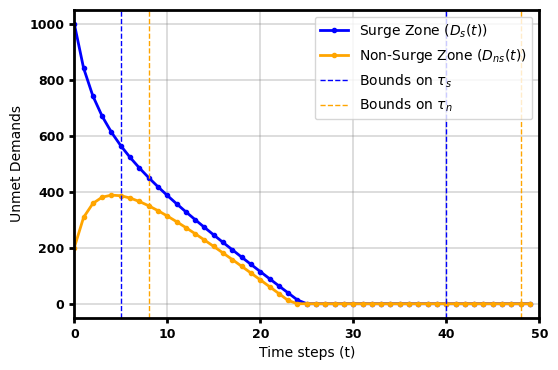}}\hfill
    \subfloat[][$\mu$ = 30, $\lambda$ = 15]{\includegraphics[width=.3\textwidth]{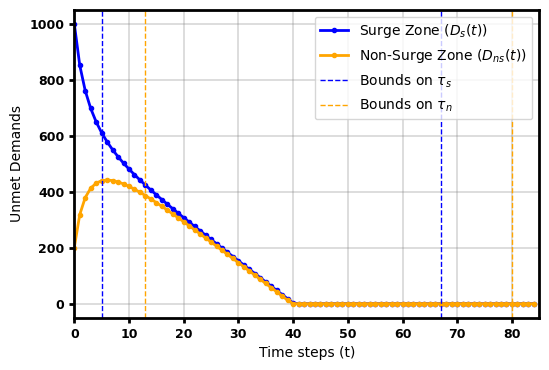}}\hfill
    \subfloat[][$\mu$ = 30, $\lambda$ = 25]{\includegraphics[width=.3\textwidth]{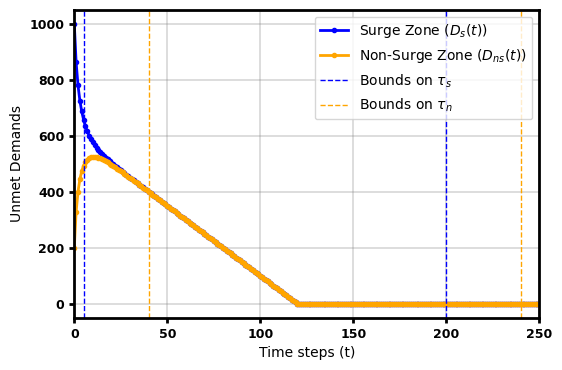}}\par
    
    \raisebox{35pt}{\parbox[b]{.05\textwidth}{}}%
    \subfloat[][$\mu$ = 85, $\lambda$ = 10]{\includegraphics[width=.3\textwidth]{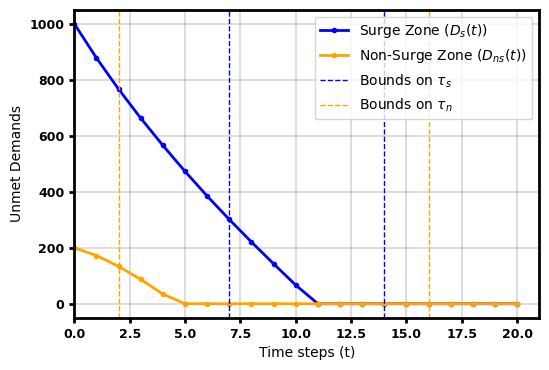}}\hfill
    \subfloat[][$\mu$ = 85, $\lambda$ = 50]{\includegraphics[width=.3\textwidth]{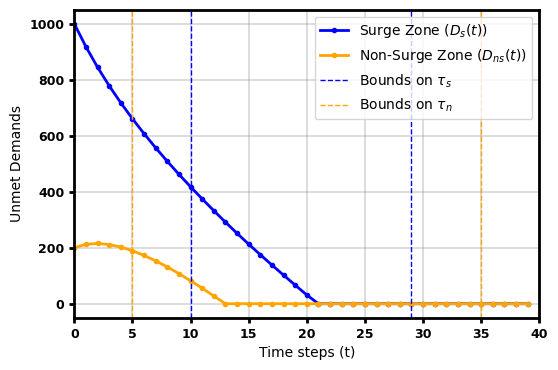}}\hfill
    \subfloat[][$\mu$ = 85, $\lambda$ = 70]{\includegraphics[width=.3\textwidth]{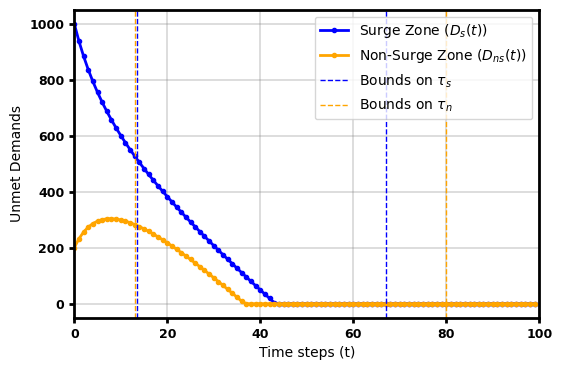}}\par
    
    \raisebox{35pt}{\parbox[b]{.05\textwidth}{}}%
    \subfloat[][ $\lambda$ = 15, $\mu$ = 20]{\includegraphics[width=.3\textwidth]{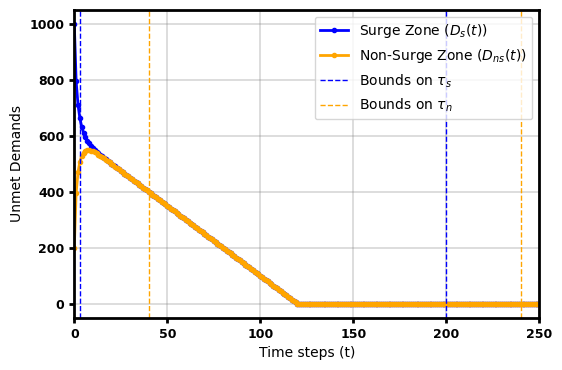}}\hfill
    \subfloat[][$\lambda$ = 15, $\mu$ = 40]{\includegraphics[width=.3\textwidth]{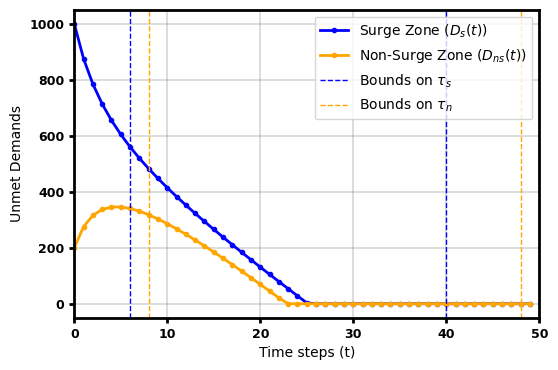}}\hfill
    \subfloat[][$\lambda$ = 15, $\mu$ = 60]{\includegraphics[width=.3\textwidth]{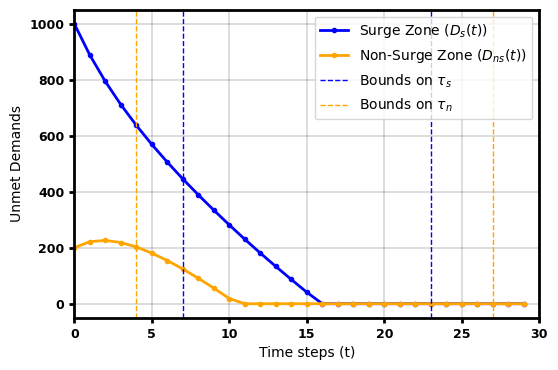}}\par
    
    \raisebox{35pt}{\parbox[b]{.05\textwidth}{}}%
    \subfloat[][$\lambda$ = 60, $\mu$ = 70]{\includegraphics[width=.3\textwidth]{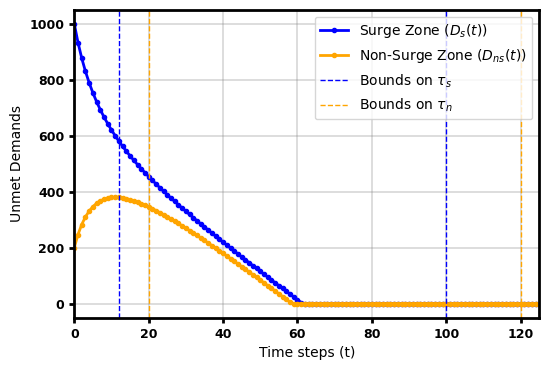}}\hfill
    \subfloat[][$\lambda$ = 60, $\mu$ = 90]{\includegraphics[width=.3\textwidth]{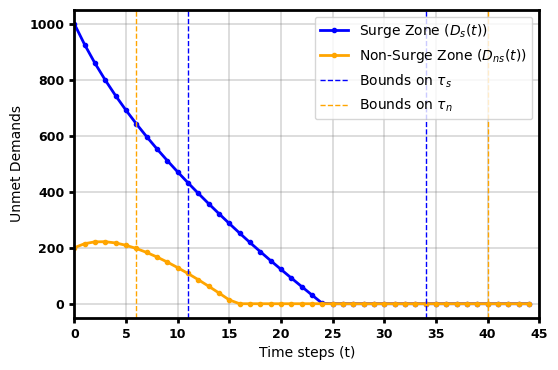}}\hfill
    \subfloat[][$\lambda$ = 60, $\mu$ = 110]{\includegraphics[width=.3\textwidth]{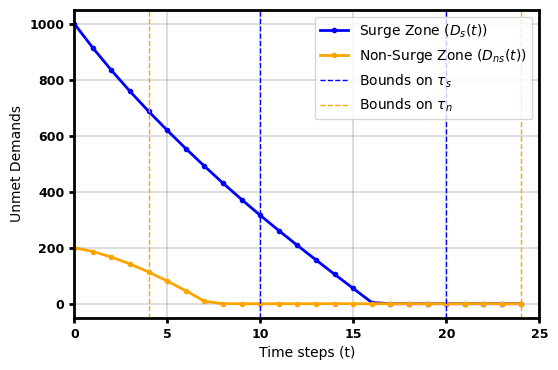}}\par

    \caption{Illustration of surge dynamics in the deterministic setting with variation in parameters $\mu$ and $\lambda$. For all sets of results here, the following parameter values are fixed: $D_0 = 1000$, $d_0 = 200$, and $k = 0.005$. }
    \label{fig:mulambdacomp_surgetype}
\end{figure} 

In Figure \ref{fig:D0_k_vary}, we explore how the demand curves evolve when i) the volume of the initial surge demand $D_0$ changes (all else fixed), and ii) the parameter $k$ in $f(\cdot)$ changes (all else fixed). From sub-figures (a)-(b)-(c), it is clear that as $D_0$ increases, the surge type gradually transitions from the localized to the spill-over type which is highly intuitive. We make a similar observation in sub-figures (d)-(e)-(f) where $D_0$ is fixed, but $k$ increases gradually. A higher $k$ indicates that a higher fraction of riders choose to walk in each time step. Again, we see a gradual evolution in the shape of $D_{ns}(t)$ as it transitions from being monotonically decreasing (localized surge) to increasing first and then decreasing (spill-over surge).   

Across both Figures \ref{fig:mulambdacomp_surgetype} and \ref{fig:D0_k_vary}, it is important to note how close the convergence times $\tau_s$ and $\tau_n$ are to the bounds. In localized surge settings, $\tau_s$ is very close to its upper bound and $\tau_n$ is very close to its lower bound. These observations can be  explained by noting that in localized surge, the mass of riders walking across the surge boundary is not too high---as a result, the surge and non-surge demands only change mildly (surge demand decreases, non-surge demand increases) from the scenario where riders are non-strategic and do not walk. However, as we move to spill-over surge setting, our bounds become looser which is again along expected lines (in the case of surge demand, the upper bound completely ignores the effect of walking while the lower bound overestimates it by using the rate of walking in the first time step which is the highest).

\begin{figure}[!ht]
  \centering
  \raisebox{20pt}{\parbox[b]{.11\textwidth}{}}%
  \subfloat[][$D_0$ = 300, $k = 0.005$ ]{\includegraphics[width=.3\textwidth]{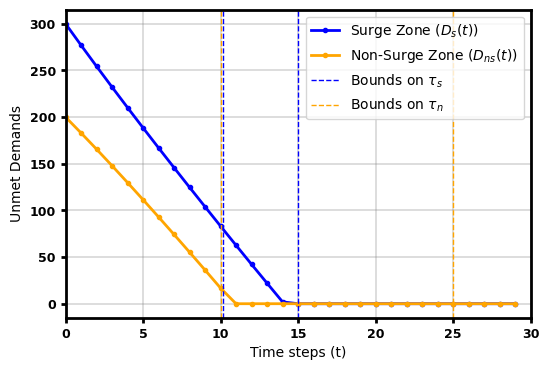}}\hfill
  \subfloat[][$D_0$ = 500, $k = 0.005$]{\includegraphics[width=.3\textwidth]{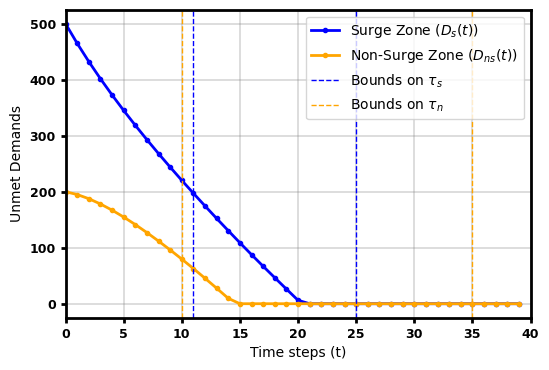}}\hfill
  \subfloat[][$D_0$ = 1500, $k = 0.005$]{\includegraphics[width=.3\textwidth]{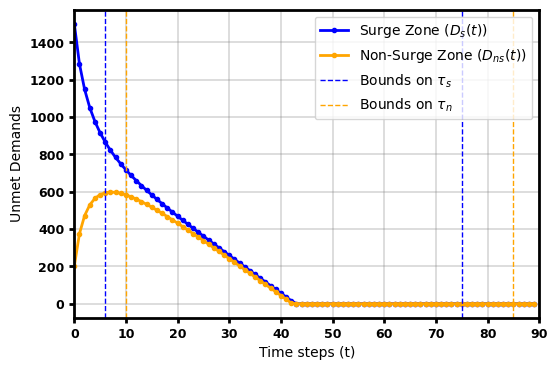}}\par
  
  \raisebox{20pt}{\parbox[b]{.11\textwidth}{}}%
  \subfloat[][$k = 0.0005$, $D_0$ = 1000]{\includegraphics[width=.3\textwidth]{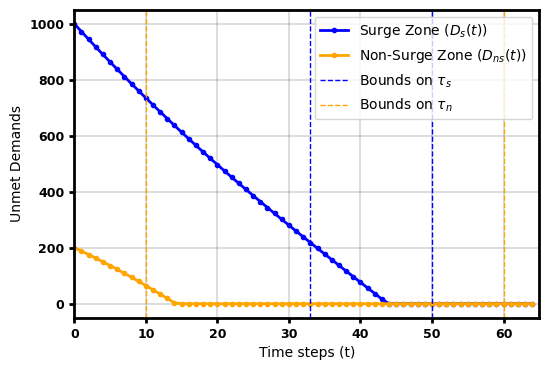}}\hfill
  \subfloat[][$k = 0.001$, $D_0$ = 1000]{\includegraphics[width=.3\textwidth]{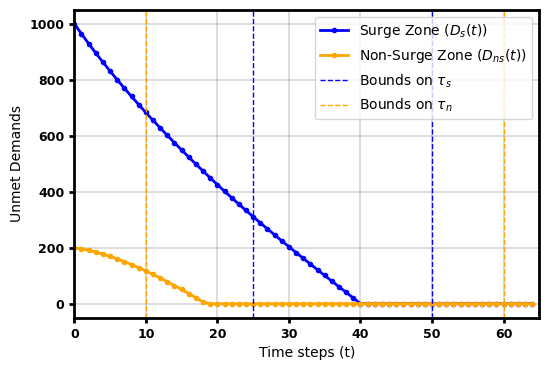}}\hfill
  \subfloat[][$k = 0.01$, $D_0$ = 1000]{\includegraphics[width=.3\textwidth]{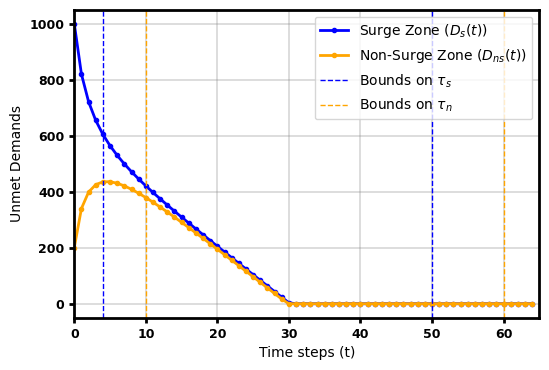}}\par

  \caption{Demonstration of how the shapes of $D_s(t)$ and $D_{ns}(t)$ change as we traverse the parameter space in $k$ and $D_{0}$. All other parameters remain fixed: $d_0 = 200$, $\lambda = 30$ and $\mu = 50$.}
  \label{fig:D0_k_vary}
\end{figure}

\subsection{Extension 1: Stochastic Demands \& Supplies}\label{sec:experiments_stochastic}
One immediate extension is to consider stochastic exogenous demands and supplies for both zones at all time steps. Instead of having a fixed exogenous demand $\lambda$ and a fixed supply $\mu$ at each time step, we now assume that at any time $t \in \{0, 1, 2,....T\}$, the exogenous demand is drawn from a distribution $F(\cdot)$ with mean number of arrivals per unit time $\lambda$. Similarly, the supply is drawn from another distribution $G(\cdot)$ with mean number of arriving vehicles per unit time $\mu$. We choose a Poisson distribution for $F$ and $G$, a standard process for modeling stochastic arrival \cite{poisson}.

\begin{figure}[!ht]
      \centering
      \includegraphics[width=0.6\textwidth]{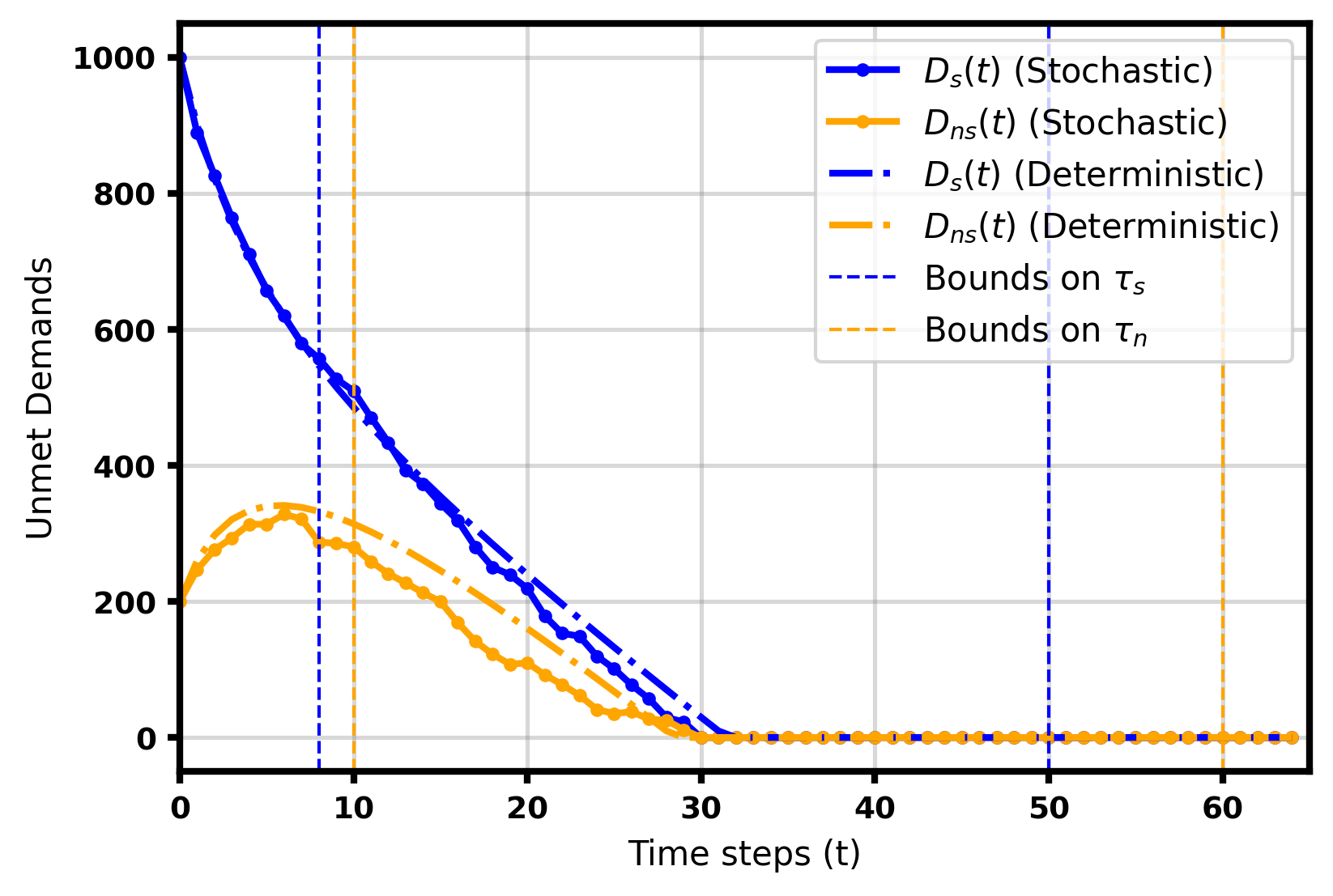}
      \caption{Spill-over surge under stochastic demands and supplies: Demand curves over time for $D_0 = 1000$, $d_0 = 200$, $\lambda = 30$ and $\mu = 50$. We set $k = 0.005$ and simulate over $T = 65$ time steps. For one realization of supplies and demands, $D_s(t)$ converges in $\tau_s = 31$ time steps while $D_{ns}(t)$ converges in $\tau_n = 29$ time steps (compared to $\tau_s = 33$ and $\tau_n = 30$ in the deterministic case). }
      \label{fig:stoch_spill_surge}
\end{figure}

\begin{figure}[!ht]
      \centering
      \includegraphics[width=0.6\textwidth]{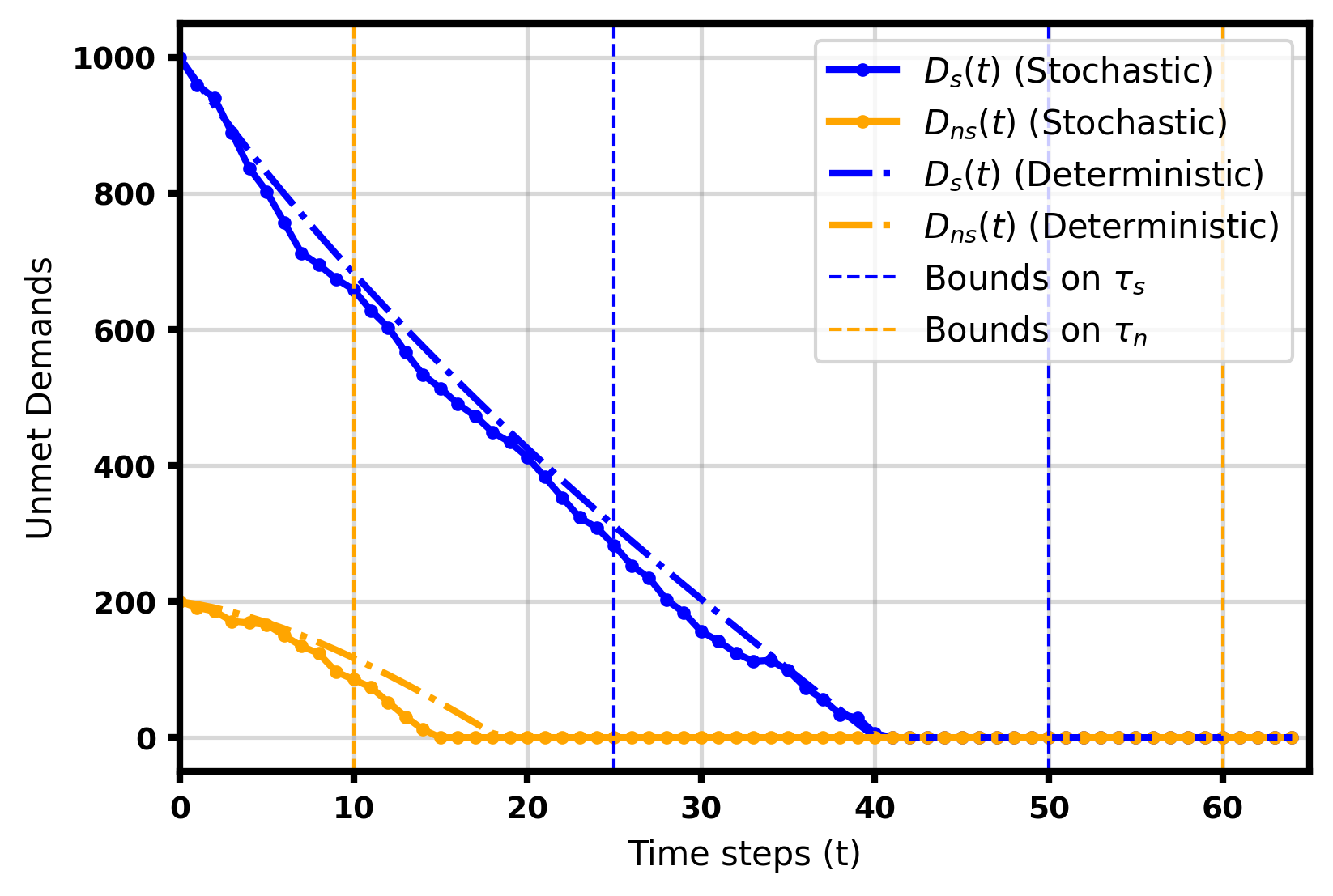}
      \caption{Localized surge under stochastic demands and supplies: Demand curves over time for $D_0 = 1000$, $d_0 = 200$, $\lambda = 30$ and $\mu = 50$. We set $k = 0.001$ and simulate over $T = 65$ time steps. For one realization of supplies and demands, $D_s(t)$ converges in $\tau_s = 42$ time steps while $D_{ns}(t)$ converges in $\tau_n = 20$ time steps (compared to $\tau_s = 41$ and $\tau_n = 20$ in the deterministic case).}
      \label{fig:stoch_local_surge}
\end{figure}
We replicate our experiments with stochastic demands and supplies for the same parameter combinations as in the deterministic case (Figures \ref{fig:stoch_spill_surge} and \ref{fig:stoch_local_surge}). We can see that the trends broadly generalize from the deterministic to the stochastic case, in terms of convergence and shape properties of $D_s(t)$ and $D_{ns}(t)$.

\subsection{Extension 2: Agent-based, Game-theoretic simulation of ridesharing system with real-time market matching and passenger movement}\label{sec:experiments_agentbased}

Our theoretical results, as well as our experimental results of Section~\ref{sec:experiments_stochastic}, rely on several strong assumptions. The first one is that, given the unmet demand $D_s(t)$ in the surge zone and the unmet demand $D_{ns}(t)$ in the outside zone, the number of riders moving every day is a function of $D_s - D_{ns}$. The second one is that drivers are \emph{not strategic} and arrive in each zone at a rate that is independent of demands and prices, while drivers may chase surges in real life. In this section, we provide experiments showing that our theoretical highlights are robust to our theoretical modeling choice, highlighting how our model of Section~\ref{sec:model} provides a reasonable, first-order approximation of the main effect we are modeling. Our experiments rely on an \emph{agent-based}, \emph{game-theoretic} model of how riders decide to walk out of the surge zone and how drivers decide between the surge and non-surge zones. 

\subsubsection{Game-theoretic Model.}\label{sec:agent_model}
Below, we describe our game-theoretic, game-theoretic model of interactions between riders, drivers, and the platform. 

\paragraph{Driver Model:} We consider a finite number of \emph{strategic} drivers. Drivers arrive in each time step at a rate of $2 \mu$, according to a \emph{Poisson} distribution\footnote{In this sense, we do not model drivers who strategically decide when to come online.}. However, drivers can strategically choose \emph{which of the surge and non-surge zones} to serve. Given price multipliers $P_s$ and $P_{ns}$ in the surge and non-surge zones respectively, each driver decides which zone to go to according to a \emph{multinomial logit model}, a seminal and practically well-motivated model of decision-making~\cite{hausman1984specification}: i.e., a driver goes to the surge zone with probability
\[
\gamma_s = \frac{\exp(\tau P_s)}{\exp(\tau P_s) + \exp(\tau P_{ns})},
\]
and to the non-surge zone with probability 
\[
\gamma_{ns} = \frac{\exp(\tau P_{ns})}{\exp(\tau P_s) + \exp(\tau P_{ns})},
\]
for some positive parameter $\tau > 0$.

\paragraph{Rider Model:} We consider a finite number of riders, arriving at a rate $\lambda$ in each zone, according to a \emph{Poisson} distribution. There is also an initial number of riders $D_0$ in the surge zone and $d_0$ in the non-surge zone. 

Each rider has a \emph{willingness to pay}, denoted by $v$, which determines whether they choose to accept a ride, given a price $p$. If $v \geq p$, then the rider can be matched with a ride at the current price, otherwise they continue to wait. Each rider is also associated with a \emph{cost} $c$ for moving from the surge zone to the non-surge zone. The costs $c$ are drawn i.i.d. from a distribution $\mathcal{D}$ with cumulative distribution function $F(.)$. In this case, if the prices set by the platform are $P_{ns}$ and $P_s$, a user with cost $c$ in the surge zone walks to the non-surge zone if and only if
\[
c \leq P_{s} - P_{ns}.
\]
In particular, the probability that a given user decides to walk to the non-surge zone is then given by $F(P_{s} - P_{ns})$. In our experiments, we choose $\mathcal{D}$ to be a \emph{symmetric} truncated Gaussian distribution (left-truncated at $0$) for which the target mean $\mathcal{D}_{mean}$ and standard deviation $\mathcal{D}_{std}$ are specified. $\mathcal{D}_{mean}$ and $\mathcal{D}_{std}$ directly influence what fraction of the current surge demand $D_s(t)$ chooses to relocate across the surge boundary.

\paragraph{Platform Prices:} We consider a platform using a uniform price multiplier $P_{s}$ in the surge zone and $P_{ns}$ in the non-surge zone. We are in particular interested in studying the price gap $\Delta P = P_s - P_{ns}$. Note that $\Delta P$ represents an additive surge premium over the non-surge price;~\citet{garg2022driver} have shown that additive surge pricing has desirable properties, such as being incentive-compatible for drivers. Additionally, ~\citet{ma2022spatio} demonstrate that spatial mispricing (i.e., prices in adjacent areas being vastly different) during price surges incentivizes drivers to ``cherry-pick" rides and is a leading cause of market failure. Therefore, guaranteeing low $\Delta P$ across adjacent zones is highly desirable, and we will show in this section how rider strategic behavior contributes to reducing this gap $\Delta P$.

We derive $\Delta P$ following the approach from the previous work of~\citet{yu2023price}, the platform set prices to satisfy a simple equilibrium condition: namely, the wait times in both zones should be roughly the same. Here, we consider a slight simplification of this condition, where we use \emph{the ratio of demand to supply} as a proxy for wait times.\footnote{If there are $D$ riders total to clear, $S$ drivers come at every time step, and $D \geq S$, then the average wait time is $D/S - 1$.} Letting $R_s(t)$ and $R_{ns}(t)$ be the average number of drivers going to the surge and non-surge zones (respectively) at time $t$, the platform's goal is to set equilibrium prices to induce
\[
\frac{D_s(t)}{R_s(t)} = \frac{D_{ns}(t)}{R_{ns}(t)}.
\]
Remember that $R_s(t)$ and $R_{ns}(t)$ are a function of the price multipliers, following a multinomial-logit-based decision process; given an arrival rate of $2 \mu$, we have that in expectation, 
\[
R_s(t) = 2 \mu  \cdot \frac{\exp(\tau P_{s})}{\exp(\tau P_s) + \exp(\tau P_{ns})},~~R_{ns}(t) = 2 \mu  \cdot \frac{\exp(\tau P_{ns})}{\exp(\tau P_s) + \exp(\tau P_{ns})}.
\]
The equilibrium condition for the price multipliers then becomes 
\[
\exp(\tau P_{ns}) \cdot \frac{D_s(t)}{D_{ns}(t)} = \exp(\tau P_{s}),
\]
yielding the following closed-form expression for the \emph{price gap} across the surge boundary:
\begin{align}
\Delta P = P_s - P_{ns} = \frac{1}{\tau} \log \left(\frac{D_s(t)}{D_{ns}(t)} \right).
\end{align}

When the total unmet demand $D_s(t) + D_{ns}(t)$ falls below the total supply, we note that the \emph{surge is cleared}, and in this case $\Delta P = 0$ (both zones return to normal pricing). 

Finally, platforms usually also need to impose a cap $M$ on the surge price either due to regulatory interventions ~\cite{Hawai} or to make sure that prices do not grow unbounded during emergencies ~\cite{Uber}. We incorporate the same into our pricing model.

\subsubsection{Experimental results.}
In this segment, we provide simulation results using the agent model described above where we capture how the unmet demands $D_s(t)$ and $D_{ns}(t)$ evolve over time. We compare these results directly against: 
\begin{itemize}
    \item Results from our theoretical model (Theory); and 
    \item A baseline where riders are \emph{not strategic}, called Non-Strategic Benchmark (NSB). 
    This baseline uses the game-theoretic model described in Section~\ref{sec:agent_model}, but with the caveat that riders \emph{never} move from the surge to the non-surge zone. Equivalently, riders have a large or infinite cost $c$ for moving across zones.
\end{itemize}
Henceforth, for brevity, we will refer to the strategic agent model as \textsf{(SA)}, the theoretical model as \textsf{(Theory}) and the non-strategic agent benchmark model as \textsf{(NSB)}.

\paragraph{Comparison to \textsf{Theory} model: } We explore the explaining power of our \textsf{theory} model with respect to the strategic agent simulation model (\textsf{SA}). In our \textsf{theory} model, the effect of agents relocating across the surge boundary is modelled using a function $f(\cdot)$ which depends on the demand differential between the surge and the non-surge zones. Recall that we have assumed a simple linear form of $f(\cdot)$ given by $f(x) = \frac{kx}{\mu}$. In Figure \ref{fig:SAvsTheory_diffk}, we present the demand dynamics of our \textsf{theory} model for the following set of parameters ($D_0 = 2000$, $d_0 = 250$, $\lambda = 30$, $\mu = 50$, $\mathcal{D}_{mean} = 15$, $\mathcal{D}_{std} = 8$) for different values of $k$. Observe that $k$ can be tuned to emulate the \textsf{SA} model very closely, which we show in Figure \ref{fig:SAvsTheory_optk}.
\begin{figure}[!ht]
    \centering
    \raisebox{35pt}{\parbox[b]{.05\textwidth}{}}%
    \subfloat[][$k = 0.0005$]{\includegraphics[width=.45\textwidth]{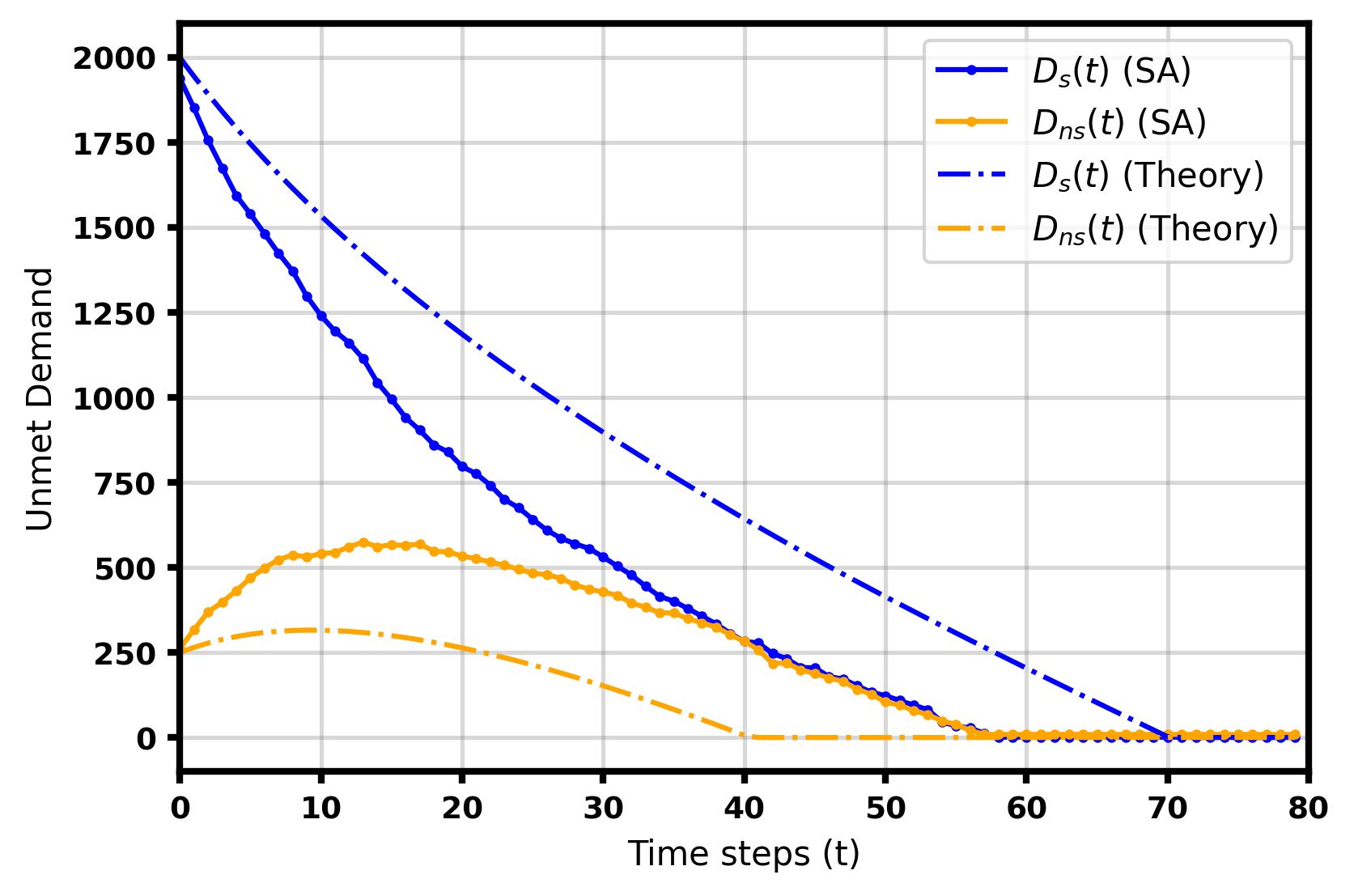}}\hfill
    \subfloat[][$k = 0.0010$]{\includegraphics[width=.45\textwidth]{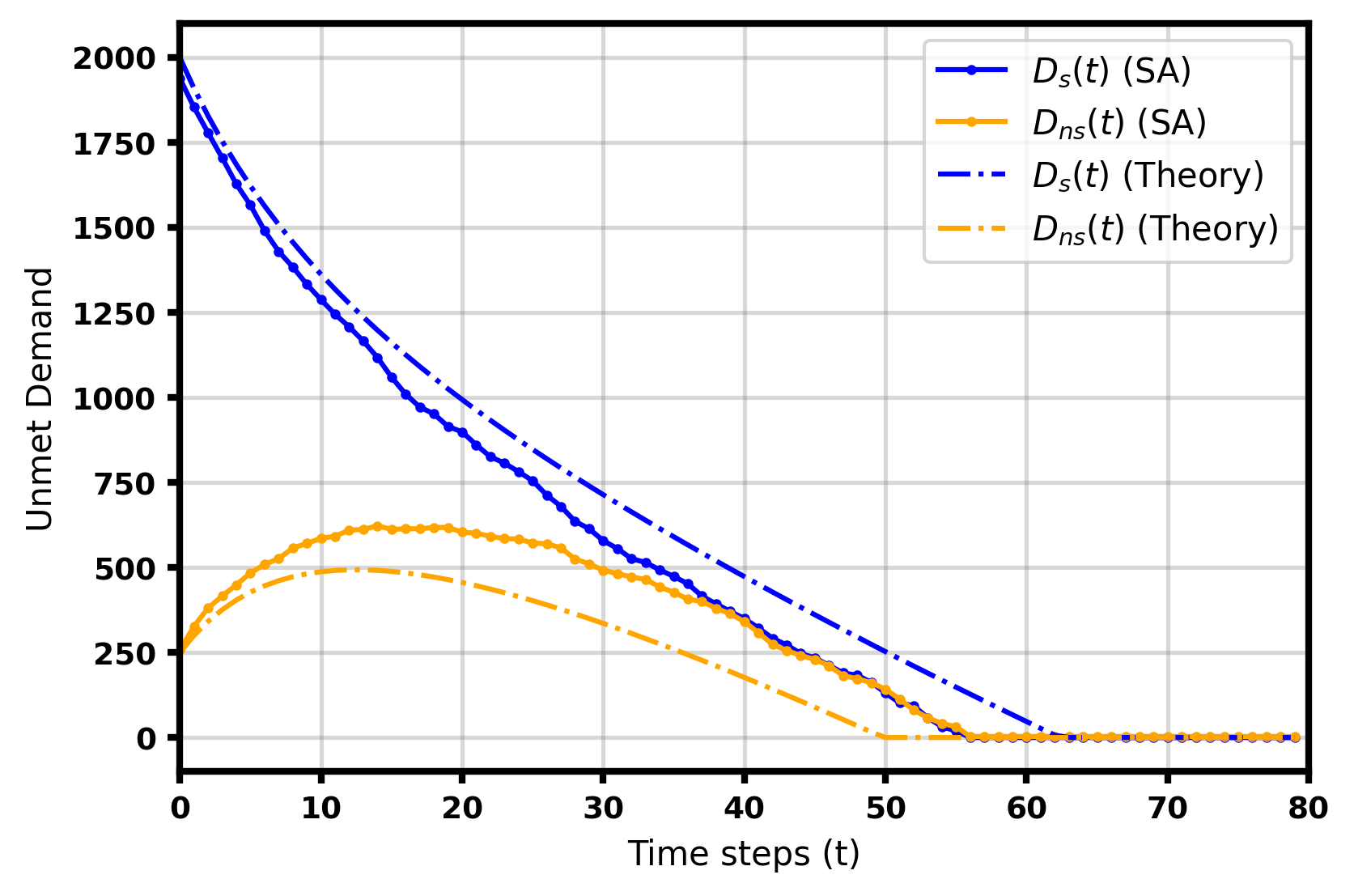}}\par
    \raisebox{35pt}{\parbox[b]{.05\textwidth}{}}%
    \subfloat[][$k = 0.0025$]{\includegraphics[width=.45\textwidth]{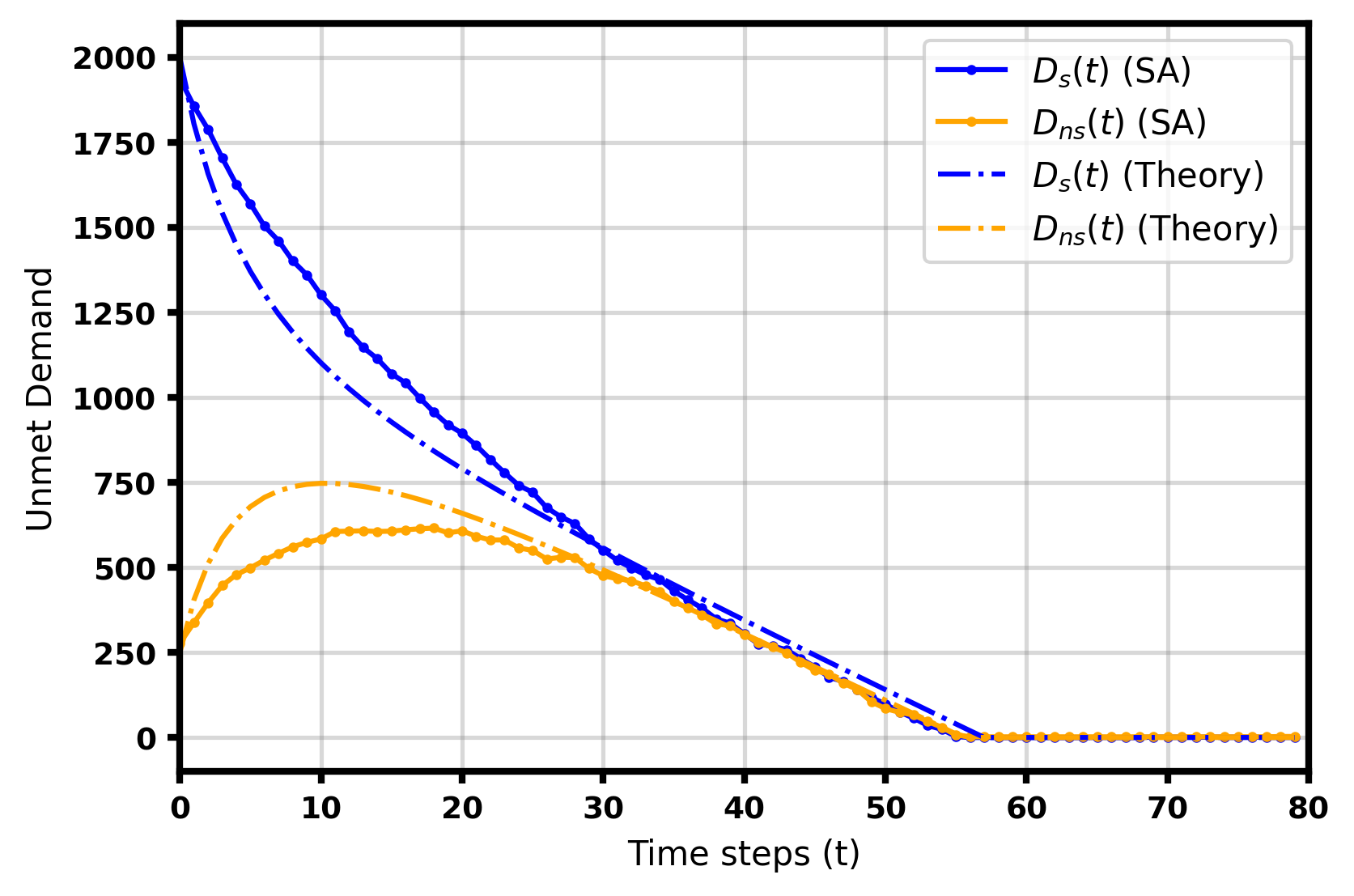}}\hfill
    \subfloat[][$k = 0.0040$]{\includegraphics[width=.45\textwidth]{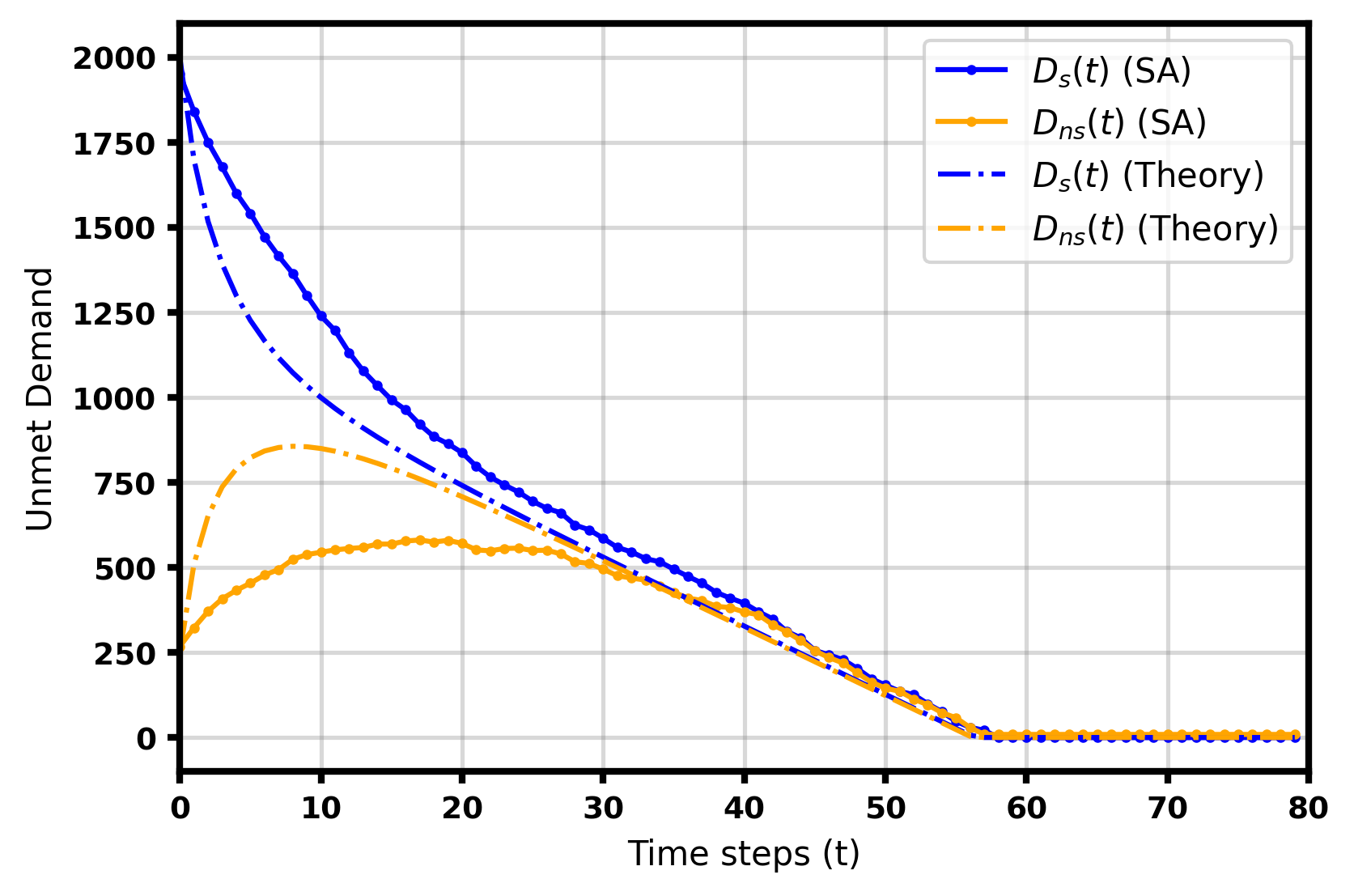}}
    \caption{Comparison of \textsf{SA} model with different instances of \textsf{Theory} model, each with a different value of $k$.}
    \label{fig:SAvsTheory_diffk}
\end{figure}
\begin{figure}
    \centering
    \includegraphics[width=0.5\linewidth]{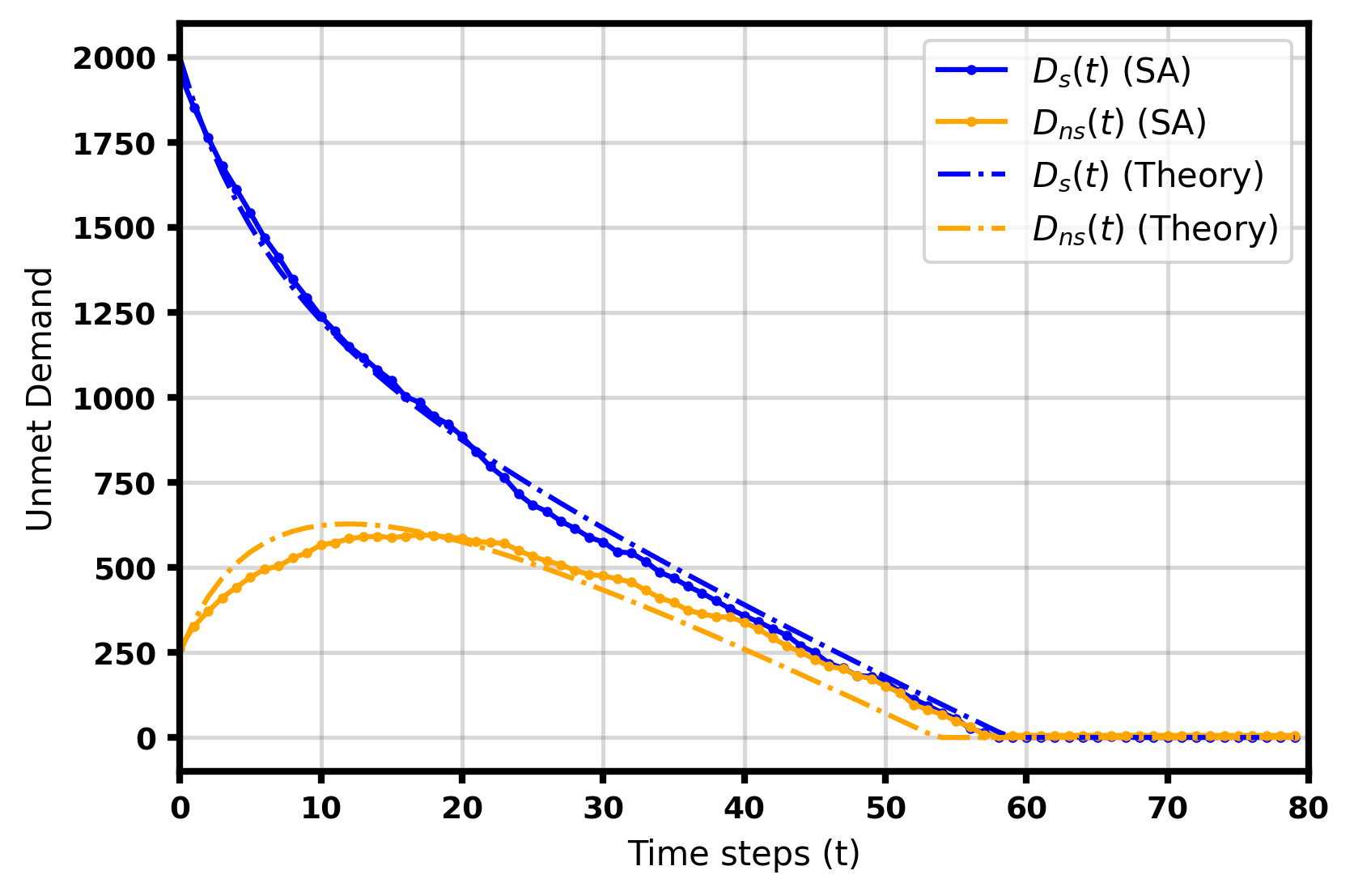}
    \caption{At $k = 0.0016$, we observe that the \textsf{Theory} model curves closely match the \textsf{SA} simulation model curves. This is for the following set of parameters: $D_0 = 2000$, $d_0 = 250$, $\lambda = 30$, $\mu = 50$, $\mathcal{D}_{mean} = 15$ and $\mathcal{D}_{std} = 8$. For any set of parameters, the optimal $k$ can be tuned.}
    \label{fig:SAvsTheory_optk}
\end{figure}

\paragraph{Comparison to Non-Strategic Benchmark Model: }We also compare the price curves and demand curves of our game-theoretic simulation model with strategic agents (\textsf{SA}) with its non-strategic counterpart (\textsf{NSB}) in Figure \ref{fig:SAvsNSB_basic}.
\begin{figure}[!ht]
    \centering
    \raisebox{35pt}{\parbox[b]{.05\textwidth}{}}%
    \subfloat[][Demand Curves]{\includegraphics[width=.48\textwidth]{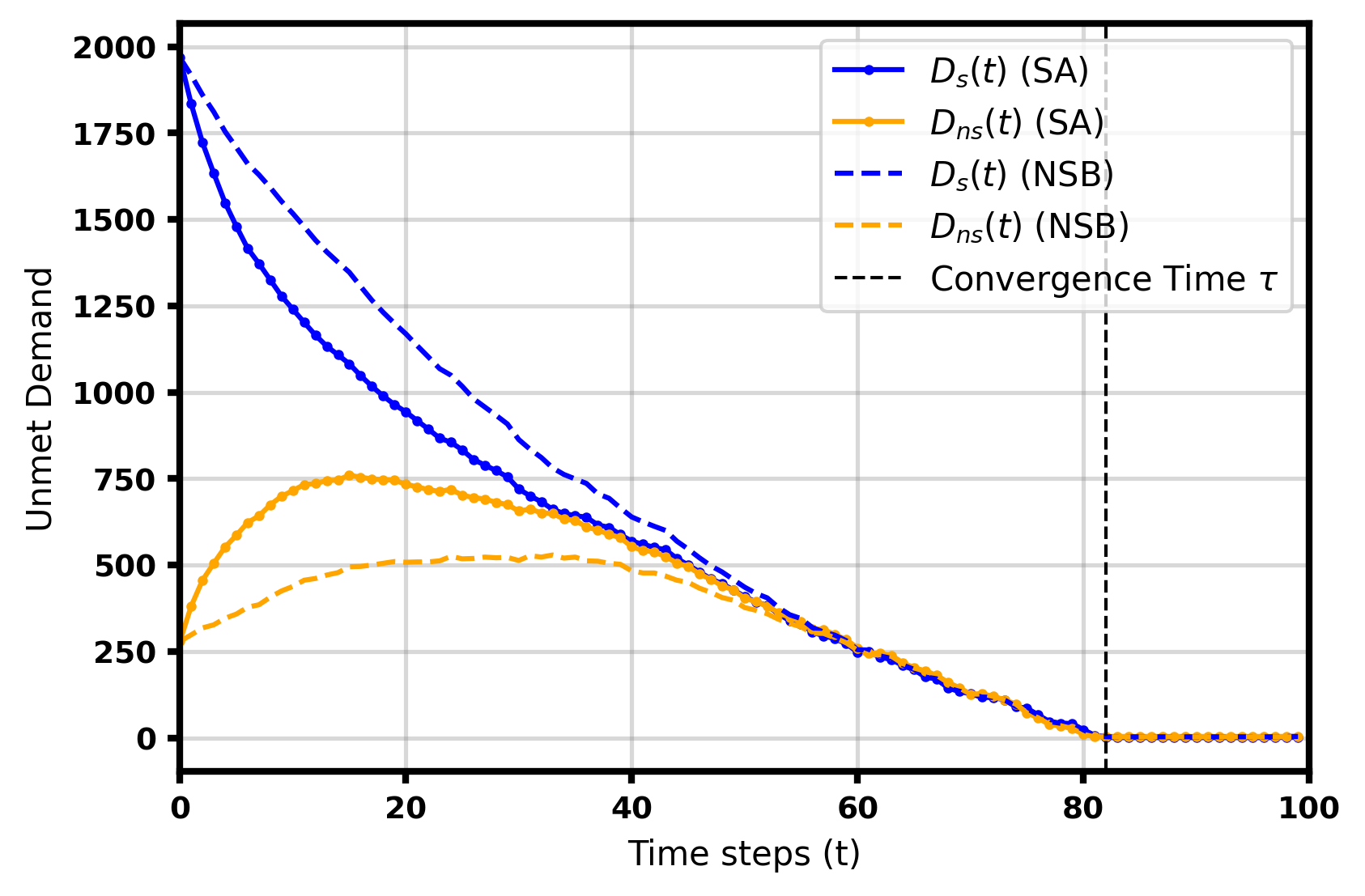}}\hfill
    \subfloat[][$\Delta P$ curves]{\includegraphics[width=.48\textwidth]{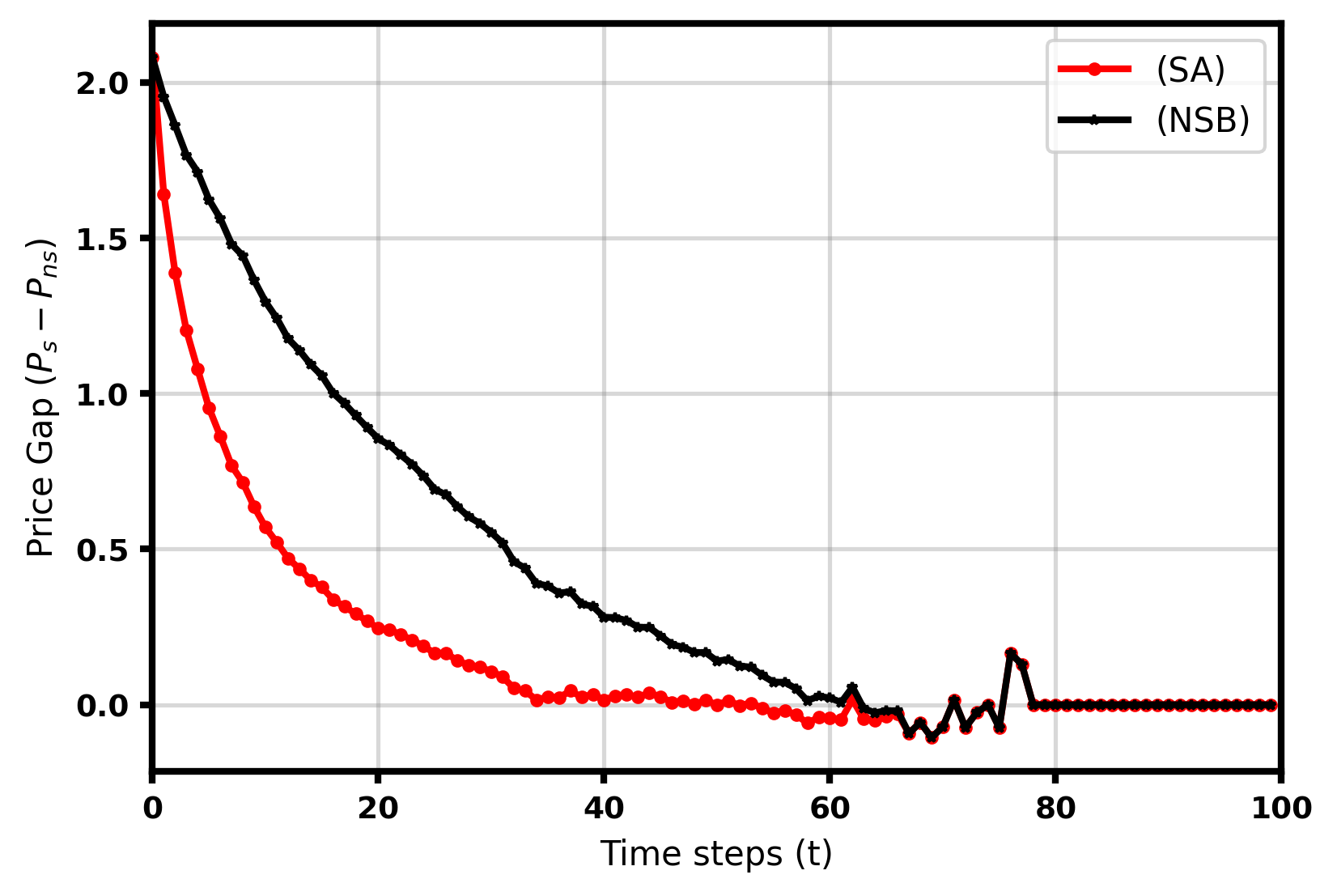}}\par
    \caption{Comparison of demand curves (a) and price curves (b) between the Strategic Agent Model (SA) and the Non-Strategic Agent Benchmark Model (NSB) with the following parameters: $D_0 = 2000$, $d_0 = 250$, $\lambda = 30$, $\mu = 45$, $\mathcal{D}_{mean} = 8$, $\mathcal{D}_{std} = 8$. Note that $\mathcal{D}$ represents the cost distribution of agents. The willingness to pay values of agents are drawn i.i.d. from a symmetric truncated normal distribution with mean $= 7$ and standard deviation $= 2$. Time to convergence $\tau = 82$ time steps.}
    \label{fig:SAvsNSB_basic}
\end{figure}
From the demand curves in (a), it is clear that the \textsf{SA} model leads to a redistribution of demand due to the effect of strategic agents moving from the surge zone to the non-surge zone. This also results in a more desirable price curve for the \textsf{SA} model with the price gap $\Delta P$ across the surge boundary diminishing and returning to normal levels faster than the \textsf{NSB} model. 

The heatmap in Figure \ref{fig:mean_deltaP} illustrates how much improvement the \textsf{SA} model achieves in terms of price gap $\Delta P$ relative to the \textsf{NSB} model for different combinations of parameters $\mathcal{D}_{mean}$ and $\mathcal{D}_{std}$. The blue regions in the heatmap represent high improvement regimes while the purple regions represent low improvement regimes. For example, there is an average 75\% improvement in \textsf{SA} price gaps compared to \textsf{NSB} for the parameter combinations used in Figure \ref{fig:SAvsNSB_basic}. Observe that the highest improvements are achieved at low values of $\mathcal{D}_{mean}$. This is because low mean values of the waiting cost distribution indicate that a large fraction of the riders have an incentive to cross the surge boundary. This leads to demand redistribution and lowering of the price gap. On the other hand, at high values of $\mathcal{D}_{mean}$ with low $\mathcal{D}_{std}$, we see the smallest levels of improvement. This is again intuitive as a high mean waiting cost coupled with a small standard deviation implies that most riders have high waiting costs and are unwilling to move. Therefore, the demand remains significantly higher in the surge zone for longer periods of time, leading to higher price gaps on average, and marginal improvements compared to the \textsf{NSB} model. Finally, at high values of $\mathcal{D}_{std}$, there is a mix of riders with low and high waiting costs due to the high variability in the distribution of costs. As a result, the price gaps become invariant of $\mathcal{D}_{mean}$ and stabilize (between $-50~\%$ and $-70~\%$) as we move towards the right of the map. 

\begin{figure}
    \centering
    \includegraphics[width=0.5\linewidth]{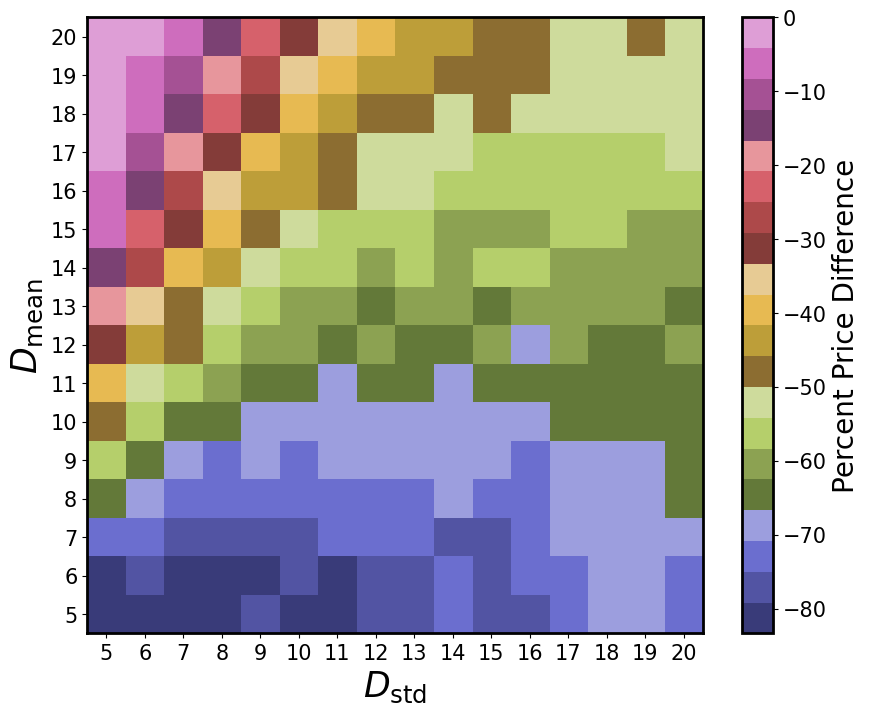}
    \caption{Relative Difference in $\Delta P$ between the SA and NSB models, given by $\frac{\Delta P_{SA} - \Delta P_{NSB}}{\Delta P_{NSB}} \times 100$, for different combinations of $\mathcal{D}_{mean}$ and $\mathcal{D}_{std}$. Reported values are time averages over $T = 500$ time steps. A more negative ratio indicates a higher degree of improvement. Demand and supply parameters are given by $D_{0} = 2000$, $d_{0}= 250$, $\mu = 45$ and $\lambda = 30$.}
    \label{fig:mean_deltaP}
\end{figure}
\paragraph{Effects of different parameters on the demand and price curves:} We now evaluate our game-theoretic simulation models for different parameter combinations where the parameters of interest are the exogenous demand and supply in each zone, the mean and scale parameters of the agents' cost distribution $\mathcal{D}$ and the initial demands $D_0$ and $d_0$. 

\begin{itemize}
\item Ratio of Supply to Exogenous Demand $\left(\frac{\mu}{\lambda}\right)$: For this set of experiments, we fix $D_0 = 2000$, $d_0 = 250$, $\mathcal{D}_{mean} = 12$, $\mathcal{D}_{std} = 8$ and $\lambda = 30$. We vary the supply rate $\mu$ to evaluate the performance of the system for different supply-to-demand ratios. 
\begin{figure}[!ht]
    \centering
    \raisebox{35pt}{\parbox[b]{.05\textwidth}{}}%
    \subfloat[][$\frac{\mu}{\lambda} = 1.25$, $\tau = 170$]{\includegraphics[width=.32\textwidth]{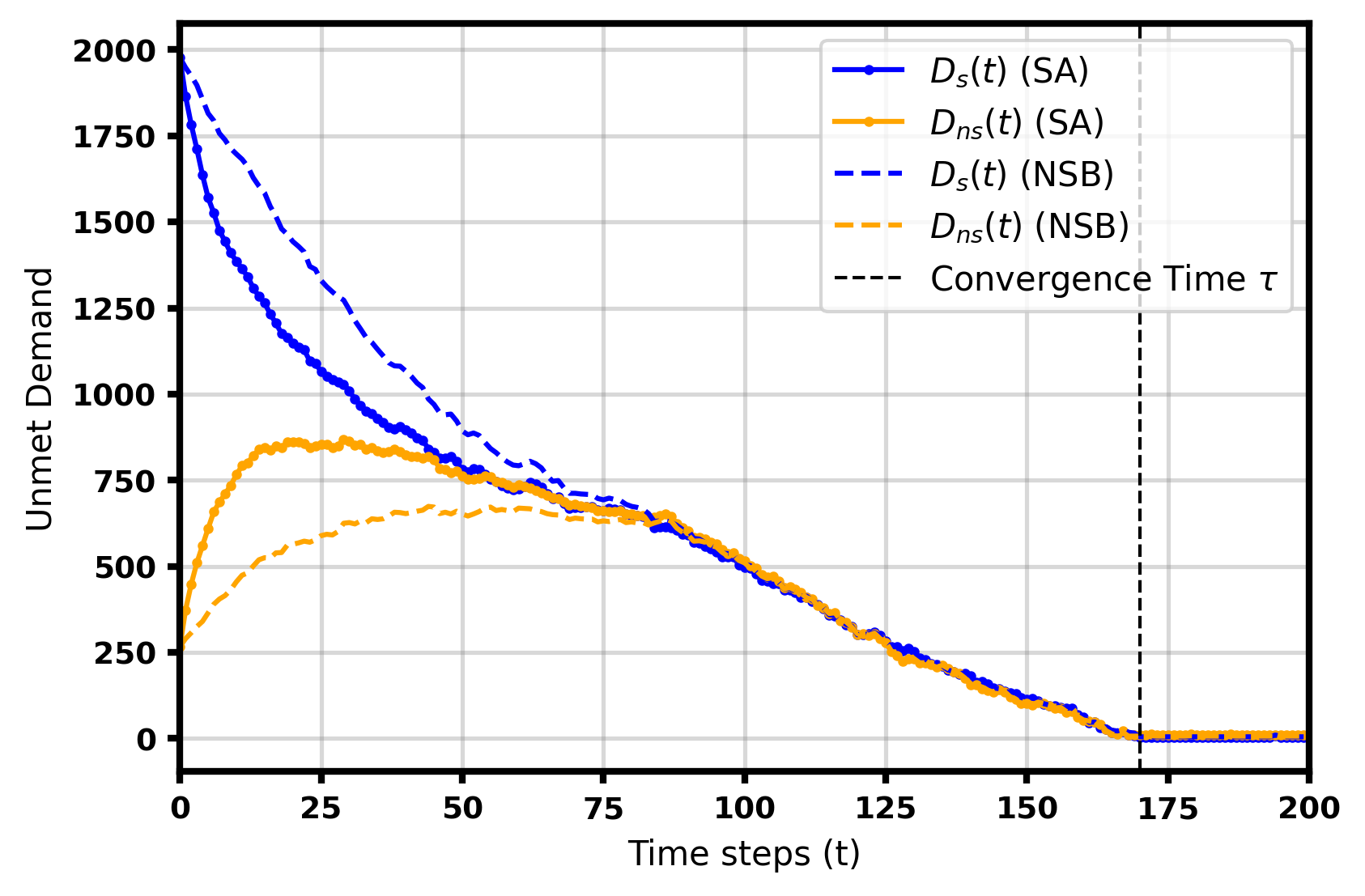}}\hfill
    \subfloat[][$\frac{\mu}{\lambda} = 1.8$, $\tau = 45$]{\includegraphics[width=.32\textwidth]{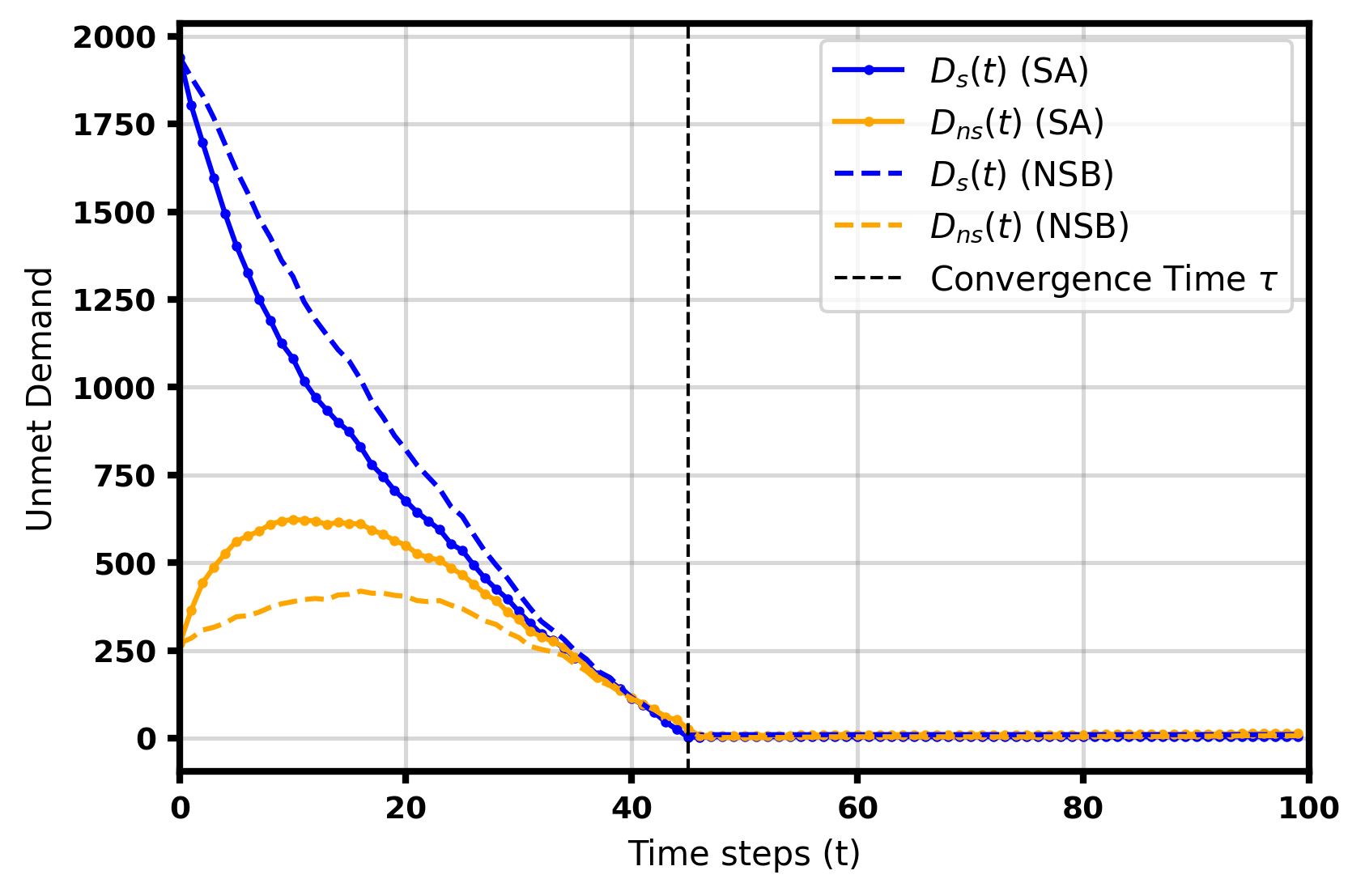}}\hfill
    \subfloat[][$\frac{\mu}{\lambda} = 2.5$, $\tau = 24$]{\includegraphics[width=.32\textwidth]{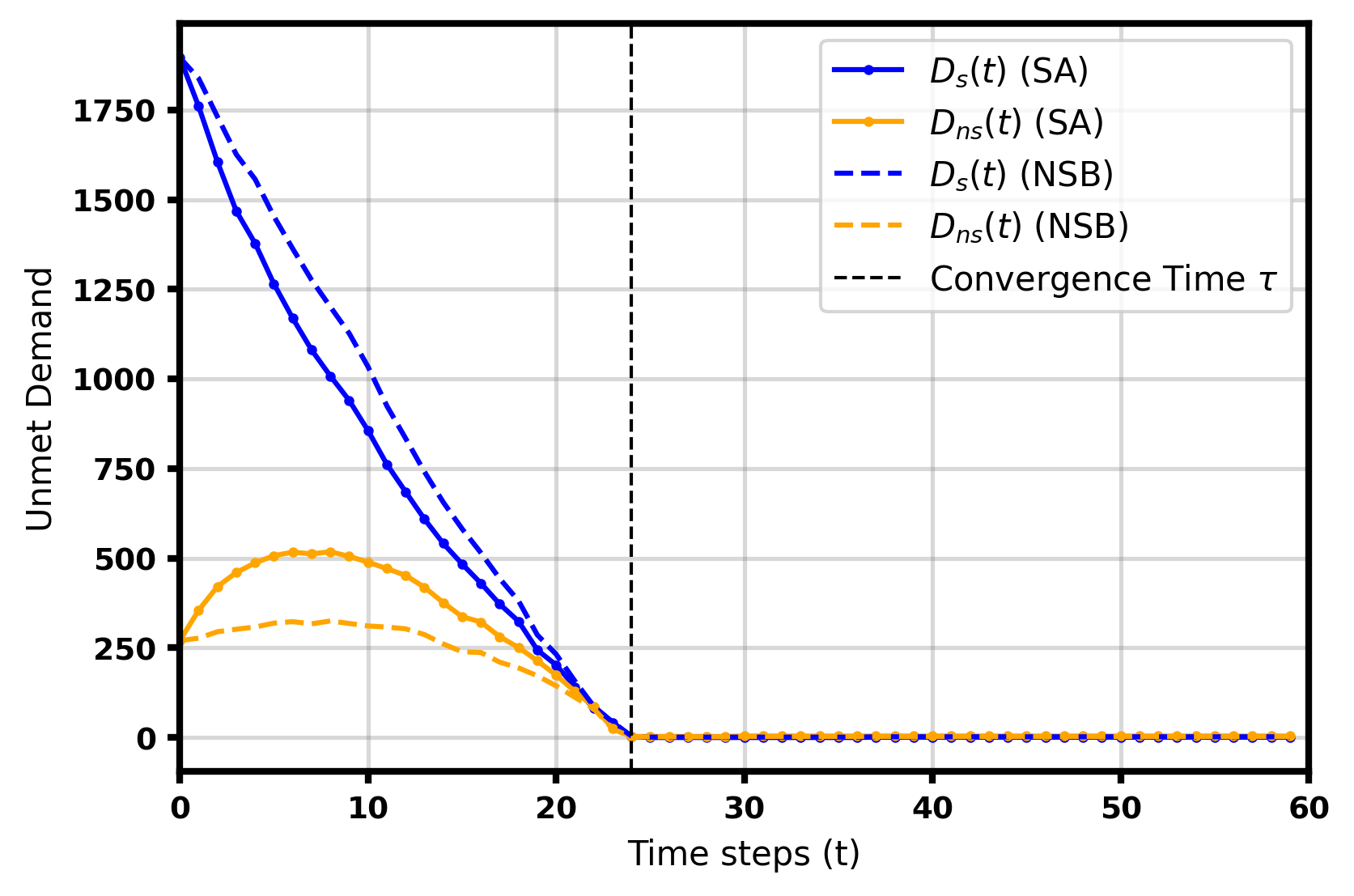}}\par
    \raisebox{35pt}{\parbox[b]{.05\textwidth}{}}%
    \subfloat[][$\frac{\mu}{\lambda} = 1.25$]{\includegraphics[width=.32\textwidth]{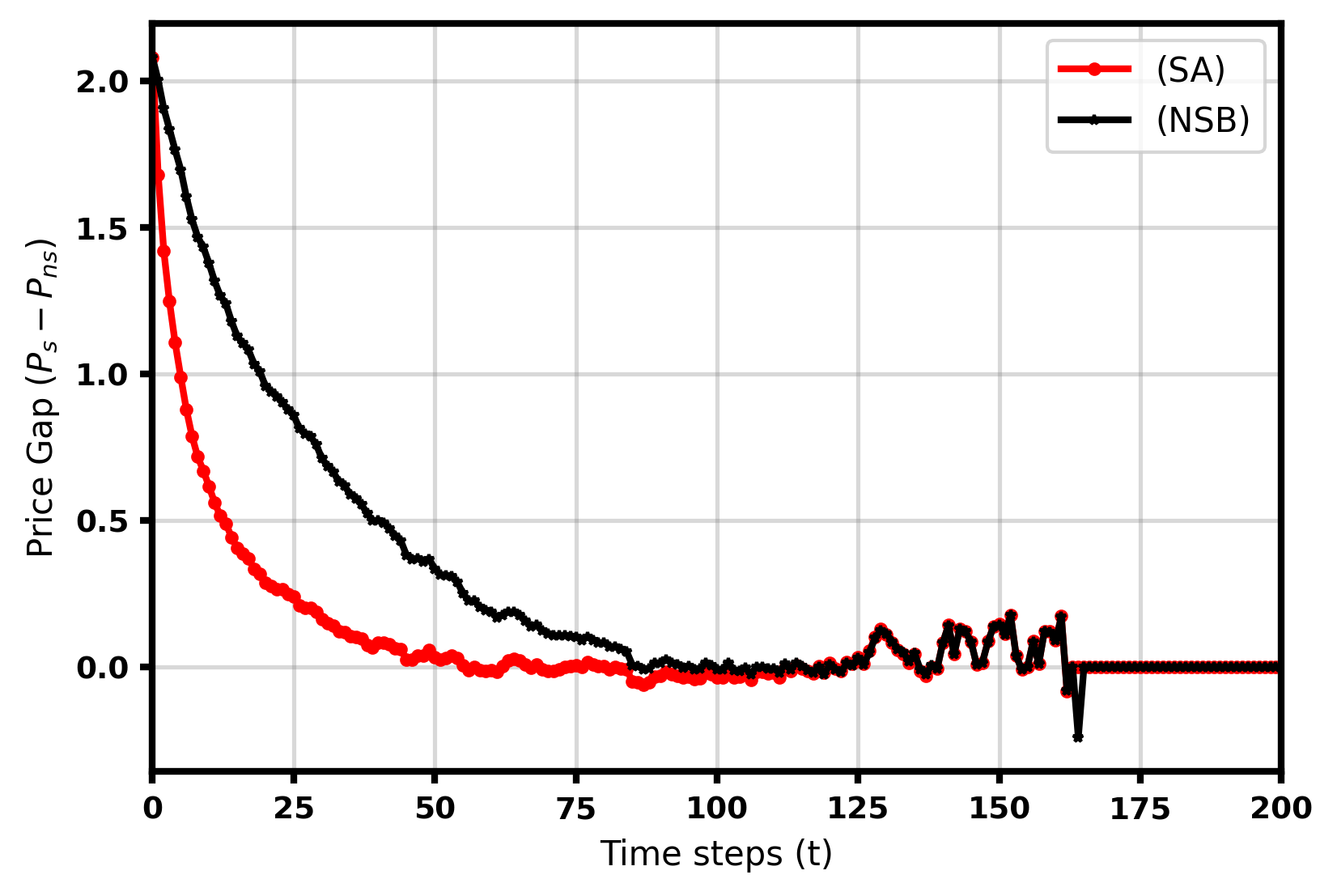}}\hfill
    \subfloat[][$\frac{\mu}{\lambda} = 1.8$]{\includegraphics[width=.32\textwidth]{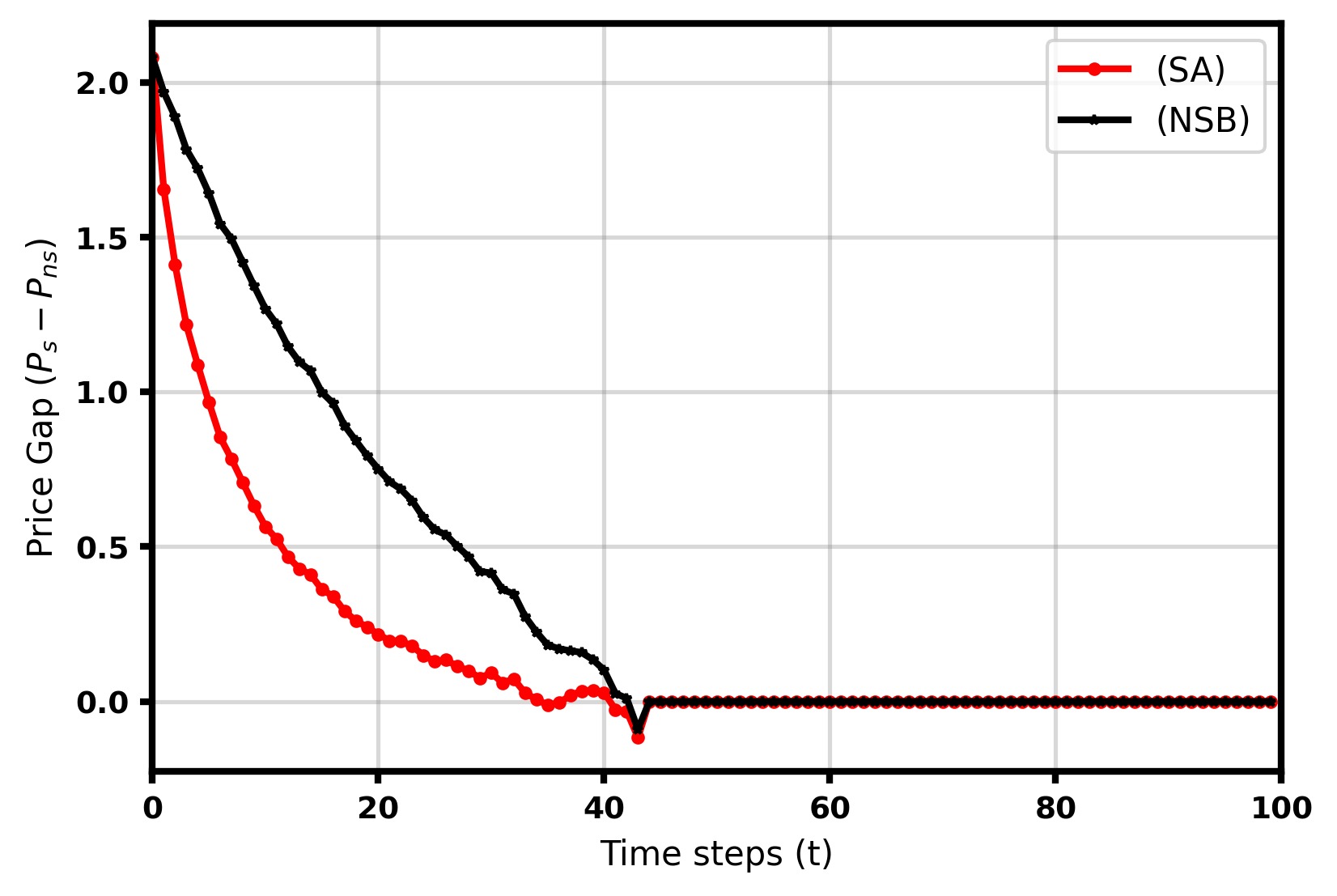}}\hfill
    \subfloat[][$\frac{\mu}{\lambda} = 2.5$]{\includegraphics[width=.32\textwidth]{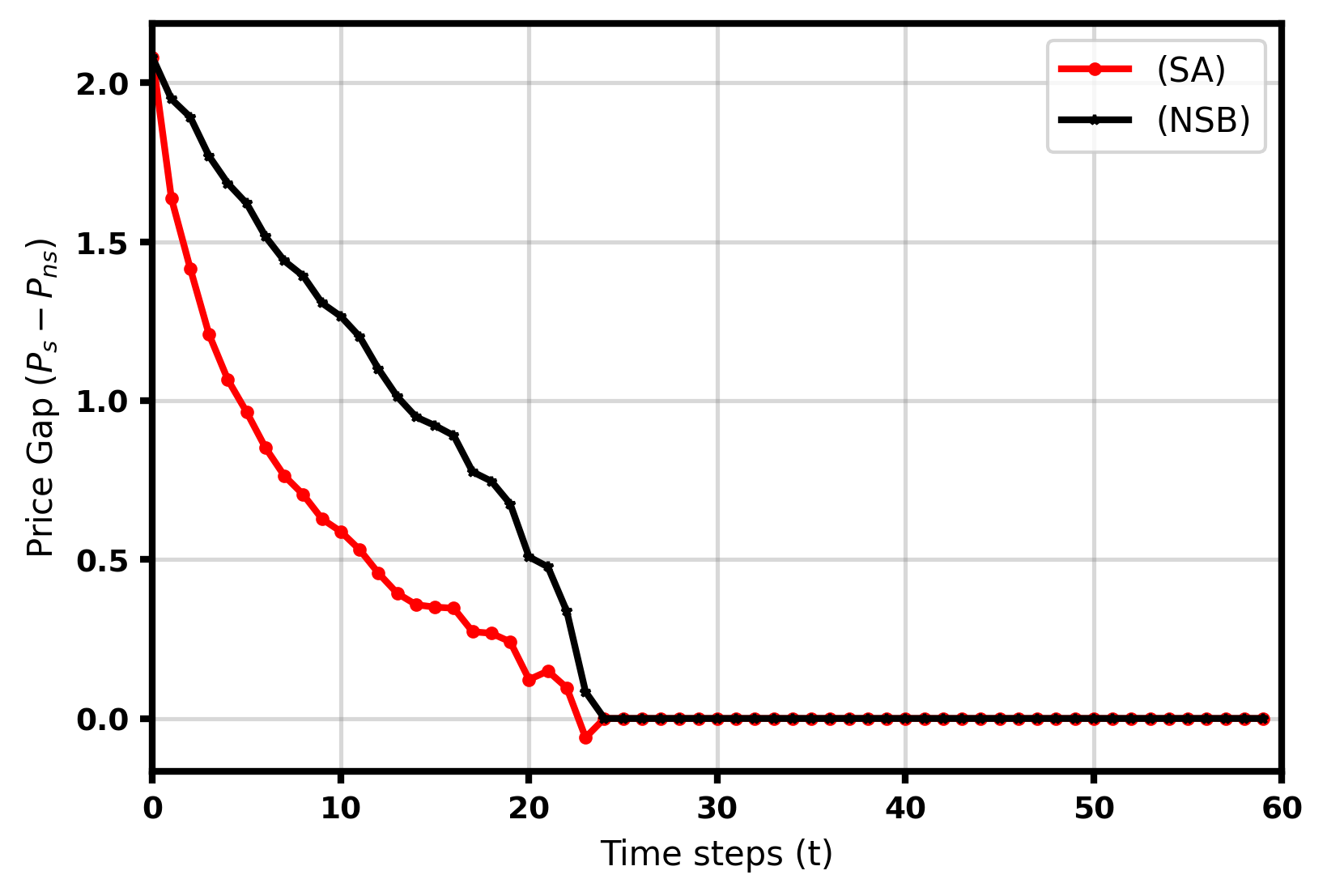}}
    \caption{Price and demand curves for \textsf{SA} and \textsf{NSB} models for different supply-demand ratios.}
    \label{fig:SAvsNSB_supply}
\end{figure}
As we increase the supply $\mu$ relative to the exogenous demand $\lambda$, the excess supply in each time period increases, leading to faster dissipation of surge and restoration of prices and unmet demands to normal levels. This can be observed by the decreasing values of $\tau$ in Figure \ref{fig:SAvsNSB_supply} as we go from left to right. 

\item \emph{Ratio of Initial Demands $D_0$ and $d_0$:} For this set of experiments, we fix $d_0 = 250$, $\mathcal{D}_{mean} = 12$, $\mathcal{D}_{std} = 8$ and $\lambda = 30$ and $\mu = 60$. $D_0$ is varied to capture different relative sizes of uncleared demands in the two zones. The ratio of $D_0$ to $d_0$ represents the magnitude of the demand disparity caused by the surge. As $D_0$ increases, the maximum $\Delta P$ is found to increase. This is because the large demand disparity forces the platform to set much higher prices in the surge zone to capture a larger proportion of the total supply and still equalize waiting time across the two zones. Additionally, since $\mu$ and $\lambda$ remain unchanged, the total demand clearing rate in the system is still the same but there is larger demand to be cleared. So, the system takes much longer to dissipate the surge, i.e., $\tau$ increases from left to right. Refer to Figure \ref{fig:SAvsNSB_demand}.
\begin{figure}[!ht]
    \centering
    \raisebox{35pt}{\parbox[b]{.05\textwidth}{}}%
    \subfloat[][$\frac{D_0}{d_0} = 5$, $\tau = 24$]{\includegraphics[width=.32\textwidth]{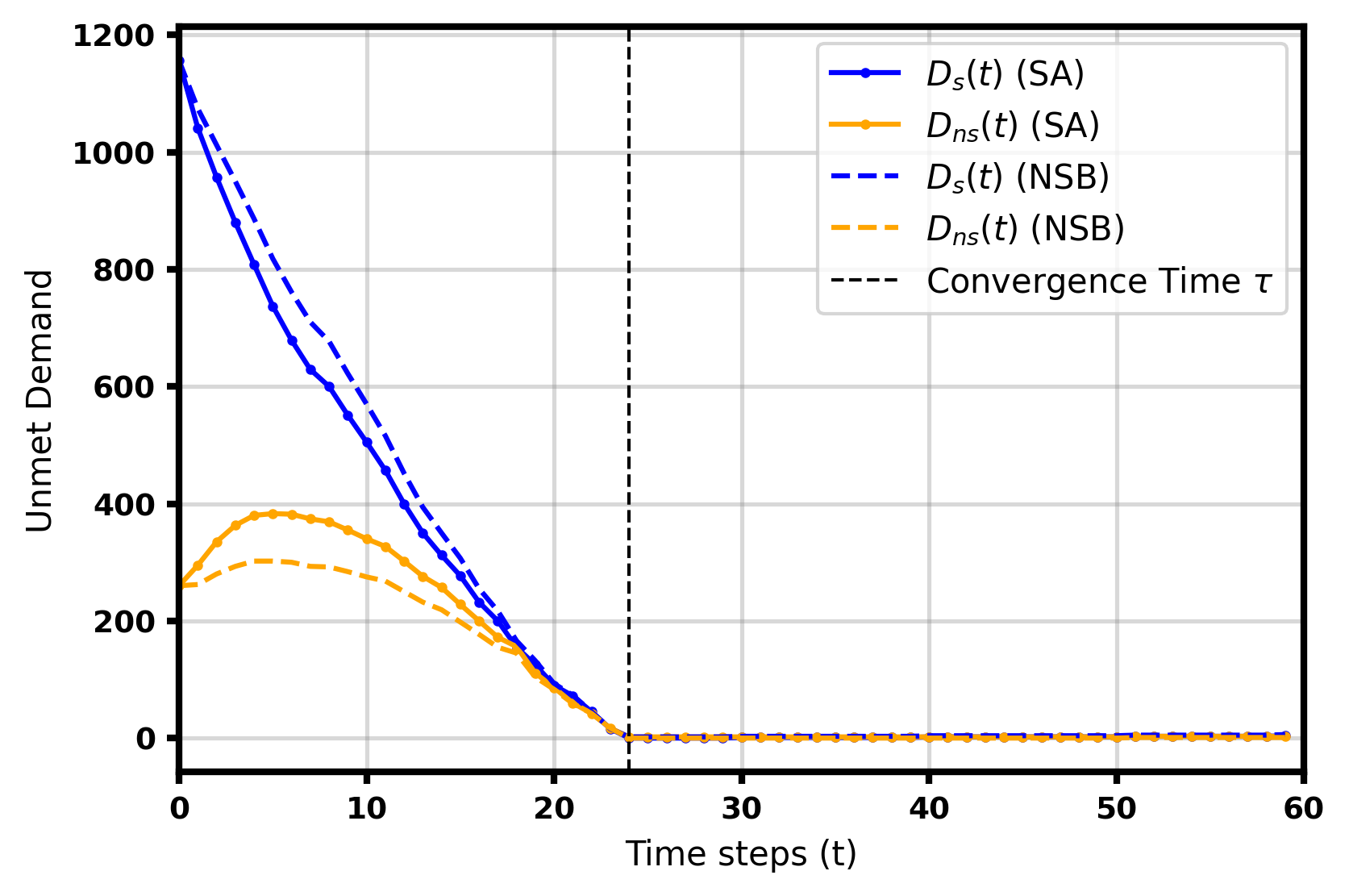}}\hfill
    \subfloat[][$\frac{D_0}{d_0} = 10$, $\tau = 47$]{\includegraphics[width=.32\textwidth]{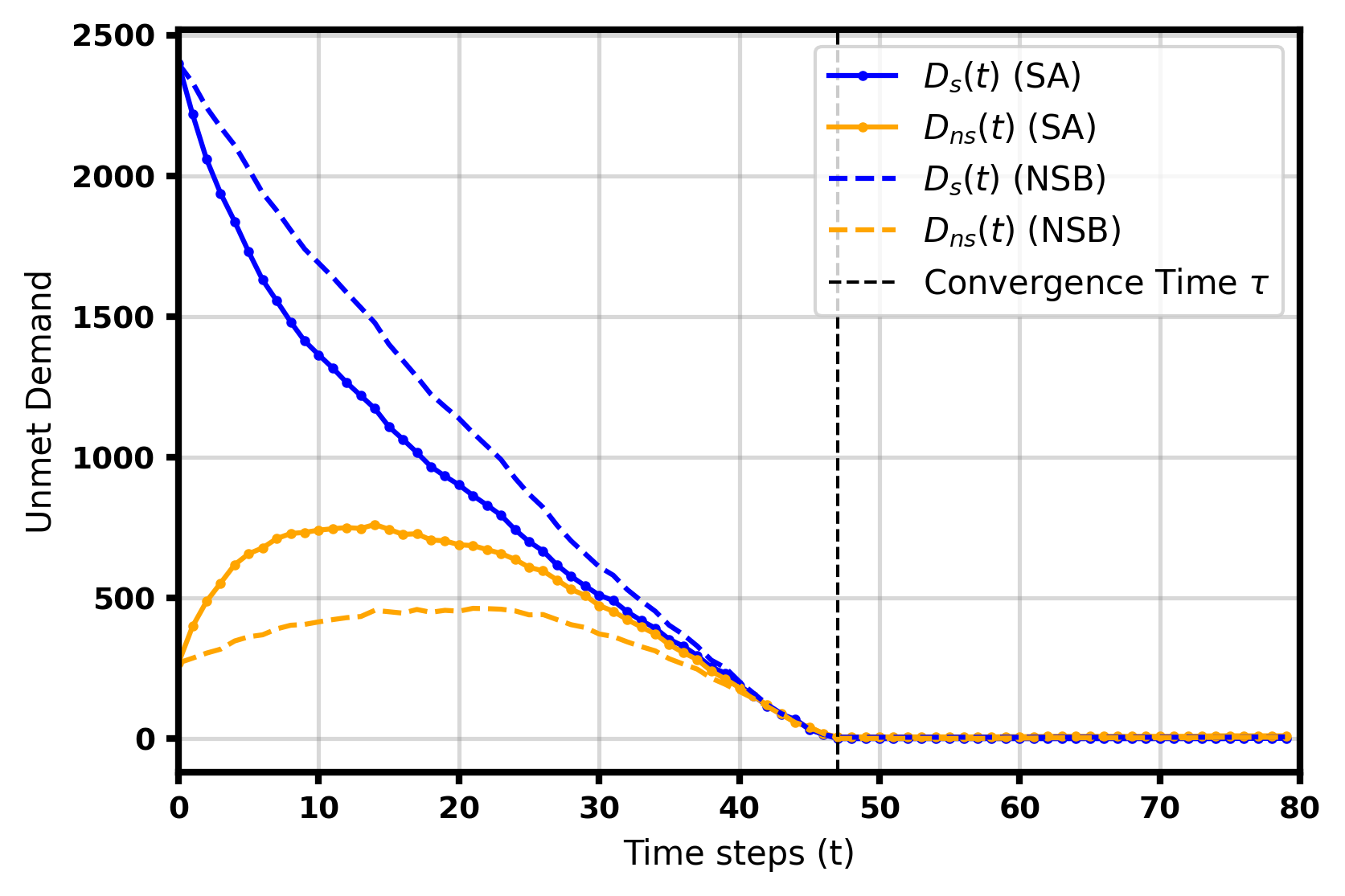}}\hfill
    \subfloat[][$\frac{D_0}{d_0} = 20$, $\tau = 91$]{\includegraphics[width=.32\textwidth]{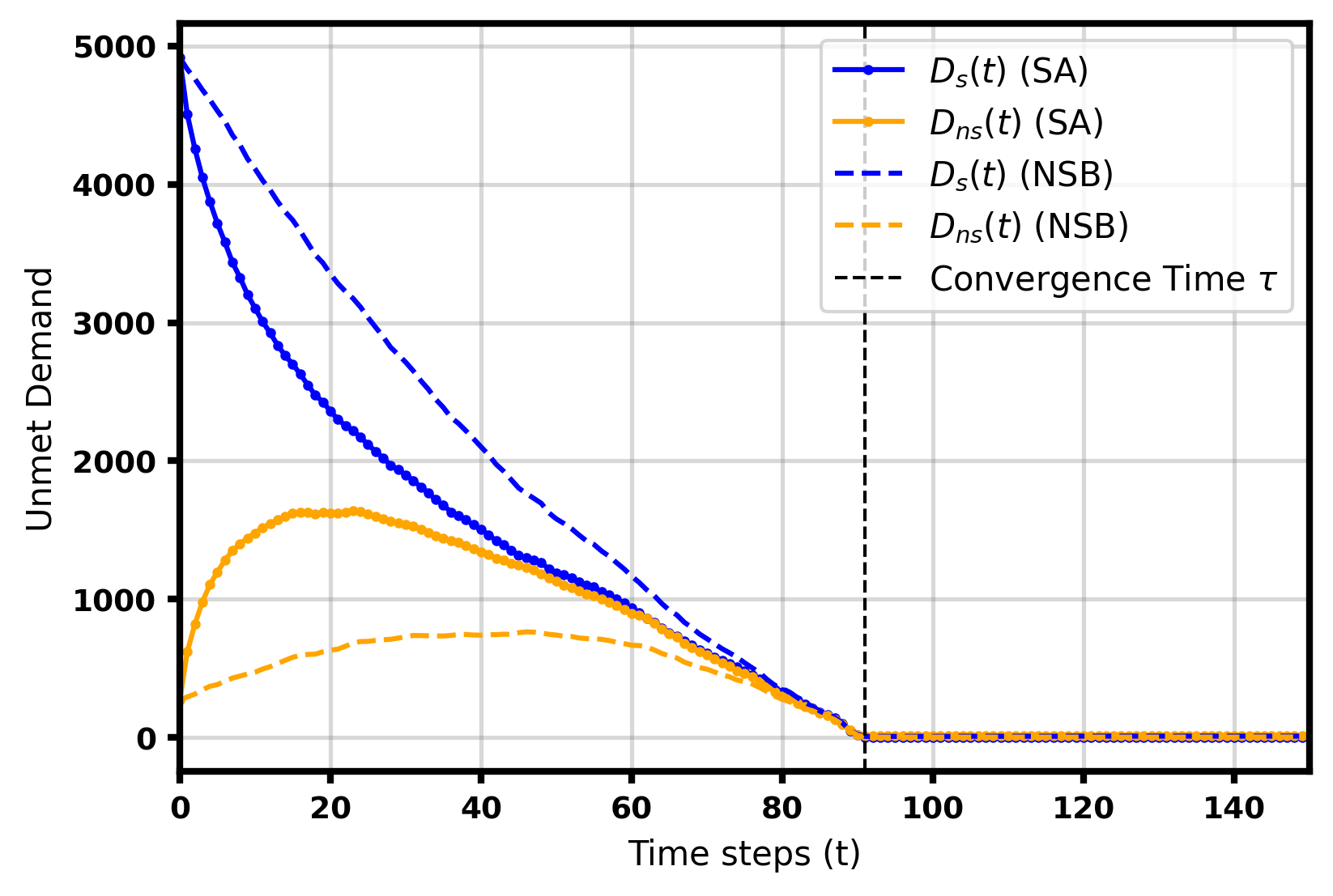}}\par
    \raisebox{35pt}{\parbox[b]{.05\textwidth}{}}%
    \subfloat[][$\frac{D_0}{d_0} = 5$]{\includegraphics[width=.32\textwidth]{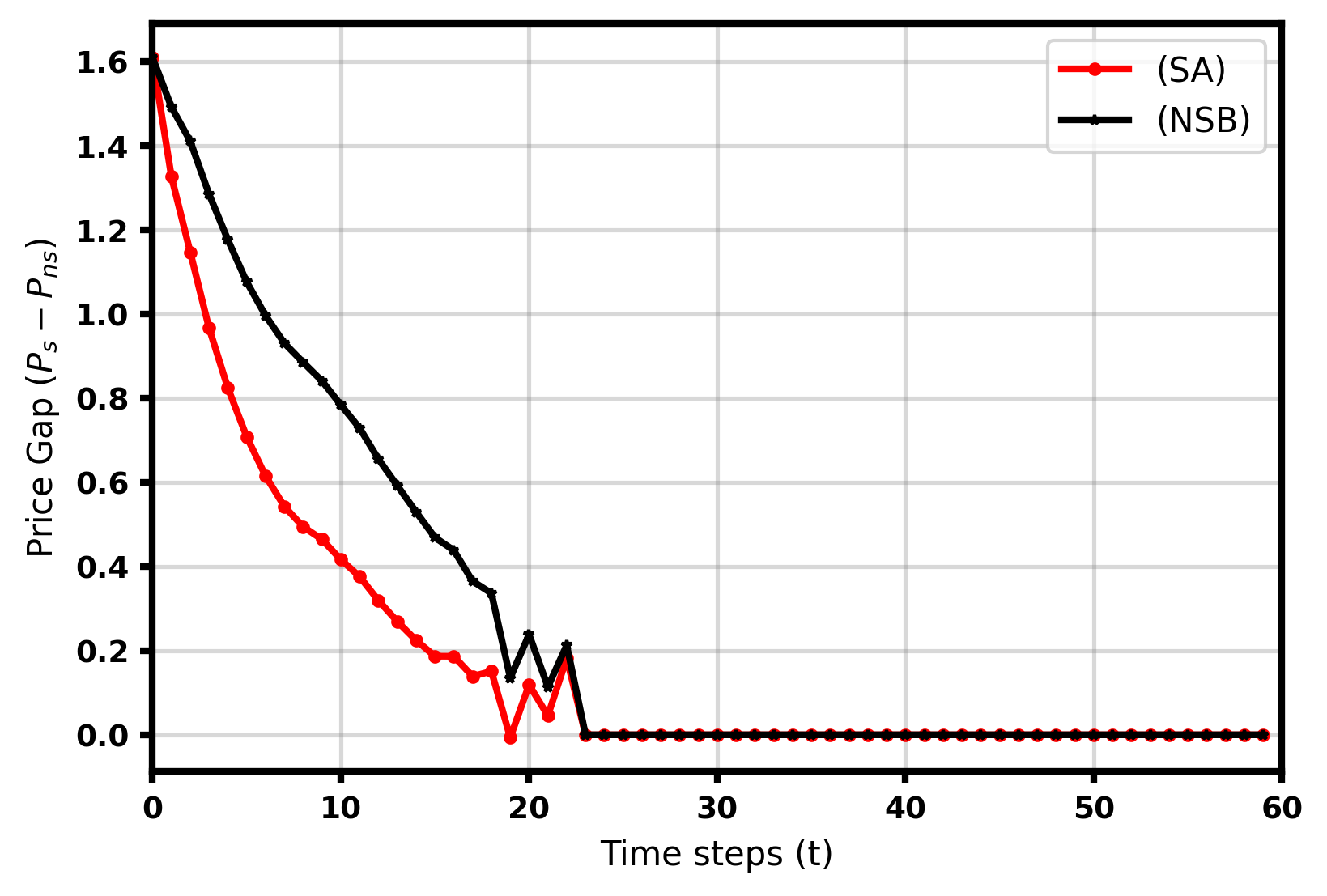}}\hfill
    \subfloat[][$\frac{D_0}{d_0} = 10$]{\includegraphics[width=.32\textwidth]{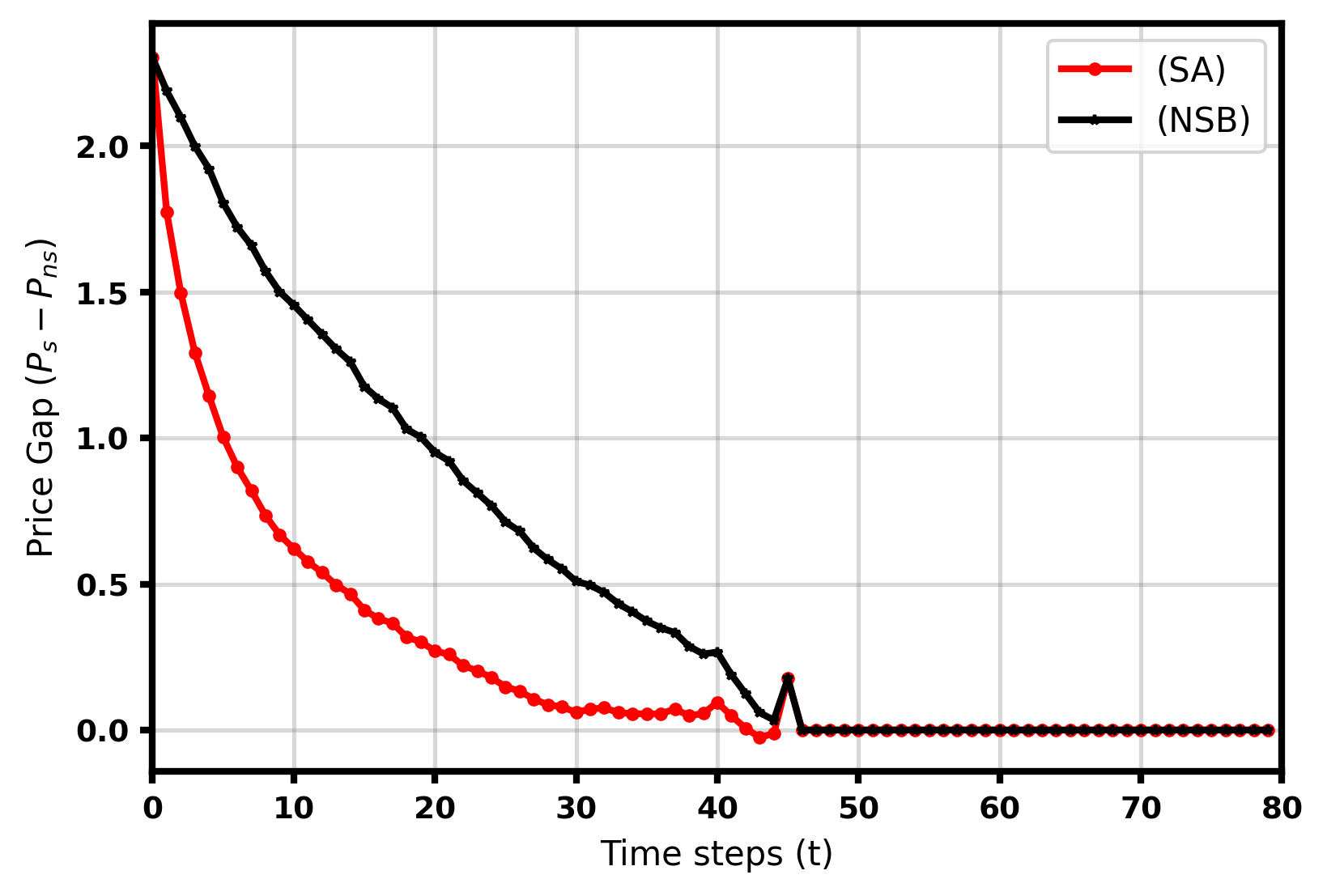}}\hfill
    \subfloat[][$\frac{D_0}{d_0} = 20$]{\includegraphics[width=.32\textwidth]{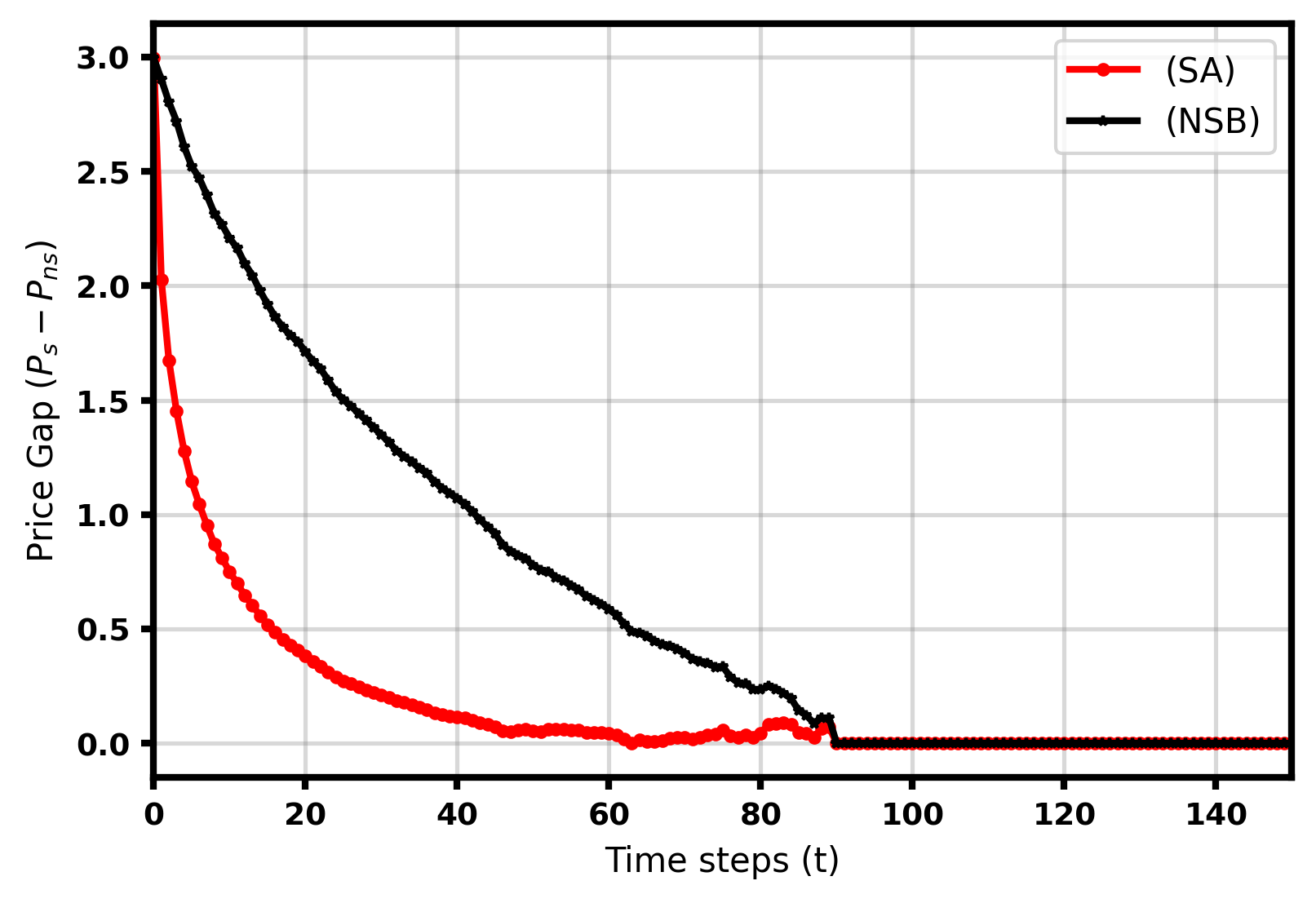}}
    \caption{Price and demand curves for \textsf{SA} and \textsf{NSB} models for different ratios of initial demands.}
    \label{fig:SAvsNSB_demand}
\end{figure}

\item \emph{Effect of the mean parameter of distribution $\mathcal{D}$:}
We can vary parameters of agents' cost distribution $\mathcal{D}$ keeping all other system parameters fixed with values: $D_0 = 2000$, $d_0 = 250$, $\mu = 60$, $\lambda = 30$. Recall that the parameters of $\mathcal{D}$ regulate what fraction of agents would choose to relocate in each time period as a function of the price gap $\Delta P$. For the results in Figure \ref{fig:SAvsNSB_d}, we fix $\mathcal{D}_{std}$ = 5 and vary the $\mathcal{D}_{mean}$. As $\mathcal{D}_{mean}$ increases, we see a lesser number of strategic agents leaving the surge zone, causing the \textsf{SA} and \textsf{NSB} curves to approach each other as we move further to the right. Note that changing the parameters of $\mathcal{D}$ has no impact on the convergence times $\tau$, as expected.  
\begin{figure}[!ht]
    \centering
    \raisebox{35pt}{\parbox[b]{.05\textwidth}{}}%
    \subfloat[][$\mathcal{D}_{mean} = 5$, $\tau = 36$]{\includegraphics[width=.32\textwidth]{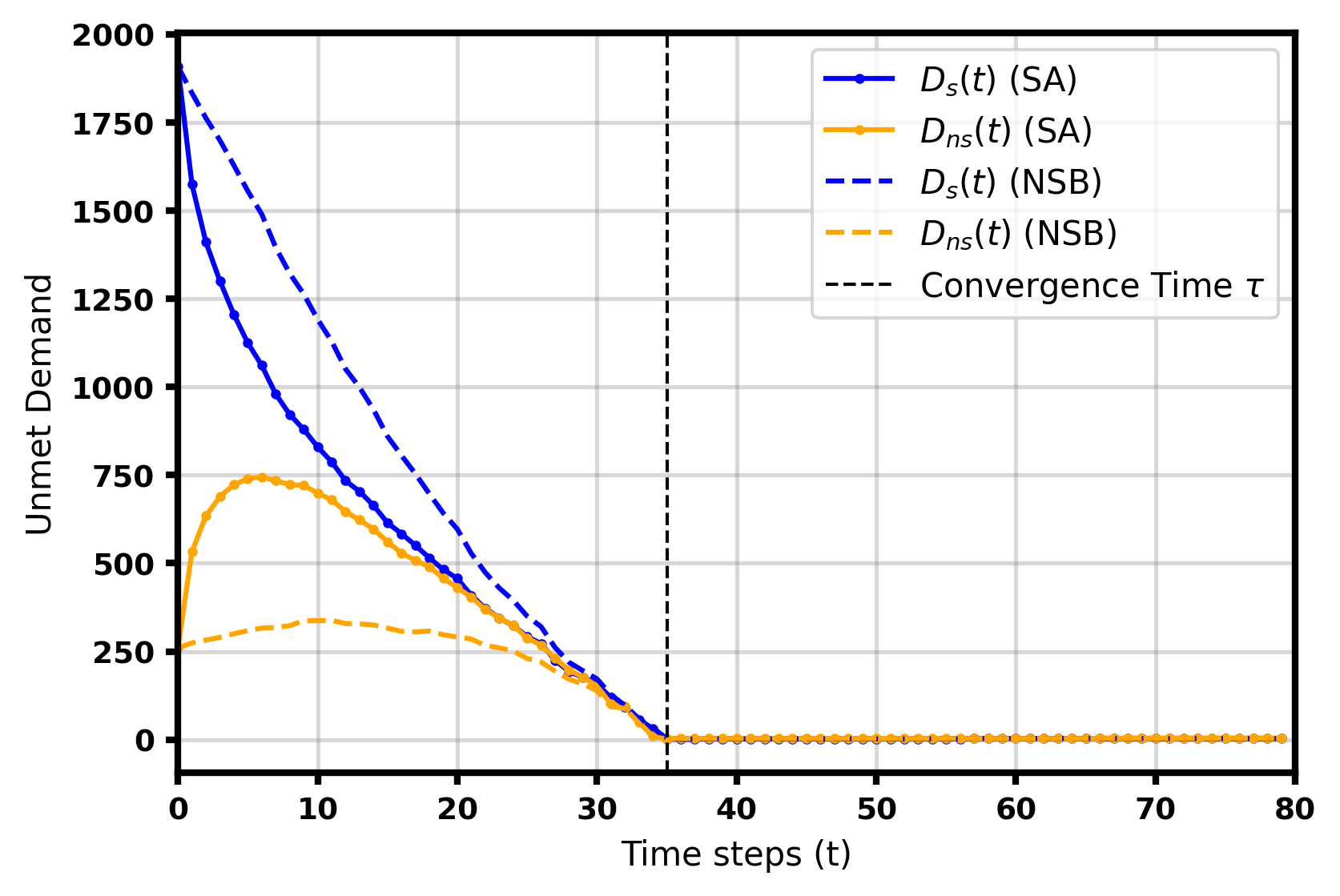}}\hfill
    \subfloat[][$\mathcal{D}_{mean} = 10$, $\tau = 37$]{\includegraphics[width=.32\textwidth]{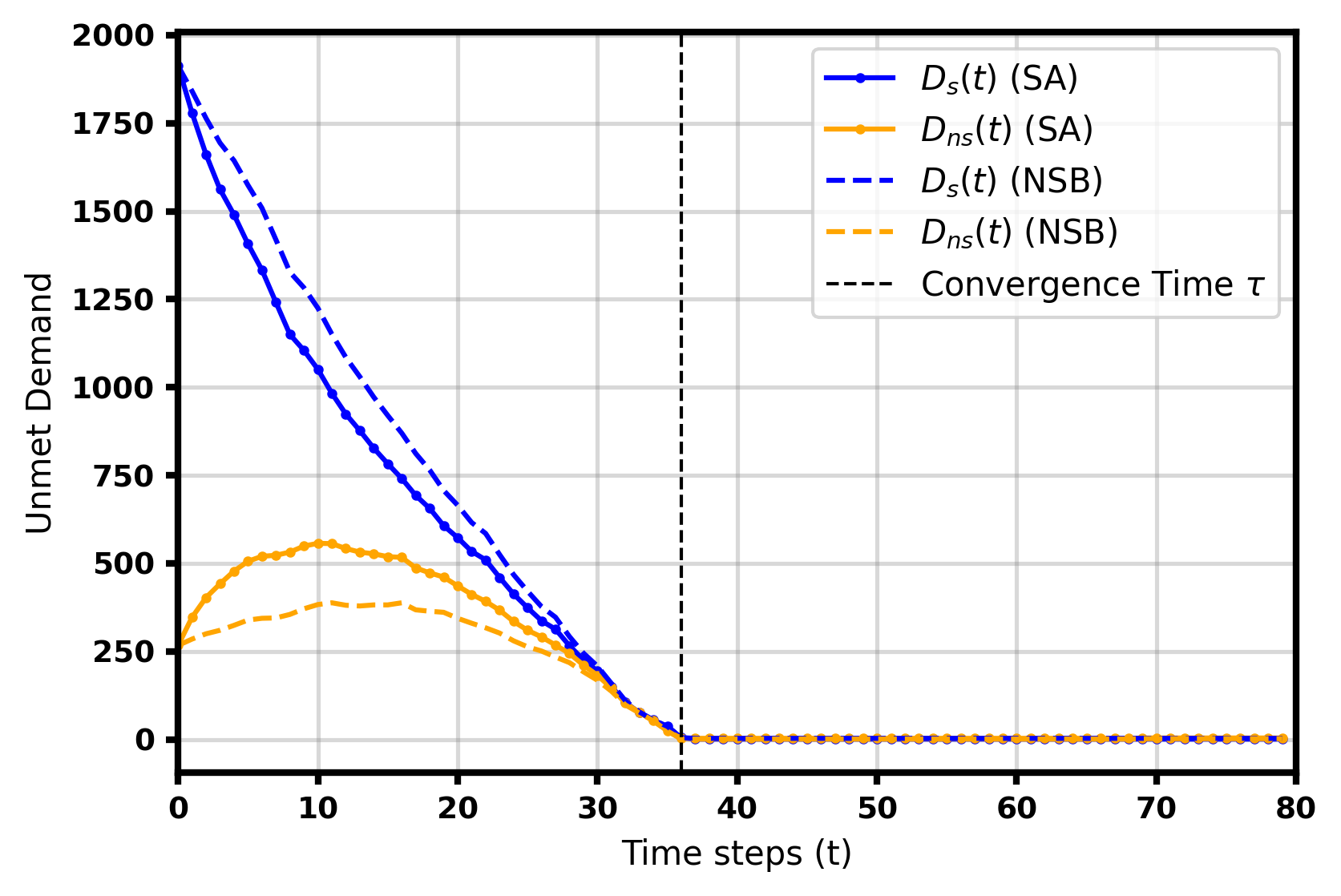}}\hfill
    \subfloat[][$\mathcal{D}_{mean} = 15$, $\tau = 36$]{\includegraphics[width=.32\textwidth]{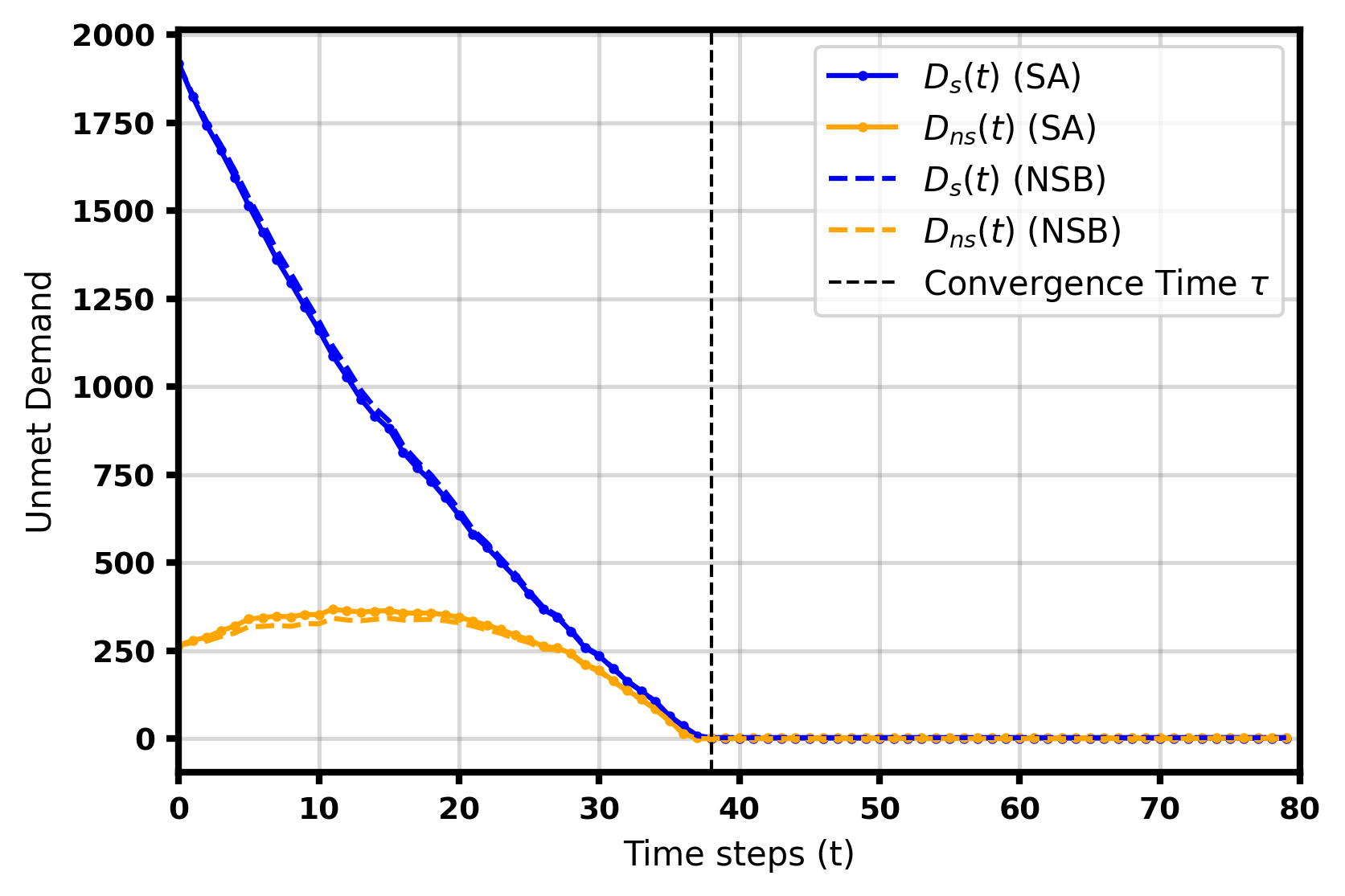}}\par
    \raisebox{35pt}{\parbox[b]{.05\textwidth}{}}%
    \subfloat[][$\mathcal{D}_{mean} = 5$]{\includegraphics[width=.32\textwidth]{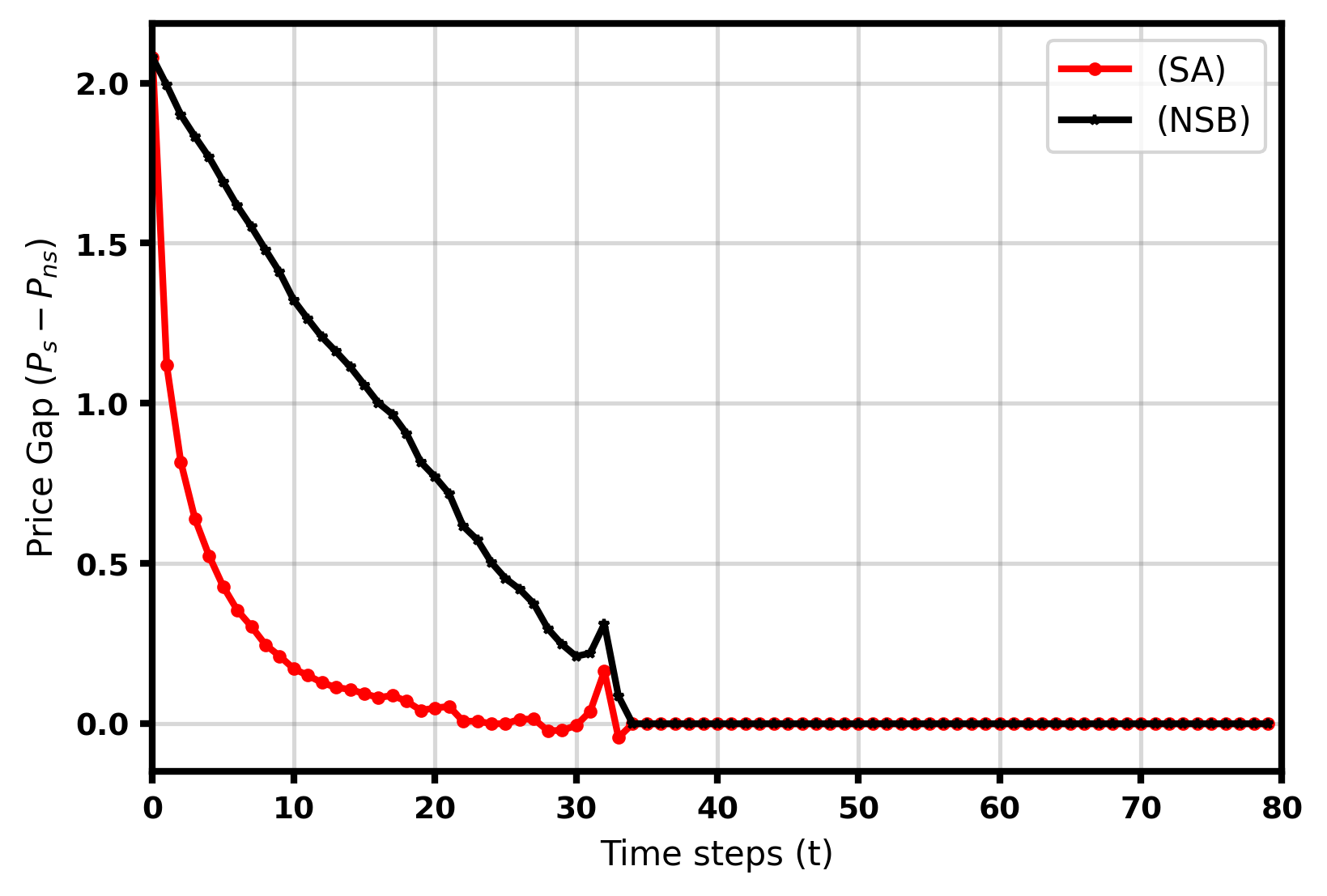}}\hfill
    \subfloat[][$\mathcal{D}_{mean} = 10$]{\includegraphics[width=.32\textwidth]{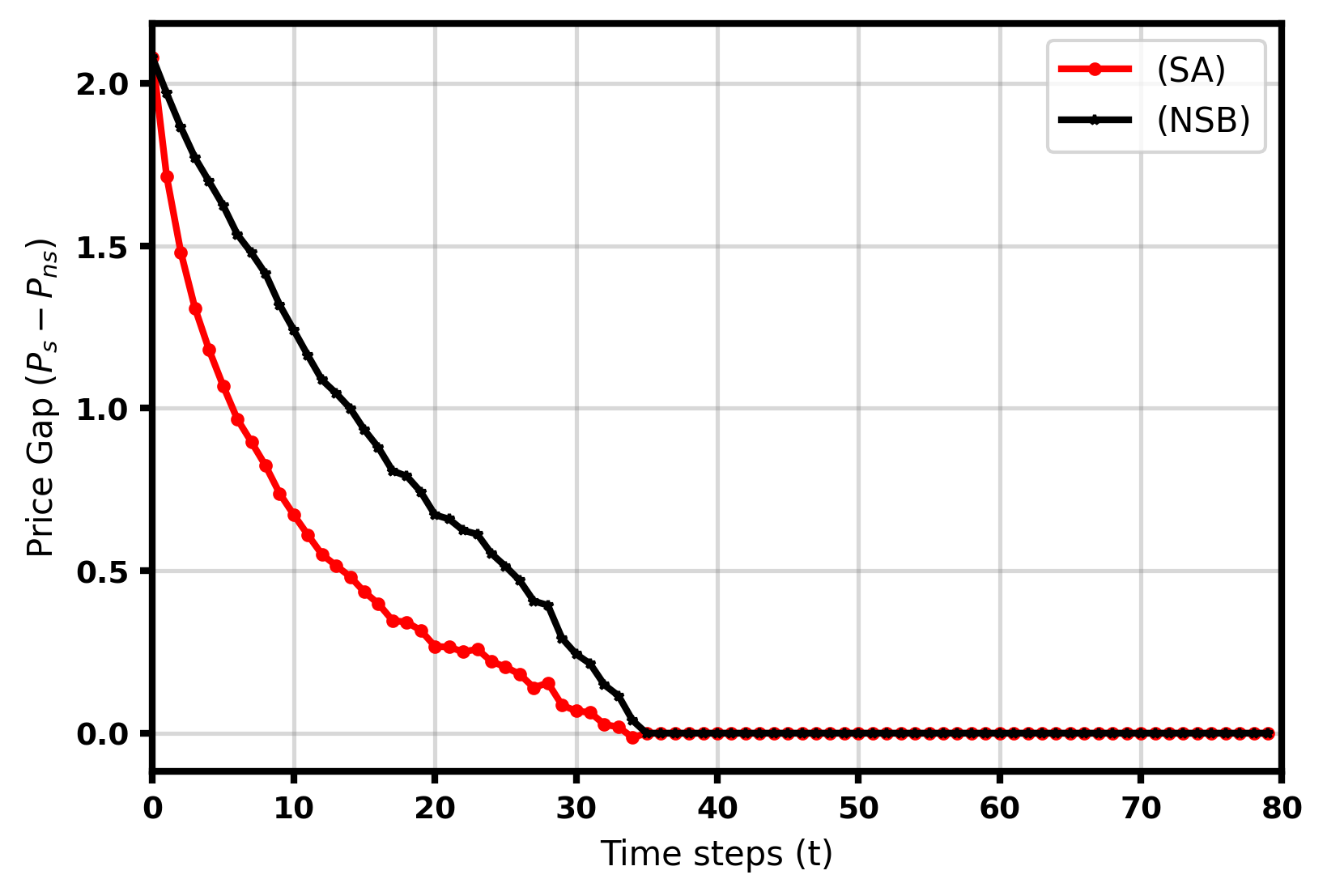}}\hfill
    \subfloat[][$\mathcal{D}_{mean} = 15$]{\includegraphics[width=.32\textwidth]{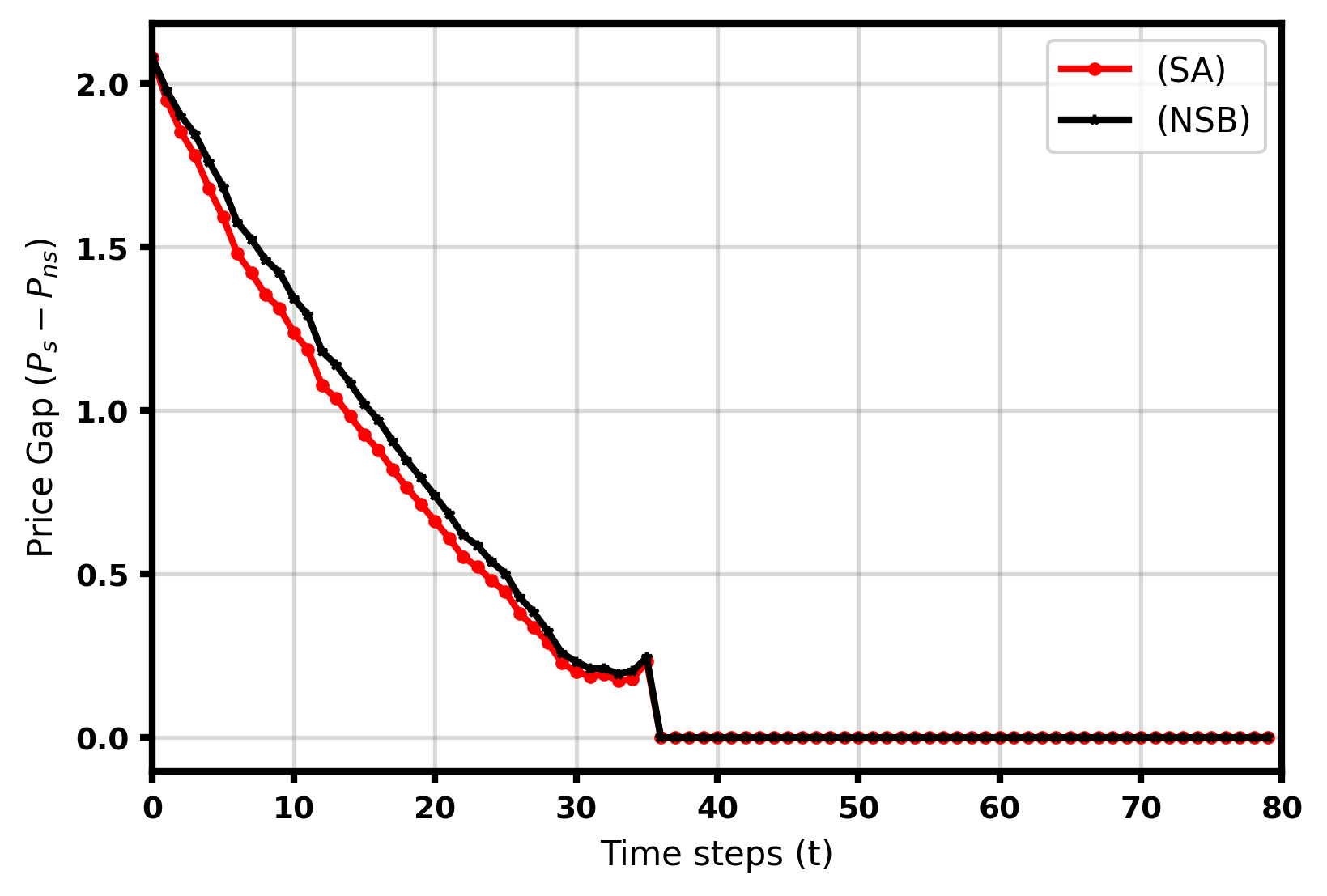}}
    \caption{Price and demand curves for \textsf{SA} and \textsf{NSB} models for varying $\mathcal{D}_{mean}$ }
    \label{fig:SAvsNSB_d}
\end{figure}
\end{itemize}

\section{Discussion \& Future Work}\label{sec:disc}
In this work, we have explored an important aspect of strategic behavior of riders during price surges on ride-sharing platforms---riders monitor prices in real-time and may choose to walk away from the surge zone in search of cheaper rides. We have shown analytically how such strategic behavior has the potential to redistribute demand spatially and through extensive numerical experiments, demonstrated how it can equalize prices across the surge boundary faster. However, there are many interesting avenues of future work. 

In recent years, we have seen an uptick in ride-sharing platforms related to riders deciding to walk when faced with high prices and long waiting times. Platforms have been trying to directly and explicitly incentivize riders to walk to and from pick-up and drop-off points by providing them with cheaper rides if they do so. Uber's ExpressPool~\citep{uber_expresspool} and Lyft's Shared Saver~\citep{lyft_walksave} are key examples of such a ``walk and save'' feature. In doing so, the platform can reduce driver downtime while improving operational efficiency. Over the years, these features have been modified and re-introduced multiple times which demonstrates an ongoing rider-side demand for such an option. This raises an important mechanism design question---how should platforms design rider incentives in an optimal way so that they can be encouraged to avail them while still protecting platform profits? 

In addition to leaving the surge zone, another important aspect of strategic behavior on the rider side is waiting for prices to drop. Earlier works have modeled this effect in isolation; in the real world, however, strategic rider behavior is often a combination of both aspects: riders often initially wait at the surge zone for prices to drop, and then consider walking to other locations after too long of a wait. One immediate extension of our work is to model this more complex rider-side behavior with both waiting and walking effects simultaneously.

\bibliographystyle{plainnat}
\bibliography{mybib.bib}

\newpage

\appendix
\section{Auxillary Results \& Missing Proofs}\label{app:proofs}

\subsection{Convergence}
\begin{theorem}\label{auxthm:total_conv}
The total unmet demand in the system, given by $D_s(t)+D_{ns}(t)$, must converge to $0$ in finite time. 
\end{theorem}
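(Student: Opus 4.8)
The plan is to track the \emph{total} unmet demand $S(t) := D_s(t) + D_{ns}(t)$ directly, rather than the two zones separately, and to show that $S(t)$ drops by a fixed positive amount at every step until it hits $0$. The key structural observation is that the strategic-movement term $f(D_s(t)-D_{ns}(t))\cdot D_s(t)$ appears with a minus sign in \eqref{eq:evolve_s_rev} and a plus sign in \eqref{eq:evolve_ns_rev}: it merely \emph{transfers} mass from the surge to the non-surge zone and therefore cancels in the sum, as long as neither zone is clamped at $0$. Writing $A_s := D_s(t) + (\lambda-\mu) - f(D_s(t)-D_{ns}(t))D_s(t)$ and $A_{ns} := D_{ns}(t) + (\lambda-\mu) + f(D_s(t)-D_{ns}(t))D_s(t)$ for the pre-clamp quantities, so that $D_s(t+1)=\max(0,A_s)$ and $D_{ns}(t+1)=\max(0,A_{ns})$, we obtain the clean identity $A_s + A_{ns} = S(t) - 2(\mu-\lambda)$.

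Then I would run a short case analysis on the signs of $A_s$ and $A_{ns}$ to show that $S(t+1) \le \max\!\big(0,\, S(t)-(\mu-\lambda)\big)$ whenever $S(t)>0$. When both are nonnegative there is no clamping and $S(t+1)=S(t)-2(\mu-\lambda)$. When exactly one is clamped, say $A_s<0\le A_{ns}$, then $S(t+1)=A_{ns}=S(t)-(\mu-\lambda)-(1-f)\,D_s(t)$, which is at most $S(t)-(\mu-\lambda)$ precisely because Assumption \ref{as:strat_riders} guarantees $f\in[0,1]$, so $(1-f)D_s(t)\ge 0$; the symmetric subcase $A_{ns}<0\le A_s$ gives $S(t+1)=A_s=S(t)-(\mu-\lambda)-fD_s(t)-D_{ns}(t)\le S(t)-(\mu-\lambda)$, and when both are negative $S(t+1)=0$.

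From the per-step contraction $S(t+1)\le \max(0, S(t)-(\mu-\lambda))$ the conclusion is immediate: since $\mu>\lambda$ makes $\mu-\lambda$ a fixed positive constant and $S(0)=D_0+d_0$, the total demand reaches $0$ in at most $\lceil (D_0+d_0)/(\mu-\lambda)\rceil$ steps, and $S=0$ is absorbing (both recursions return $0$ when their inputs are $0$); this incidentally recovers the upper bound on $\tau_n$ in Theorem \ref{thm:Dns_conv}. I expect the only delicate point to be the clamping bookkeeping: the cancellation of the $f$-term is effortless, but one must verify that a zone emptying out via the $\max(0,\cdot)$ truncation can never make the \emph{total} increase, and this is exactly where $f\le 1$ is used — it ensures the mass leaving the surge zone never exceeds what is present there. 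I would note as a remark that the result also follows formally from Theorems \ref{thm:Ds_conv} and \ref{thm:Dns_conv} by taking $t\ge\max(\tau_s,\tau_n)$, but the self-contained argument above is preferable since those theorems rely on the same finite-convergence reasoning.
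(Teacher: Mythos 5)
Your proof is correct and takes essentially the same route as the paper's: both arguments observe that the transfer term $f\left(D_s(t)-D_{ns}(t)\right)\cdot D_s(t)$ cancels in the sum, so the total unmet demand falls by at least $\mu-\lambda$ per step and must reach $0$ within $\left\lceil \frac{D_0+d_0}{\mu-\lambda} \right\rceil$ steps. The only difference is that your sign case analysis on the pre-clamp quantities $A_s, A_{ns}$ (and the explicit use of $f \le 1$ when the surge zone is clamped at $0$) rigorously verifies a step that the paper's one-line argument asserts without proof, namely that the $\max(0,\cdot)$ truncation can never cause the total to decrease by less than $\mu-\lambda$.
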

\begin{proof}
$D_s(t) \geq 0$ and $D_{ns}(t) \geq 0$ for all $t$. Therefore, $D_s(t) + D_{ns}(t) > 0$ if and only if at least one of $D_s(t)$ and $D_{ns}(t)$ is strictly greater than $0$. Therefore, in each time period $t$, at least a $\mu - \lambda > 0$ amount (this amount is in fact $2(\mu-\lambda) > \mu-\lambda > 0$ if both demands are positive) of 
unmet demand can be met every time step. This implies that $D_s(t) + D_{ns}(t)$ must reach $0$ in at most $\left \lceil \frac{D_0+d_0}{\mu-\lambda} \right \rceil$ time steps. 
\end{proof}

\begin{corr}\label{auxcorr:D_conv}
Each of $D_s(t)$ and $D_{ns}(t)$ must converge to $0$ in finite time. 
\end{corr}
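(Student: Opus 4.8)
The plan is to deduce the corollary almost immediately from Theorem~\ref{auxthm:total_conv} together with the non-negativity of each demand. First I would invoke Theorem~\ref{auxthm:total_conv} to obtain a finite time $T^*$ (for instance $T^* \leq \lceil (D_0+d_0)/(\mu-\lambda)\rceil$) at which the total unmet demand vanishes, i.e., $D_s(T^*) + D_{ns}(T^*) = 0$. Since the dynamics in Equations~\eqref{eq:evolve_s_rev} and~\eqref{eq:evolve_ns_rev} each apply a $\max[0,\cdot]$, both $D_s(t) \geq 0$ and $D_{ns}(t) \geq 0$ hold for every $t$; a sum of two non-negative quantities equal to zero forces each summand to be zero, so $D_s(T^*) = D_{ns}(T^*) = 0$.

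The one remaining point is to confirm that the zero state is \emph{absorbing}, so that ``reaching $0$ at $T^*$'' upgrades to ``converging to $0$.'' For this I would substitute $D_s(T^*) = D_{ns}(T^*) = 0$ into both recursions: each update collapses to $\max[0, \lambda - \mu]$, which equals $0$ precisely because $\lambda < \mu$ by assumption (the $f(\cdot)\cdot D_s$ term vanishes since $D_s(T^*)=0$). A one-line induction then shows $D_s(t) = D_{ns}(t) = 0$ for all $t \geq T^*$, establishing that each demand converges to $0$ in finite time. There is no real obstacle here: the whole argument rests on non-negativity of the two demands and the absorbing nature of the origin, with the standing assumption $\lambda < \mu$ being exactly what guarantees the latter.
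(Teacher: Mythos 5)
Your proposal is correct and follows essentially the same route as the paper, which likewise derives the corollary directly from Theorem~\ref{auxthm:total_conv} together with the non-negativity of $D_s(t)$ and $D_{ns}(t)$. Your additional verification that the zero state is absorbing (using $\lambda < \mu$) is a worthwhile detail the paper leaves implicit, but it does not change the argument's structure.
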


\begin{proof}
This is a direct consequence of the previous theorem, since both demands are non-negative. 
\end{proof}

\subsection{Shape Properties}
\begin{theorem}\label{auxthm:Ds_conv}
The unmet demand in the surge zone, given by $D_s(t)$, is non-increasing in $t$ for all $t \in \{0, 1, 2,....\}$ and must reach $0$ in at most $\left \lceil \frac{D_0}{\mu - \lambda} \right \rceil$ time steps.
\end{theorem}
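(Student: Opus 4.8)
The plan is to establish both claims—monotonicity and the convergence bound—directly from the recursion in Equation~\eqref{eq:evolve_s_rev}, exploiting only the sign structure of its terms together with the non-negativity of $f$ and of $D_s$.

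For monotonicity, I would observe that in the update
\[
D_s(t+1) = \max\left[0,\, D_s(t) + (\lambda-\mu) - f\left(D_s(t)-D_{ns}(t)\right)\cdot D_s(t)\right],
\]
both correction terms are non-positive: $\lambda - \mu < 0$ by the standing assumption $\lambda < \mu$, while $f(\cdot)\cdot D_s(t) \geq 0$ since $f$ takes values in $[0,1]$ and (inductively) $D_s(t)\geq 0$. Hence the bracketed quantity is at most $D_s(t)$; since $D_s(t)\geq 0$, and since $a\leq b$ with $b\geq 0$ implies $\max(0,a)\leq b$, we conclude $D_s(t+1)\leq D_s(t)$. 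This simultaneously shows $D_s$ is non-increasing and that it remains non-negative for all $t$, closing the induction needed above.

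For the convergence bound, the key step is to discard the non-negative walking term to obtain the clean one-step inequality
\[
D_s(t+1) \leq \max\left[0,\, D_s(t) - (\mu-\lambda)\right].
\]
I would then prove by induction on $t$ that $D_s(t) \leq \max\left[0,\, D_0 - t(\mu-\lambda)\right]$. The base case holds with equality via the boundary condition $D_s(0)=D_0$. For the inductive step, I would use that $x \mapsto \max[0,\, x-(\mu-\lambda)]$ is non-decreasing to propagate the inductive hypothesis through the one-step inequality, then split into the cases $D_0 - t(\mu-\lambda)\geq 0$ and $D_0 - t(\mu-\lambda) < 0$ to collapse the nested maxima into $\max[0,\, D_0-(t+1)(\mu-\lambda)]$; the second case invokes $\mu>\lambda$ to confirm the envelope stays pinned at $0$ once reached.

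Finally, evaluating this envelope at $t=\left\lceil \frac{D_0}{\mu-\lambda}\right\rceil$ gives $t(\mu-\lambda)\geq D_0$, so the right-hand side is $0$ and therefore $D_s(t)=0$, yielding convergence within the claimed number of steps. I anticipate no genuine obstacle; the only point demanding care is the bookkeeping of the $\max[0,\cdot]$ operator in the inductive step—in particular, verifying the monotone-envelope argument in the regime where the linear term $D_0 - t(\mu-\lambda)$ has already turned negative.
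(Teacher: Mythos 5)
Your proof is correct and follows essentially the same route as the paper's: both the monotonicity claim and the time bound come from discarding the non-negative relocation term $f\left(D_s(t)-D_{ns}(t)\right)\cdot D_s(t)$ and exploiting the guaranteed per-step drift of $\mu-\lambda$. The only difference is packaging --- the paper argues that every step before absorption at $0$ decreases $D_s$ by at least $\mu-\lambda$ and counts steps, whereas you formalize the same drift as the inductive envelope $D_s(t) \leq \max\left[0,\, D_0 - t(\mu-\lambda)\right]$, which is a slightly more careful handling of the $\max[0,\cdot]$ bookkeeping but the same underlying idea.
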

\begin{proof}
Recall that we have the following update rule for $D_s(t)$:
\[
   D_s(t+1) = \max \left[0, D_s(t) - (\mu - \lambda) - f\left( D_s(t)-D_{ns}(t)\right)\cdot D_s(t) \right].
\]
Since $\mu > \lambda$ (by assumption), $D_s(t) = 0$ implies: 
\[
      D_s(t+1) = \max\left[0, -(\mu-\lambda) \right] = 0.
\]
By induction, we can conclude that $D_s(t+2) = D_s(t+3) = .... = 0$. Thus, once $D_s(t)$ reaches $0$, it remains at $0$ and hence is non-increasing for all future time points. 

Now, suppose $D_s(t+1) > 0$. Then, by Equation ~\ref{eq:evolve_s_rev}, we must have:
\begin{align*}
    D_s(t+1) = D_s(t) - (\mu - \lambda) - f \left( D_s(t)-D_{ns}(t) \right)\cdot D_s(t) \leq D_s(t) - (\mu - \lambda).
\end{align*}
The inequality follows because $f(\cdot) \geq 0$ (by definition). 
Thus we have established that $D_s(t)$ is monotonically decreasing until it reaches $0$, where each step (except for the last one) registers a decrease by at least $(\mu-\lambda)$.    
Since $D_s(0) = D_0$, this implies $D_s(t)$ must reach $0$ in at most $\left \lceil \frac{D_0}{\mu-\lambda} \right \rceil$ steps. This concludes the proof. 
\end{proof}

\begin{theorem}\label{auxthm:Dns}
Let $D_{ns}(t)$ be the unmet demand in the non-surge zone at any time step $t \in \{0, 1, 2,...\}$. There always exists some time step $\tau \geq 0$ such that $D_{ns}(t)$ is non-increasing for all $t \geq \tau$. $\tau$ is defined as follows:
\[
   \tau = \min \left\{t \in \{0, 1, 2,....\}: f\left(D_s(t)-D_{ns}(t) \right)\cdot D_s(t) \leq \mu - \lambda \right\}.
\]
Further, $D_{ns}(t)$ is non-decreasing for all $t < \tau$. 
\end{theorem}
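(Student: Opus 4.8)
The plan is to reduce everything to the monotonicity of the strategic inflow term $g(t) := f\!\left(D_s(t) - D_{ns}(t)\right) \cdot D_s(t)$, since the one-step change of the non-surge demand (when the truncation is inactive) is exactly $D_{ns}(t+1) - D_{ns}(t) = g(t) - (\mu - \lambda)$. Thus $D_{ns}$ grows precisely when $g(t) > \mu - \lambda$ and shrinks when $g(t) \le \mu - \lambda$, and the threshold $\tau$ is by definition the first time $g$ drops to or below $\mu - \lambda$. The whole statement then follows once I show that $g(t)$ is \emph{non-increasing} in $t$: this makes the set $\{t : g(t) \le \mu - \lambda\}$ upward closed, so that $g(t) > \mu - \lambda$ for all $t < \tau$ (the non-decreasing phase) and $g(t) \le \mu - \lambda$ for all $t \ge \tau$ (the non-increasing phase).

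The core step is therefore to prove that $g(t)$ is non-increasing. I would write $g(t) = f(\delta(t)) \cdot D_s(t)$ with $\delta(t) := D_s(t) - D_{ns}(t)$ and use two facts: (i) $D_s(t)$ is non-increasing and non-negative, already established in Theorem~\ref{auxthm:Ds_conv}; and (ii) $\delta(t)$ is non-increasing whenever $\delta(t) > 0$, and stays non-positive once it becomes non-positive. Granting (ii), since $f$ is non-decreasing with $f(x) = 0$ for $x \le 0$, the factor $f(\delta(t))$ is non-increasing; as a product of two non-negative non-increasing sequences, $g(t) = f(\delta(t)) D_s(t)$ is non-increasing, using that $a_1 b_1 \ge a_2 b_1 \ge a_2 b_2$ for $a_1 \ge a_2 \ge 0$ and $b_1 \ge b_2 \ge 0$.

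The main obstacle — the part requiring genuine care — is establishing (ii) in the presence of the two $\max[0,\cdot]$ operators. Subtracting the recursions gives $\delta(t+1) = \delta(t) - 2g(t) \le \delta(t)$ when neither truncation is active. I would then dispatch the remaining cases: if the surge truncation fires, so $D_s(t+1) = 0$, then $\delta(t+1) = -D_{ns}(t+1) \le 0 < \delta(t)$; and if only the non-surge truncation fires, so $D_{ns}(t+1) = 0$, then $D_{ns}(t) \le (\mu-\lambda) - g(t)$, which I substitute into $\delta(t+1) - \delta(t) = -(\mu-\lambda) - g(t) + D_{ns}(t) \le -2g(t) \le 0$. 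A short separate check shows the region $\delta \le 0$ is absorbing: there $g(t) = 0$, both demands simply decay by $(\mu-\lambda)$ (or hit $0$), and one verifies $\delta(t+1) \le 0$, so $g$ remains $0$.

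Finally, I would confirm $\tau$ is well-defined (finite): since $D_s(t)$ reaches $0$ in finitely many steps by Theorem~\ref{auxthm:Ds_conv}, eventually $g(t) = 0 \le \mu - \lambda$, so $\{t : g(t) \le \mu - \lambda\}$ is non-empty. The non-increasing phase for $t \ge \tau$ then follows because $D_{ns}(t+1) = \max[0,\, D_{ns}(t) - (\mu-\lambda) + g(t)] \le \max[0,\, D_{ns}(t)] = D_{ns}(t)$, while for $t < \tau$ the quantity inside the $\max$ strictly exceeds $D_{ns}(t) \ge 0$, so the truncation is inactive and $D_{ns}$ strictly increases.
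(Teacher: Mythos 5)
Your proof is correct, and its technical heart coincides with the paper's: both arguments reduce the theorem to showing that the inflow term $g(t)=f\left(D_s(t)-D_{ns}(t)\right)\cdot D_s(t)$ can never increase, and both get this from (a) monotonicity of $D_s$ (Theorem~\ref{auxthm:Ds_conv}), (b) monotonicity of the differential $\delta(t)=D_s(t)-D_{ns}(t)$ proved by case analysis on the $\max[0,\cdot]$ truncations, and (c) monotonicity of $f$. The differences are in the packaging, and in one place the difference is substantive. The paper never states global monotonicity of $g$; it runs an induction from $\tau$ showing that the condition $g(t)\le\mu-\lambda$ is absorbing --- though its induction step is in effect the computation $g(t+1)\le g(t)$, so your ``upward-closed set'' formulation is a mild strengthening rather than a new idea. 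More notably, your finiteness argument for $\tau$ is genuinely different and more economical: you observe that $D_s$ hits $0$ in finitely many steps, after which $g(t)=0\le\mu-\lambda$, so the defining set is non-empty. The paper instead argues by contradiction: if $g(t)>\mu-\lambda$ held forever, $D_{ns}$ would be strictly increasing forever, contradicting the separate convergence result for $D_{ns}$ (Theorem~\ref{auxthm:total_conv} and Corollary~\ref{auxcorr:D_conv}). Your route has the aesthetic advantage of not invoking any prior fact about $D_{ns}$ --- the very quantity whose shape the theorem describes. Finally, the two styles of handling the truncations are interchangeable: the paper dispatches the non-surge truncation uniformly via the inequality $\max[0,x]\ge x$, while you treat it as an explicit case and add a short verification that the region $\delta\le 0$ is absorbing (needed so that $f(\delta(t))$ stays non-increasing across the sign change); both are airtight.
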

\begin{proof}
We will complete the proof in two steps. First, we will use the earlier result that $D_{ns}(t)$ must converge to $0$ in finite time to argue about the existence of $\tau$. Finally, we will show that $D_{ns}(t)$ is non-increasing beyond $\tau$. 
\end{proof}

\begin{claim}
There exists a finite $\tau \in \{0, 1, 2...\}$ such that: 
\[
    \tau = \min \left\{t \in \{0, 1, 2,....\}: f\left(D_s(t)-D_{ns}(t) \right)\cdot D_s(t) \leq \mu - \lambda \right\}.
\]
Further, $D_{ns}(t)$ is increasing for all $t < \tau$.
\end{claim}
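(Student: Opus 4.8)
The plan is to establish the two assertions in turn: first that $\tau$ is well defined and finite, then that $D_{ns}$ strictly increases before $\tau$. For existence, note that $\tau$ is the minimum of a set of nonnegative integers, so it suffices to show this set is nonempty. The fact to invoke is Theorem~\ref{thm:Ds_conv} (equivalently the auxiliary Theorem~\ref{auxthm:Ds_conv}), which guarantees that $D_s(t)$ reaches $0$ in finitely many steps, say at time $\tau_s$. At that time $f\!\left(D_s(t)-D_{ns}(t)\right)\cdot D_s(t) = 0$, and since $\mu > \lambda$ by assumption we have $0 \le \mu - \lambda$, so $\tau_s$ belongs to the defining set. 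Hence the set is nonempty, the minimum $\tau$ is well defined and finite, and in fact $\tau \le \tau_s$.

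For the monotonicity claim, the key observation is that, because $\tau$ is a minimum, the defining inequality must fail at every earlier index: for all $t < \tau$ we have $f\!\left(D_s(t)-D_{ns}(t)\right)\cdot D_s(t) > \mu - \lambda$. I would then substitute this into the recursion~\eqref{eq:evolve_ns_rev}, which reads $D_{ns}(t+1) = \max\!\left[0,\, D_{ns}(t) + f\!\left(D_s(t)-D_{ns}(t)\right)\cdot D_s(t) - (\mu-\lambda)\right]$. In the regime $t < \tau$ the increment $f(\cdot)D_s(t) - (\mu-\lambda)$ is strictly positive, and since $D_{ns}(t) \ge 0$, the quantity inside the $\max$ strictly exceeds $D_{ns}(t) \ge 0$. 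Thus the $\max$ is attained at its second argument, giving $D_{ns}(t+1) = D_{ns}(t) + \left[f(\cdot)D_s(t) - (\mu-\lambda)\right] > D_{ns}(t)$. This yields $D_{ns}(t+1) > D_{ns}(t)$ for every $t < \tau$, i.e., $D_{ns}$ is strictly increasing up to time $\tau$.

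I do not anticipate a genuine obstacle; the argument is a direct reading of the recursion combined with the finite-time convergence of $D_s$. The only point needing a little care is verifying that the $\max(0,\cdot)$ truncation is never active for $t < \tau$, which holds precisely because the increment added to the already-nonnegative $D_{ns}(t)$ is strictly positive in that regime. A secondary subtlety worth flagging explicitly is the edge case $\tau = 0$, in which the monotonicity assertion is vacuously true, and the fact that no claim is made about the behavior exactly at $t = \tau$ (strict increase is asserted only strictly before $\tau$).
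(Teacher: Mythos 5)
Your proof is correct, and it takes a genuinely different route from the paper on the existence half of the claim. The paper argues by contradiction: if no time $t$ satisfied $f\left(D_s(t)-D_{ns}(t)\right)\cdot D_s(t) \leq \mu - \lambda$, then by the recursion~\eqref{eq:evolve_ns_rev} the sequence $\{D_{ns}(t)\}$ would be positive and strictly increasing for all time, contradicting the finite-time convergence of $D_{ns}(t)$ to $0$, which the paper has already established via the total-demand argument (Theorem~\ref{auxthm:total_conv} and Corollary~\ref{auxcorr:D_conv}). You instead exhibit an explicit witness: by Theorem~\ref{auxthm:Ds_conv}, $D_s$ hits $0$ at some finite time $\tau_s$, at which point the left-hand side of the defining inequality vanishes while $\mu - \lambda > 0$, so $\tau_s$ lies in the defining set and the minimum $\tau$ exists with the bonus bound $\tau \leq \tau_s$. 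Both invocations are non-circular, since the auxiliary results each side relies on are proven before this claim and independently of it; your route is more direct (no contradiction, no appeal to $D_{ns}$'s convergence) and yields extra quantitative information, while the paper's route only needs the coarse fact that total demand clears. On the monotonicity half, your argument is essentially the paper's: minimality of $\tau$ forces $f\left(D_s(t)-D_{ns}(t)\right)\cdot D_s(t) > \mu - \lambda$ for all $t < \tau$, and the $\max(0,\cdot)$ truncation is inactive because a strictly positive increment is added to a nonnegative quantity. If anything, your version is marginally cleaner here, since the paper conditions on $D_{ns}(t) > 0$ before concluding the strict increase, a hypothesis your argument shows is unnecessary ($D_{ns}(t) \geq 0$ suffices).
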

\begin{proof}
We will prove by contradiction. Suppose, such a $\tau$ does not exist. This implies, for all $t \in \{0, 1, 2,....\}$, it must be that: 
\[
    f\left(D_s(t)-D_{ns}(t) \right)\cdot D_s(t) > \mu - \lambda,
\]
which implies that:
\[
     (\lambda-\mu) + f\left(D_s(t)-D_{ns}(t) \right)\cdot D_s(t) > 0 \quad \forall~ t.
\]
Therefore, using Equation ~\ref{eq:evolve_ns_rev}, we conclude that if $D_{ns}(t) > 0$, then:
\[
    D_{ns}(t+1) = \max \left[0, D_{ns}(t) +  (\lambda-\mu) + f\left(D_s(t)-D_{ns}(t) \right)\cdot D_s(t) \right] >  D_{ns}(t).
\]
Since $D_{ns}(0) = d_0 > 0$, $\left \{D_{ns}(t) \right \}_{t}$ must constitute a positive, strictly increasing sequence. But this contradicts our earlier result that $D_{ns}(t)$ must converge to $0$; hence $\tau$ must exist. 

Finally since $\tau$ exists, from the definition of $\tau$, it must be that for all $t < \tau$: 
\[
    f\left(D_s(t)-D_{ns}(t) \right)\cdot D_s(t) > \mu - \lambda
\]
Following a similar argument as earlier, we conclude that $D_{ns}(t)$ must be increasing for all $t < \tau$. This concludes the proof.  
\end{proof}

\begin{claim}
$D_{ns}(t)$ is non-increasing for all $t \geq \tau$. 
\end{claim}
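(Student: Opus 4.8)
The plan is to prove the stronger invariant that the strategic inflow
\[
g(t) := f\left(D_s(t)-D_{ns}(t)\right)\cdot D_s(t)
\]
never exceeds the per-period excess supply $\mu-\lambda$ for any $t \geq \tau$, and then read off the claim directly from Equation~\eqref{eq:evolve_ns_rev}. Indeed, if $g(t) \leq \mu-\lambda$, then $(\lambda-\mu)+g(t) \leq 0$, so the recursion gives $D_{ns}(t+1) = \max[0,\, D_{ns}(t)+(\lambda-\mu)+g(t)] \leq D_{ns}(t)$; that is, $D_{ns}$ is non-increasing at that step. The base case $t=\tau$ is exactly the defining inequality of $\tau$, so it remains to propagate the invariant forward by induction on $t$.

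For the inductive step I would assume $g(t) \leq \mu-\lambda$ for some $t \geq \tau$ and show $g(t+1)\leq \mu-\lambda$. The two standing ingredients are that $D_s$ is non-increasing (Theorem~\ref{auxthm:Ds_conv}), so $0 \le D_s(t+1)\leq D_s(t)$, and that $f$ is non-decreasing (Assumption~\ref{as:strat_riders}). The missing link is to control the demand differential $\Delta(t):=D_s(t)-D_{ns}(t)$. Ignoring the truncations, subtracting Equation~\eqref{eq:evolve_ns_rev} from Equation~\eqref{eq:evolve_s_rev} yields $\Delta(t+1) = \Delta(t) - 2g(t) \leq \Delta(t)$, since $g(t)\geq 0$. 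Combining $\Delta(t+1)\leq\Delta(t)$ (so that $f(\Delta(t+1))\le f(\Delta(t))$ by monotonicity of $f$) with $0 \le D_s(t+1)\leq D_s(t)$ gives $g(t+1) = f(\Delta(t+1))\,D_s(t+1) \leq f(\Delta(t))\,D_s(t) = g(t) \leq \mu-\lambda$, closing the induction.

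The main obstacle is the $\max[0,\cdot]$ truncation, which can break the clean identity $\Delta(t+1)=\Delta(t)-2g(t)$. I would handle this by splitting on whether $D_s(t+1)=0$. A short case analysis on the signs of the two untruncated updates shows that $\Delta(t+1)\leq\Delta(t)$ continues to hold whenever $D_s(t+1)>0$, which is the only regime the inflow estimate above actually needs. In the remaining case $D_s(t+1)=0$, the differential may in fact \emph{increase}, but then $g(t+1)=f(\Delta(t+1))\cdot 0 = 0 \leq \mu-\lambda$ trivially, so the invariant survives regardless. Hence the inflow bound is preserved in every case, and with it $D_{ns}(t+1)\le D_{ns}(t)$ for every $t\ge\tau$, establishing that $D_{ns}(t)$ is non-increasing for all $t\geq\tau$.
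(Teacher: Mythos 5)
Your proposal is correct and follows essentially the same route as the paper's proof: both establish, by induction with base case given by the definition of $\tau$, the invariant $f\left(D_s(t)-D_{ns}(t)\right)\cdot D_s(t) \leq \mu-\lambda$ for all $t \geq \tau$, using the monotonicity of $f$, the non-increasing property of $D_s$, and control of the demand differential under the $\max[0,\cdot]$ truncation (the paper handles the truncation via the one-line bound $\max[0,x]\geq x$ rather than your case split, and dispatches the $D_s = 0$ case just as you do). The claim then follows from Equation~\eqref{eq:evolve_ns_rev} exactly as you state.
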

\begin{proof}
In order to prove that $D_{ns}(t)$ is always non-increasing beyond $\tau$, note that it is sufficient to show that:
\begin{align}\label{temp}
    f \left( D_s(t)-D_{ns}(t) \right)\cdot D_s(t) \leq (\mu - \lambda) \quad \forall ~t \geq \tau.
\end{align}
We will do a proof by induction: \\
\emph{Base Case: }From the definition of $\tau$ in the previous claim, we already know that \eqref{temp} holds at $t = \tau$. This implies $D_{ns}(\tau+1) \leq D_{ns}(\tau)$. \\
\emph{Induction Hypothesis:} Suppose, \eqref{temp} holds for some $t' > \tau$.\\
\emph{General Case: } We now need to show that \eqref{temp} also holds for $t'+1$. There are two cases to be considered:
\begin{itemize}
    \item Suppose, $D_s(t') = 0$. In that case, by Theorem \ref{auxthm:Ds_conv}, $D_s(t'+1) = 0$. This implies:
    \[
          f \left( D_s(t'+1)-D_{ns}(t'+1) \right)\cdot D_s(t'+1) = 0 \leq (\mu - \lambda).
    \]

    \item Now, suppose, $D_s(t') > 0$. If $D_s(t'+1) = 0$, we are done again (trivially). If not, i.e., $D_s(t'+1) > 0$, we have:
    \begin{align*}
        D_s(t'+1) - D_{ns}(t'+1)
        &= D_s(t') + (\lambda - \mu) - f\left( D_s(t')-D_{ns}(t') \right) \cdot D_s(t')\\
        & \quad - \max \left[0, D_{ns}(t') + (\lambda-\mu) + f\left( D_s(t')-D_{ns}(t')\right)\cdot D_s(t') \right] \\
        &\leq  D_s(t') + (\lambda - \mu) - f\left( D_s(t')-D_{ns}(t') \right) \cdot D_s(t') \\
        & \quad - \left( D_{ns}(t') + (\lambda-\mu) + f\left( D_s(t')-D_{ns}(t')\right)\cdot D_s(t') \right)\\
        &=  D_s(t') - D_{ns}(t') - f\left( D_s(t')-D_{ns}(t')\right)\cdot D_s(t') \\
        &\leq D_s(t') - D_{ns}(t').
    \end{align*}
\end{itemize}
The first step follows because $D_s(t'+1) > 0$ by assumption. The inequality in the second step results from the observation that removing the $\max(\cdot)$ can only increase the difference, with exact equality holding if $D_{ns}(t'+1) > 0$. The last step follows since $f(\cdot) \geq 0$ and $D_s(t') \geq 0$. Therefore, by the non-decreasing property of $f(\cdot)$, we must have:
\[
   f\left( D_s(t'+1) - D_{ns}(t'+1) \right) \leq f \left( D_s(t') - D_{ns}(t') \right).
\]
Finally, since $D_s(t'+1) \leq D_s(t')$ by Theorem \ref{auxthm:Ds_conv}, we establish the desired inequality and conclude the proof.
\[
   f\left( D_s(t'+1) - D_{ns}(t'+1) \right) \cdot D_s(t'+1) \leq f \left( D_s(t') - D_{ns}(t') \right)\cdot D_s(t') \leq \mu - \lambda. 
\]
\end{proof}

\subsection{Bounds on Convergence Times}
\begin{theorem}\label{auxthm:Ds_bounds}
The time to convergence for $D_s(t)$, given by $\tau_s$, must satisfy:
\[
        \left \lfloor \frac{D_0}{(\mu-\lambda) + D_0 \cdot f(D_0)} \right \rfloor \leq \tau_s \leq \left \lceil \frac{D_0}{\mu - \lambda} \right \rceil.
\]
\end{theorem}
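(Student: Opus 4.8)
The plan is to prove the two bounds independently. The upper bound is essentially already in hand: Theorem~\ref{auxthm:Ds_conv} establishes that, as long as $D_s$ is positive, each update \eqref{eq:evolve_s_rev} decreases it by at least $(\mu-\lambda)$, so starting from $D_s(0)=D_0$ the demand must hit $0$ within $\lceil D_0/(\mu-\lambda)\rceil$ steps. Hence $\tau_s \le \lceil D_0/(\mu-\lambda)\rceil$ follows immediately, and I would simply cite that theorem rather than re-derive it.

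For the lower bound, the idea is to upper-bound the decrease of $D_s$ achievable in any single step, and then argue that clearing a total mass $D_0$ requires enough steps. Concretely, from \eqref{eq:evolve_s_rev} the one-step drop satisfies
\[
D_s(t) - D_s(t+1) \le (\mu-\lambda) + f\!\left(D_s(t)-D_{ns}(t)\right)\cdot D_s(t),
\]
with equality whenever $D_s(t+1)>0$ and the inequality covering the final step where the $\max$ with $0$ is active. I would then bound the strategic-outflow term uniformly in $t$. Two facts drive this: first, $D_s(t)\le D_0$ for all $t$ since $D_s$ is non-increasing (Theorem~\ref{auxthm:Ds_conv}); second, because $D_{ns}(t)\ge 0$ we have $D_s(t)-D_{ns}(t)\le D_s(t)\le D_0$, so the non-decreasing property of $f$ gives $f(D_s(t)-D_{ns}(t)) \le f(D_0)$. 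Combining these, the per-step drop is at most $(\mu-\lambda)+D_0\cdot f(D_0)$ for every $t$.

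With a uniform per-step bound $\delta := (\mu-\lambda)+D_0\, f(D_0)$ on the decrease, telescoping over the $\tau_s$ steps needed to go from $D_0$ down to $0$ gives
\[
D_0 = \sum_{t=0}^{\tau_s-1}\bigl(D_s(t)-D_s(t+1)\bigr) \le \tau_s\,\delta,
\]
whence $\tau_s \ge D_0/\delta$. Since $\tau_s$ is a non-negative integer, this yields $\tau_s \ge \lfloor D_0/\delta\rfloor$, which is exactly the claimed lower bound.

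The computation is short, so I do not anticipate a real obstacle; the one place to be careful is the final time step, where $D_s$ reaches $0$ and the $\max(0,\cdot)$ truncates the update. There the actual drop $D_s(t)$ may be strictly smaller than $(\mu-\lambda)+f(\cdot)\,D_s(t)$, but since it is still bounded above by that quantity, the per-step bound $\delta$ — and hence the telescoping argument — remains valid. I would also note in passing that the argument $D_0$ inside $f(D_0)$ slightly overestimates the true maximal differential $D_0-d_0$ attained at $t=0$; invoking $f(D_0)\ge f(D_0-d_0)$ keeps the bound correct while matching the clean form stated in the theorem.
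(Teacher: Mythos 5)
Your proposal is correct and follows essentially the same route as the paper's proof: the upper bound is cited from Theorem~\ref{auxthm:Ds_conv}, and the lower bound comes from telescoping a per-step bound on the decrease of $D_s$, using $D_s(t)-D_{ns}(t)\le D_s(t)\le D_0$ and the monotonicity of $f$ to get the uniform bound $(\mu-\lambda)+D_0\,f(D_0)$. The only (cosmetic) difference is that you apply the uniform bound before telescoping while the paper telescopes first and then bounds each summand $f(D_s(t))\cdot D_s(t)$ by $f(D_0)\cdot D_0$; your explicit treatment of the $\max(0,\cdot)$ truncation at the final step is a small point of extra care that the paper glosses over.
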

\begin{proof}
We have already established in the proof of Theorem B.3 that $\tau_s \leq \left \lceil \frac{D_0}{(\mu - \lambda)} \right \rceil$. In the rest of the proof, we only need to establish the lower bound. The update rule for $D_s(t)$ is given by: 
\begin{align*}
    D_s(t+1) &= D_s(t) - (\mu - \lambda) - f\left( D_s(t)-D_{ns}(t) \right)\cdot D_s(t) \\
    &\geq D_s(t) - (\mu - \lambda) - f\left( D_s(t) \right)\cdot D_s(t)
\end{align*}
The inequality follows because $D_s(t) - D_{ns}(t) \leq D_s(t)$ and $f(\cdot)$ is non-decreasing in its argument by definition. 
Now, using the structure of the update rule, we can adopt a telescopic sum approach: 
\begin{align*}
    D_s(\tau_s) - D_s(\tau_s-1) &\geq - (\mu - \lambda) - f\left(D_s(\tau_s-1) \right) \cdot D_s(\tau_s-1) \\
    &......\\
    D_s(1) - D_s(0) &\geq - (\mu - \lambda) - f\left(D_s(0) \right) \cdot D_s(0).
\end{align*}
Adding up all equations and substituting $D_s(0) = D_0$, we obtain: 
\[
    D_s(\tau_s) - D_0 \geq -\tau_s(\mu - \lambda) - \sum_{t=0}^{\tau_s-1} f\left(D_s(t) \right) \cdot D_s(t)
\]
Note that we also have $D_s(\tau_s) = 0$ (by definition of $\tau_s$. Finally, flipping the inequality (due to the negative sign), we have:
\[
     D_0 \leq \tau_s(\mu - \lambda) + \sum_{t=0}^{\tau_s-1}f\left( D_s(t) \right)\cdot D_s(t) \leq \tau_s(\mu - \lambda) + \tau_s f\left( D_0\right)\cdot D_0
\]
This results in the desired lower bound on $\tau_s$ and concludes the proof:
\[
    \tau_s \geq \frac{D_0}{(\mu - \lambda) + D_0 \cdot f(D_0)} \geq \left \lfloor \frac{D_0}{(\mu - \lambda) + D_0 \cdot f(D_0)} \right\rfloor.
\]
\end{proof}

\begin{theorem}\label{auxthm:Dns_bounds}
The time to convergence for $D_{ns}(t)$, given by $\tau_{ns}$, must satisfy:
\[
      \left \lfloor \frac{d_0}{\mu-\lambda} \right \rfloor \leq \tau_{ns} \leq \left \lceil \frac{D_0+d_0}{\mu - \lambda} \right \rceil.
\]
\end{theorem}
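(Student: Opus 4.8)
The plan is to prove the two bounds separately: the upper bound will follow almost immediately from the total-demand convergence result, so the genuine work lies in the lower bound, which I would obtain from a per-step decrease estimate for $D_{ns}$.

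For the upper bound, I would invoke Theorem~\ref{auxthm:total_conv}, which guarantees that the total unmet demand $D_s(t) + D_{ns}(t)$ reaches $0$ within $\left\lceil (D_0 + d_0)/(\mu - \lambda)\right\rceil$ steps and remains there (once both demands are $0$, Equations~\eqref{eq:evolve_s_rev} and~\eqref{eq:evolve_ns_rev} keep them at $0$). Since $0 \leq D_{ns}(t) \leq D_s(t) + D_{ns}(t)$, at the total convergence time we must already have $D_{ns}(t) = 0$, and it stays $0$ thereafter. Hence $\tau_{ns}$ cannot exceed the total convergence time, giving $\tau_{ns} \leq \left\lceil (D_0 + d_0)/(\mu - \lambda)\right\rceil$ directly.

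For the lower bound, the key observation is that the inflow term in the update rule for $D_{ns}$ is always non-negative: in Equation~\eqref{eq:evolve_ns_rev}, the quantity $f\left(D_s(t) - D_{ns}(t)\right)\cdot D_s(t) \geq 0$ because $f(\cdot) \geq 0$ and $D_s(t) \geq 0$. Consequently the non-surge demand can fall by at most $(\mu - \lambda)$ in a single step, i.e. $D_{ns}(t+1) \geq \max\left[0, D_{ns}(t) - (\mu - \lambda)\right] \geq D_{ns}(t) - (\mu - \lambda)$, where the clamp to $0$ only raises the value and therefore never undercuts this estimate. I would then prove by induction that $D_{ns}(t) \geq d_0 - t(\mu - \lambda)$ for all $t$, with base case $D_{ns}(0) = d_0$ and the inductive step following immediately from the per-step bound.

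Finally, I would convert this linear lower envelope into a bound on $\tau_{ns}$. Writing $n = \left\lfloor d_0/(\mu - \lambda)\right\rfloor$, I note that at $t = n-1$ we have $(n-1)(\mu - \lambda) < n(\mu - \lambda) \leq d_0$, so $D_{ns}(n-1) \geq d_0 - (n-1)(\mu - \lambda) > 0$. Since the demand is still strictly positive at step $n-1$, convergence has not occurred, and because $\tau_{ns}$ is an integer this yields $\tau_{ns} \geq n = \left\lfloor d_0/(\mu - \lambda)\right\rfloor$. The main (if modest) obstacle is the bookkeeping around the floor and the $\max[0,\cdot]$ operator: I must keep the inequality $D_{ns}(n-1) > 0$ genuinely strict so it rules out convergence at that step, and verify that the non-negativity clamp never violates the inductive envelope — both of which hold, as noted above.
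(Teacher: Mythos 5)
Your proposal is correct and takes essentially the same route as the paper: the upper bound invokes Theorem~\ref{auxthm:total_conv} exactly as the paper does, and your lower bound rests on the same core observation the paper states informally---the inflow term $f\left(D_s(t)-D_{ns}(t)\right)\cdot D_s(t)$ is non-negative, so $D_{ns}$ can decrease by at most $\mu-\lambda$ per step, i.e., no faster than in the rider-free system. Your inductive envelope $D_{ns}(t) \geq d_0 - t(\mu-\lambda)$ simply makes rigorous the paper's one-line comparison argument.
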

\begin{proof}
The proof is straightforward. The upper bound follows from Theorem \ref{auxthm:total_conv} which shows that the total unmet demand $D_s(t)+D_{ns}(t)$ must converge in at most $\left \lceil \frac{D_0+d_0}{\mu-\lambda}\right \rceil$ time steps. Clearly, $D_{ns}(t)$ cannot take longer than those many time steps to converge. The lower bound is obtained by noting that in the absence of strategic riders, $D_{ns}(t)$ would converge in $\left \lceil \frac{d_0}{\mu-\lambda}\right \rceil$ steps starting at $d_0$. So, with additional demand from relocating riders from the surge zone, $D_{ns}(t)$ cannot converge faster.    
\end{proof}

\end{document}